\newif\ifthesis
\newglossaryentry{BI}{
    name={\text{BI}},
    description={The logic of bunched implications.},
    sort={BI}
}
\newglossaryentry{APCP}{
    name={\texttt{APCP}},
    description={Asynchronous priority-based classical processes (cf.\ \gls{PCP}).},
    sort={APCP}
}
\newglossaryentry{piBI}{
    name={\texorpdfstring{$\pi\mkern-2mu$\texttt{BI}}{piBI}},
    description={A session-typed \picalc derived from \gls{BI}.},
    sort={piBI}
}
\newglossaryentry{alcalc}{
    name={\texorpdfstring{$\alpha\mkern-4mu\lambda$}{alpha-lambda}-calculus},
    description={A typed \lamcalc derived from \gls{BI}.},
    sort={alphalambdacalculus}
}
\newglossaryentry{MPST}{
    name={\text{MPST}},
    plural={MPSTs},
    description={Multiparty session type.},
    sort={MPST}
}
\newglossaryentry{picalc}{
    name={\texorpdfstring{$\pi$}{pi}-calculus},
    description={A mathematical model of message-passing.},
    sort={picalculus}
}
\newglossaryentry{basepi}{
    name={{\scriptsize{BASE}}\texorpdfstring{$\pi$}{-pi}},
    description={A basic variant of the \picalc from which the other variants in this thesis can be constructed.},
    sort={basepi}
}
\newglossaryentry{CP}{
    name={\texttt{CP}},
    description={Classical processes.},
    sort={CP}
}
\newglossaryentry{PCP}{
    name={\texttt{PCP}},
    description={Priority-based classical processes (cf.\ \CP).},
    sort={PCP}
}
\newglossaryentry{clpi}{
    name={\sff{s}$\pi^{\mkern-2mu+}$},
    description={A variant of the session-typed \picalc with a new operator for non-deterministic choice.},
    sort={spiplus}
}
\newglossaryentry{lamcalc}{
    name={\texorpdfstring{$\lambda$}{lambda}-calculus},
    plural={\texorpdfstring{$\lambda$}{lambda}-calculi},
    description={A mathematical model of functional programming.},
    sort={lambdacalculus}
}
\newglossaryentry{LAST}{
    name={\texttt{LAST}},
    description={Linear Asynchronous Session Types; an extension of the \lamcalc with session-typed message-passing (asynchronous, cyclic connections, may contain deadlocks).},
    sort={LAST}
}
\newglossaryentry{LASTs}{
    name={\texttt{LAST}\textsuperscript{$\star$}},
    description={Mild variant of \LAST.},
    sort={LASTs}
}
\newglossaryentry{LASTn}{
    name={\texttt{LAST}\textsuperscript{$\mathit{n}$}},
    description={Call-by-name variant of \LAST.},
    sort={LASTn}
}
\newglossaryentry{GV}{
    name={\texttt{GV}},
    description={Good variation; a variant of \LAST (synchronous, acyclic connections, deadlock-free).},
    sort={GV}
}
\newglossaryentry{PGV}{
    name={\texttt{PGV}},
    description={Priority-based good variation; a variant of \GV (synchronous, cyclic connections, deadlock-free).},
    sort={PGV}
}
\newglossaryentry{CGV}{
    name={\texttt{CGV}},
    description={Concurrent good variation; a variant of \GV with highly concurrent semantics (asynchronous, cyclically connected, deadlock-free).},
    sort={CGV}
}
\newglossaryentry{FSM}{
    name={\text{FSM}},
    plural={FSMs},
    description={Finite state machine.},
    sort={FSM}
}
\newglossaryentry{LTS}{
    name={\text{LTS}},
    plural={LTSs},
    description={Labeled transition system.},
    sort={LTS}
}
\newglossaryentry{piDILL}{
    name={$\pi\mkern-2mu$\texttt{DILL}},
    description={A session-typed pi-calculus derived from \gls{DILL}.},
    sort={piDILL}
}
\newglossaryentry{LL}{
    name={\text{LL}},
    description={Linear logic.},
    sort={LL}
}
\newglossaryentry{DILL}{
    name={\text{DILL}},
    description={Dual intuitionistic linear logic.},
    sort={DILL}
}
\newglossaryentry{MSC}{
    name={\text{MSC}},
    plural={MSCs},
    description={Message sequence chart.},
    sort={MSC}
}
\newglossaryentry{HMSC}{
    name={\text{HMSC}},
    plural={HMSCs},
    description={High-level \gls{MSC}.},
    sort={HMSC}
}
\newglossaryentry{CFSM}{
    name={\text{CFSM}},
    plural={CFSMs},
    description={Communication \gls{FSM}.},
    sort={CFSM}
}
  \renewcommand*{\glsgroupheading}[1]{}%
\begin{document}

\title{{Correctly Communicating Software: Distributed, Asynchronous, and Beyond}}
\author{Bas}{van den Heuvel}

\frontmatter

\newif\ifprintversion\printversionfalse
\makeatletter\if@print\printversiontrue\fi\makeatother
\newif\ifappendix\appendixtrue
\begin{titlepage}
	\ifprintversion\else
	\thispagestyle{empty}
	\begin{center}
		\ 
		
		\vspace{6em} \noindent {\huge%
			Correctly Communicating Software:
			\\
			Distributed, Asynchronous, and Beyond
		}

		\ifappendix
		\vspace{2em} \noindent {\LARGE%
			(extended version)
		}
		\fi
		
		\vspace{6em} \noindent {\Large \scshape%
			Bas van den Heuvel
		}
		
		\vfill \noindent {%
			Amsterdam \& Groningen, The Netherlands, 2024
		}
	\end{center}
	
	\newpage
	\fi
	\thispagestyle{empty}
	\vspace*{\fill} \noindent {%
		\setlength{\extrarowheight}{2em}
		\begin{tabularx}{\textwidth}{>{\raggedright\arraybackslash}p{.4\linewidth} X}
			\includegraphics[width=\linewidth,valign=t]{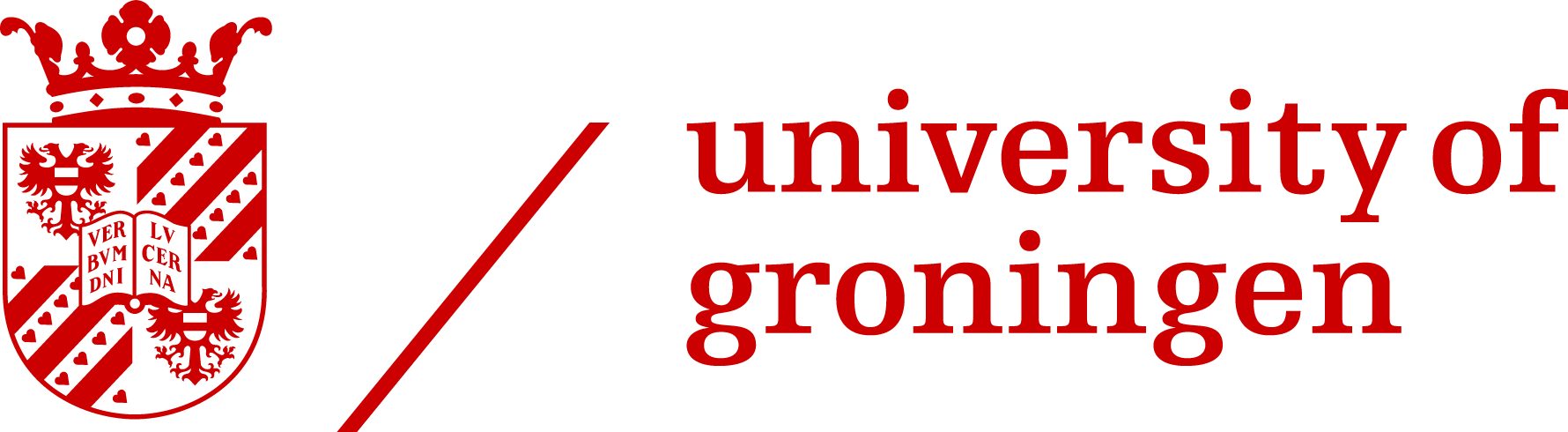}
			&
			The work in this book has been carried out at University of Groningen as part of the doctoral studies of the author.
			\\
			\includegraphics[width=.6\linewidth,valign=t]{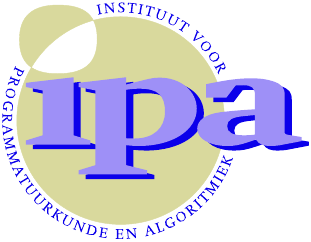}
			&
			{\bfseries IPA Dissertation Series No.\ 2024-03}
			\newline
			The work in this thesis has been carried out under the auspices of the research school IPA (Institute for Programming Research and Algorithms).
			\\
			\includegraphics[width=\linewidth,valign=t]{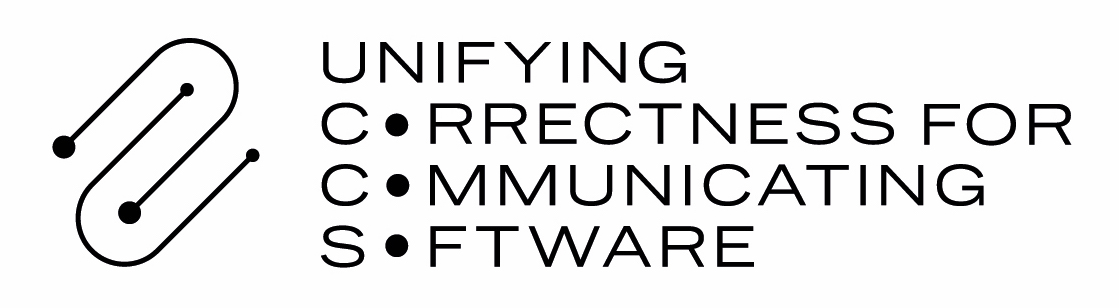}
			&
			The work in this thesis has been supported by the Dutch Research Council (NWO) under project No.\ 016.Vidi.189.046 (Unifying Correctness for Communicating Software).

		\end{tabularx}
	}

	\vfill \noindent {%
		Typeset with \LaTeX
		\\
		Printed by Ridder Print, The Netherlands
		\\
		Cover designed by Tijmen Zwaan
		\\
		Copyright \copyright\ Bas van den Heuvel, 2024
	}
	
	\clearpage
	\includepdf[pages=-]{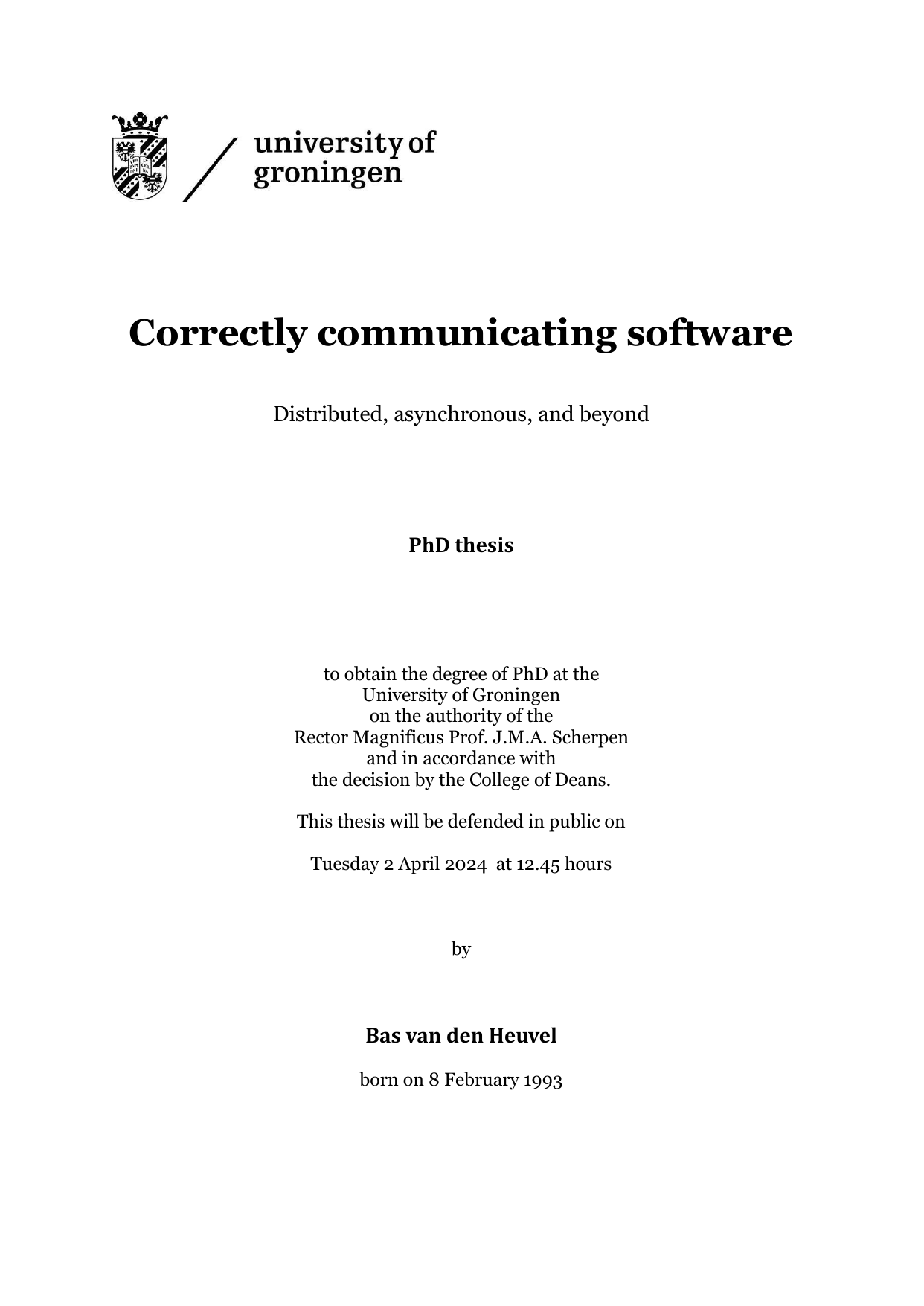}
\end{titlepage}


\chapter{Abstract}
Much of the software we use in everyday life consists of \emph{distributed} components (running on separate cores or even computers) that collaborate through \emph{communication} (by exchanging messages).
It is crucial to develop robust methods that can give reliable guarantees about the behavior of such message-passing software.
With a focus on \emph{session types} as \emph{communication protocols} and their foundations in logic, this thesis revolves around the following question:

\begin{quote}
    \emph{
        How can we push the boundaries of the logical foundations of session types (binary and multiparty), extending their expressiveness and applicability, while preserving fundamental correctness properties?
    }
\end{quote}

\noindent
In this context, this thesis studies several intertwined aspects of message-passing.
\begin{itemize}

    \item
        \textbf{Network Topology and Asynchronous Communication.}
        Deadlock-freedom is a notoriously hard problem, but session types can be enhanced to restrict component connections to guarantee deadlock-freedom.
        This thesis studies the effects of \emph{asynchronous communication} (non-simultaneous sending/receiving) on deadlock-freedom.

    \item
        \textbf{Non-determinism.}
        Non-deterministic choices model how programs may follow different paths of execution (e.g., under external influence).
        This thesis explores and compares non-de\-ter\-mi\-nis\-tic choice constructs with fine-grained semantics, with enhanced session types that maintain the expected correctness properties.

    \item
        \textbf{Ownership.}
        Session types have a foundation in \emph{linear logic}, a logic with fine-grained resource control in terms of \emph{number of uses}.
        This thesis develops an alternative logical foundation and accompanying variant of the pi-calculus based on \emph{the logic of bunched implications}, similar to linear logic but with a focus on \emph{ownership} of resources.

    \item
        \textbf{Functions.}
        Thus far, all models of message-passing are concurrent and process-based, but there are also such models for sequential, functional programming.
        This thesis presents several functional models of message-passing, and guarantees correctness through translation into the process-based models discussed above.

    \item
        \textbf{Multiparty Session Types (MPSTs): Asynchronous and Distributed.}
        MPSTs describe protocols between multiple components; they are practical, but correctness guarantees are complex.
        This thesis analyzes MPSTs implemented as distributed, asynchronously communicating networks by reduction to the binary session types discussed so far.

        Knowing exactly how a program is implemented (as assumed in the topics above) is not always practically feasible.
        This thesis adapts the approach above to use MPSTs to \emph{monitor} the behavior of programs with unknown specification, maintaining some correctness guarantees for asynchronous and distributed implementations of MPSTs.
\end{itemize}

\chapter{Summary for Non-experts}

We rely on software in every aspect of our everyday lives, so it is crucial that this software functions reliably.
The majority of software consists of smaller pieces of software that cooperate by communicating.
Reliable communication is therefore imperative to the reliability of software.
Ideally, software is guaranteed to function correctly by construction;
this contrasts common practice, where software is (non-exhaustively) tested for potential problems, or problems are even encountered only when it is in use.

This dissertation develops techniques that ensure that software communicates correctly by construction.
The main component herein is \emph{communication protocols} that describe precisely what is expected of the communication of pieces of software.
These protocols are then used to guide the development of correctly communicating software.

The approach is mathematical, taking heavy inspiration from logical reasoning.
Logic provides precise techniques to reason about the communication resources of software, that make guaranteeing correctness straightforward.
However, this straightforwardness comes with limitations: the techniques are restricted in the communication patterns they support.
An example restriction is that when more than two pieces of software need to communicate, they can only do so through a single point of connection.
In practice, pieces of software can communicate with each other directly, without such limitations.

Hence, the theme of this dissertation is \emph{pushing the boundaries of logic reasoning for communicating software}.
The goal is getting closer to reliable communication by construction for realistic software.
As such, the thesis develops new and improves existing techniques based on logic, while maintaining correctness guarantees.

\chapter{Samenvatting voor Non-experts}

In elk aspect van ons dagelijks leven leunen wij op software, dus is het cruciaal dat deze software betrouwbaar is.
Het merendeel van software bestaat uit kleinere stukken die samenwerken door te communiceren.
Betrouwbare communicatie is daarom onmisbaar voor het correct functioneren van software.
Idealiter wordt de correcte werking van software gegarandeerd door diens constructie;
echter, in de praktijk wordt software (onvolledig) getest op mogelijke problemen of worden problemen zelfs pas ontdekt tijdens gebruik van de software.

In deze dissertatie worden technieken ontwikkeld die verzekeren dat software correct communiceert vanuit constructie.
Het hoofdbestandsdeel hierin is \emph{communicatieprotocollen} die precies beschrijven wat er wordt verwacht van de communicatie van stukken software.
Deze protocollen worden vervolgens gebruikt als leidraad in de ontwikkeling van software die correct communiceert.

De aanpak is wiskundig en zwaar geïnspireerd door logisch redeneren.
Logica bevat precieze technieken voor het redeneren over communicatiemiddelen van software, die het garanderen van correctheid vereenvoudigen.
Deze eenvoud komt echter met tekortkomingen: de technieken zijn gelimiteerd in de ondersteunde communicatiepatronen.
Een voorbeeld is de beperking dat wanneer meer dan twee stukken software moeten communiceren, zij dit alleen kunnen doen via een enkel verbindingspunt.
In de praktijk kunnen stukken software direct met elkaar communiceren, zonder zulke beperkingen.

Derhalve is het thema van deze dissertatie het \emph{verleggen van de grenzen van logisch redeneren voor communicerende software}.
Het doel is dichter bij betrouwbare communicatie vanuit constructie voor realistische software komen.
Hiertoe ontwikkelt en verbetert deze dissertatie nieuwe en bestaande technieken gebaseerd op logica, terwijl correctheidsgaranties worden behouden.

\tableofcontents

\mainmatter

\thumbtrue

\def\fileLMCS{repos/apcp/submissions/LMCS/main.tex}
\newif\ifAPCP\APCPfalse
\newif\ifLASTn\LASTnfalse
\def\fileAPLAS{repos/s-pi-plus/APLAS23/main.tex}
\def\fileAPLASAppPi{repos/s-pi-plus/APLAS23/appendix/pi-proofs.tex}
\def\fileAPLASAppBisim{repos/s-pi-plus/APLAS23/appendix/pi-comparison.tex}
\def\fileLICSAppPi{repos/s-pi-plus/LICS23/appendix/pi-proofs.tex}
\newif\ifaplas\aplasfalse
\def\dirOOPSLA{repos/pi-bi/paper}
\def\fileOOPSLA#1{\dirOOPSLA/#1.tex}
\newif\ifpiBI\piBIfalse
\newif\ifalphalambda\alphalambdafalse
\def\fileSCICO{repos/apcp/submissions/SCICO-final/journal21.tex}
\newif\ifmpstAPCP\mpstAPCPfalse
\def\fileRV{repos/apcp/submissions/RVPaper/RV23/main.tex}
\def\fileRVApp{repos/apcp/submissions/RVPaper/RV23/appendix.tex}
\newif\ifrelProj\relProjfalse
\newif\ifmpstMon\mpstMonfalse

\chapter{Introduction}
\label{c:introduction}

Let me start off bluntly, by giving the shortest summary of my dissertation I can possibly give: it is about communicating software, how such communication can be described by protocols, and how we can use such protocols to guarantee that communicating software behaves as intended.
In this introduction, I unpack the components of the summary, to provide a gentle but detailed overview of my contributions.

In \Cref{s:intro:context}, I put this rather terse summary of my thesis gently into context, ending the section with a research question to which my dissertation gives a (partial) answer.
Then, in \Cref{s:intro:contrib}, I give an overview of the methods used to answer this question.
I then go through the topics addressed in this dissertation, motivating each topic with context and identifying precise research questions.
Finally, in \Cref{s:intro:outline}, I summarize the outline of the parts and chapters in this thesis, along with references to the publications that are derived from the research presented here.

\section{Context}
\label{s:intro:context}

It is very hard to imagine our lives without computers.
Are there even aspects of human life on Earth that are not touched by computers at all?
Even if you are not reading this dissertation on a computer, I can assure you that several computers were involved in putting it together.

Clearly, it is of utmost importance that computers are reliable.
You would appreciate your microwave not to overheat, or your digital life-support to never skip a heart beat.
But how can we be sure that computers are indeed reliable?
This is a complex task, with many aspects.

\paragraph{Reliable software.}

As of the day of writing, most to all software is made by humans.
Software architects, engineers, and programmers write and debug countless lines of code to create the firmware, operating systems, and applications we use in our everyday lifes.
It is no surprise (or shame!) that these people make the occasional error, introducing bugs into software eventually leading to unpleasant user experiences and even fatal crashes.

Finding bugs in the source code of software is thus an essential step in making software reliable.
In practice, many bugs are found by thorough testing, i.e., by running software under predetermined circumstances and checking that the result is as expected.
However, testing is by no means a guarantee for reliability: it is impossible to test all possible ways software can fail.
A more thorough approach is to exhaustively analyze a program's source code and determine whether all possible paths of execution will lead to an expected outcome; a salient approach herein is by means of \emph{types}.

\paragraph{Types.}

Commonly, programs are sequences of \emph{operations} on \emph{data}.
For these programs to function well, such operations make certain assumptions about the data.
For example, when an operation is to divide by two, it expects its input to be a number and not a string; the behavior of the operation on anything else than a number would be undefined and lead to unexpected outcomes.

To ensure that the outcome of such programs is as expected, we can assign types to data and operations.
Let me briefly illustrate the key idea of such types.
Suppose we are writing a program that takes as input a string, and is expected to return an integer representing the length of the string.
The program stores the input as a variable~\texttt{input}, and the program has access to an operation called~\texttt{len}; the program thus returns $\mathtt{len}(\mathtt{input})$.
In our program, we expect the input to be a string, so we assign it the corresponding type~$\mathtt{input} : \sff{str}$.
The \texttt{len}-operation expects as input a string, in which case it returns an integer; its type is thus~$\mathtt{len} : \sff{str} \to \sff{int}$.
Thus, the type of the operation applied to our variables results in an integer~$\mathtt{len}(\mathtt{input}) : \sff{int}$.
Equiped with these types, a compiler or execution environment can warn or abort when our program is given anything else than a string as input, and guarantee that the output is an integer when it is given a string.

Types are a main theme in my thesis, though as we will see, I will not be focusing on types for data and operations.
This is because in this dissertation, the focus is on software that employs \emph{concurrency}.

\paragraph{Concurrency as a way to better exploit modern hardware.}

Informally, concurrency is the principle that things can happen in arbitrary order, or even simultaneously.
There are many ways to employ concurrency in software.
The most well-known application of concurrency is when a computer has a processor with multiple cores: independent parts of a program can run concurrently (in parallel) on different cores; this way, programs can be perform their computations significantly faster than if they were executed sequentially.
A more ubiquitous though less obvious application of concurrency is in \emph{distributed systems}, which provide a context for the better part of this thesis.

\paragraph{Distributed systems.}

Distributed systems consist of multiple programs that do not necessarily run on the same computer but work together to achieve a task.
Many of the applications on your smartphone work in this way: they employ several services that run in ``the cloud'' to perform small tasks.
For example, the widget you may have on your home screen does not measure the weather and collect the news itself: it requests this information from weather and news services.

In distributed systems, programs collaborate by exchanging messages.
Such messages contain, e.g., requests to perform a specific task, or their results.
This form of concurrency is therefore referred to as \emph{message-passing concurrency}; it is the main subject of this thesis.

\paragraph{Message-passing concurrency.}

In message-passing concurrency, programs collaborate by exchanging messages.
Such messages can carry data, for example a user's login credentials.
Messages can also be used to coordinate on choices: for example, one program chooses an operation and the other programs should respond accordingly.

What contributes to reliability in message-passing concurrency?
There are many aspects.
On the network level, reliability means that every message reaches its destination unaltered and that programs do not change location unannounced or disappear altogether.
On the program level, reliability means that message-exchange is safe (e.g., programs do not try to simultaneously send each other messages) and that programs respond to the messages they receive as expected.

In this thesis, I abstract away from concerns about network reliability, and focus on the reliability of message-passing in programs.
In particular, the focus is on how the exchange of messages can be captured by \emph{communication protocols} and how these protocols can help to guarantee the reliability of message-passing programs.

\paragraph{Message-passing protocols.}

Communication protocols describe sequences of message exchanges.
Such exchanges usually describe the kind of data exchanged, but also moments of choice: protocol branches where the message determines how to proceed.

These protocols can be used to describe interactions between message-passing programs, but also---and perhaps more useful---to \emph{prescribe} a behavior to programs.
That is, the protocols declare how a program is expected to behave, and so the protocol can be used to \emph{verify} the behavior of a program.

To describe more clearly the relation between protocols and verification, let me elaborate on what I mean by the behavior of message-passing programs.
In message-passing, a program runs in parallel with other programs.
The program is connected to other programs via channels, over which messages can be exchanged.
The message-passing behavior of this program then comprises how and when it sends and receives messages over these channels.
A communication protocol can then prescribe message-passing behavior to a program, be it for one specific channel or for all channels at once.

This way, communication protocols are used as \emph{behavioral types} for message-passing programs.
This is a widely used approach to verification in message-passing concurrency.
An especially prominent branch of research focuses on behavioral types called \emph{session types}.

\paragraph{Session types.}

Originally introduced by Honda \etal~\cite{conf/concur/Honda93,conf/esop/HondaVK98}, session types denote sequences of messages and branches precribing message-passing behavior to programs.
Because they describe protocols for specific channels, so between two programs, they are often referred to as \emph{binary session types}.
Later, I discuss the generalization of binary session types as multiparty session types.

Let us consider an example protocol, which prescribes a program to offer a choice between two branches.
In the first branch, labeled \sff{login}, the program should receive an integer and then make a choice between two alternatives.
If the program chooses \sff{success}, the protocol ends; if the program chooses \sff{failure}, the protocol starts over from the beginning.
In the second branch, labeled \sff{quit}, the protocol simply ends.
We can write this protocol formally as the following session type:
\begin{align}
    S \deq \mu X . \& \big\{ \sff{login} . {?}(\sff{int}) . \oplus \{ \sff{success} . \tEnd , \sff{failure} . X \} , \sff{quit} . \tEnd \big\}
    \label{eq:intro:st}
\end{align}
As the protocol may repeat, it starts with a recursive definition $\mu X \ldots$, and contains a recursive call $X$ that indicates a jump to the corresponding recursive definition.
The branch is denoted $\& \{ \sff{login} \ldots , \sff{quit} \ldots \}$, where the labels prefix session types that prescribe behavior for each branch.
Reception is denoted ${?}(T)$, where $T$ is the type of the value to be received.
To prescribe a choice, we use selection, denoted $\oplus \{ \ldots \}$; again, the branches contain labels that prefix corresponding session types.
Whenever the protocol should end, we use $\tEnd$.

Session types are very useful for verifying the behavior of message-passing programs.
Key to the power of session types for verification is the notion of \emph{duality}.
Duality says that if a program at one end of a channel behaves according to a given binary session type, then the program at the other end of the channel should show a complementary behavior: if the original type describes a send, the dual describes a receive, and so forth.

Let us consider the dual of $S$~\eqref{eq:intro:st}.
This protocol prescribes a program to choose between two alternatives.
If the program chooses \sff{login}, it should send an integer and offer a choice between two alternatives: case \sff{success} ends the protocol, and case \sff{failure} repeats it; if the program chooses \sff{quit}, the protocol ends.
Indeed, this protocol is complementary to that described by~$S$.
Formally, we can write the dual of~$S$, denoted $\ol{S}$, as follows:
\[
    \ol{S} = \mu X . \oplus \{ \sff{login} . {!}(\sff{int}) . \& \{ \sff{success} . \tEnd , \sff{failure} . X \} , \sff{quit} . \tEnd \}
\]
The only new notation here is ${!}(T)$, which denotes sending a value of type $T$.

Duality is the cornerstone of session type theory, as it is instrumental in proving that \emph{communication is safe} in message-passing programs.
Communication safety is one of the important correctness properties of message-passing concurrency, discussed next.

\paragraph{Correctness properties for message-passing.}

In general, we want to verify that programs are \emph{correct}.
But what is correct?
What are the correctness properties we are after?
In sequential programming, correctness focuses on the input-output behavior of the program (i.e., given a certain input, the output of the program is as expected), but this does not necessarily apply to message-passing programs.

The main correctness property of message-passing programs is \emph{protocol conformance}, often referred to also as \emph{protocol fidelity}.
It means that, given a protocol (e.g., a binary session type) for a specific channel, a message-passing program correctly implements the sequence of communications specified by the protocol.

The next major correctness property is \emph{communication safety}.
It means that programs at either end of a channel do not exhibit conflicting communications on that channel, i.e., communication errors.
For example, if both programs simultaneously send something, then the messages will ``run into each other'', leading to unexpected behavior.
Communication safety then signifies the absence of communication errors.
This property often follows directly from using session types, relying on duality as explained above.

Another important property is \emph{deadlock-freedom}.
A deadlock occurs when connected programs are waiting for each other, without a possibility to resolve.
For example, consider three programs $P$, $Q$, and $R$.
Program $P$ is ready to send something to $Q$, while $Q$ is ready to send something to $R$, and $R$ is ready to send something to $P$.
Under the assumption that communication is synchronous, i.e., that a send can only take place once the receiving program is ready to receive (I will elaborate on this later), this situation signifies a deadlock: all the programs are stuck waiting for each other.
Deadlock-freedom then means that situations leading to deadlocks never occur.
This property is a major challenge in this thesis, much more so than protocol fidelity and communication safety.
As such, I will elaborate further on it later in this introduction.

\paragraph{Logical foundations for session-typed message-passing.}

It is a challenging task to design (session) type systems for message-passing that guarantee correctness properties by typing directly.
It is then only natural to look to existing solutions to similar problems in different areas as inspiration.
Logic is an important such inspiration for correctness in programming.

As observed by Curry~\cite{journal/pnas/Curry34} and Howard~\cite{journal/HBCurry/Howard80}, intuitionistic logic serves as a deep logical foundation for sequential programming.
To be precise, the logic is isomorphic to the simply typed \lamcalc: propositions as types, natural deduction as type inference, and proof normalization as $\beta$-reduction.
As such, properties of intuitionistic logic transfer to the simply typed \lamcalc.
For example, from the strong normalization property of intuitionistic logic we know that well-typed \lamcalc terms are strongly normalizing as well, i.e., every well-typed term can be executed to reach a final state where there is no more computation to be done (see, e.g., \cite{book/Girard89}).

Linear logic, introduced by Girard~\cite{journal/tcs/Girard87}, is a logic of resources.
That is, linear logic features a precise management of resources governed by the principle of \emph{linearity}: each resources must be used exactly once.
This means that, unlike in common logics such as classical and intuitionistic logic, linear logic does not allow duplication (contraction) and discarding (weakening) of resources.

Many thought that a correspondence in the style of Curry-Howard could be found between linear logic and concurrent programming, though early attempts were unsatisfactory (e.g., by Abramsky~\cite{journal/tcs/Abramsky93}).
It took over two decades since the conception of linear logic for a satisfactory answer to be found: Caires and Pfenning~\cite{conf/concur/CairesP10} discovered that linear logic is very suitable as a logical foundation for the session-typed \picalc.
They found the isomorphism in dual intuitionistic linear logic~\cite{report/BarberP96}: propositions serve as session types, sequent calculus derivations as typing inferences, and cut-reduction as communication.
Not much later, Wadler~\cite{conf/icfp/Wadler12,journal/jfp/Wadler14} showed a similar correspondence with the original classical linear logic.

The works by Caires and Pfenning and by Wadler are generally accepted as canonical Curry-Howard interpretations of linear logic, and have inspired a magnitude of follow-up work.
The main reason is that session type systems derived from linear logic precisely identify message-passing programs that are deadlock-free.
We will explore this in more depth later.

\subsection{Encompassing Research Question}
\label{s:intro:context:requ}

Hopefully, at this point I have managed to convince you of the ubiquity of communicating software and the need for guaranteeing its correctness.
I have introduced a theoretical approach to this problem by considering abstractions of communicating software as message-passing programs, and to capturing their interactions as communication protocols called session types.
This dissertation includes a broad range of topics, each approaching the correctness of message-passing programs using session types from different angles.

Nonetheless, every topic has a common starting point: logical foundations for session-typed message-passing.
As discussed above, session typing has deep foundations in linear logic.
These foundational connections induce precise patterns of message-passing for which the correctness properties discussed above hold straightforwardly.
Yet, there are many patterns of message-passing outside of these logical foundations that still enjoy those correctness properties.
As will be a theme in this thesis, these logical foundations do not account for, e.g., practical aspects of message-passing such as asynchronous communication, resource management, and the way programs can be connected.

Hence, the following research question covers the encompassing theme of my thesis:

\begin{restatable}{encrequ}{theEncrequ}
    How can we push the boundaries of the logical foundations of session types (binary and multiparty), extending their expressiveness and applicability, while preserving fundamental correctness properties?
\end{restatable}

\begin{figure}[t]
    \begingroup
    \setlength{\topsep}{0pt}
    \setlength{\partopsep}{0pt}
    \begin{restatable}{requI}{requAPCP}
        \label{requ:APCP}
        How can we reconcile asynchronous communication, deadlock-freedom for cyclically connected processes, and linear logic foundations for session types?
    \end{restatable}

    \begin{restatable}{requI}{requND}
        \label{requ:ND}
        How can we increase commitment in non-deterministic choices, while retaining correctness properties induced by linear logic foundations for session types?
    \end{restatable}

    \begin{restatable}{requI}{requBI}
        \label{requ:BI}
        Can \BI serve as an alternative logical foundation for session types, and if so, what correctness guarantees does this bring?
    \end{restatable}
    \endgroup

    \smallskip
    \begingroup
    \setlength{\topsep}{0pt}
    \setlength{\partopsep}{0pt}
    \begin{restatable}{requII}{requLASTn}
        \label{requ:LASTn}
        Can we exploit \APCP's asynchronous message-passing and cyclic connections in sequential programming, while retaining correctness properties?
    \end{restatable}

    \begin{restatable}{requII}{requAL}
        \label{requ:AL}
        Can \pibi faithfully represent the \alamcalc on a low level of abstraction, and can we exploit \pibi's correctness properties in the \alamcalc?
    \end{restatable}
    \endgroup

    \smallskip
    \begingroup
    \setlength{\topsep}{0pt}
    \setlength{\partopsep}{0pt}
    \begin{restatable}{requIII}{requRelProj}
        \label{requ:relProj}
        Can we define a method for obtaining local perspectives from global types that induces a class of well-formed global types featuring communication patterns not supported before?
    \end{restatable}

    \begin{restatable}{requIII}{requMpstAPCP}
        \label{requ:mpstAPCP}
        Can we leverage on binary session types to guarantee protocol conformance and deadlock-freedom for distributed (model) implementations of \MPSTs?
    \end{restatable}

    \begin{restatable}{requIII}{requMpstMon}
        \label{requ:mpstMon}
        Can we dynamically verify distributed blackbox implementations of \MPSTs, and what correctness properties can we guarantee therein?
    \end{restatable}
    \endgroup

    \caption{Research questions.}
    \label{f:requs}
\end{figure}

In \Cref{f:requs}, I break this down into eight more precise research questions.
Being precise, these research questions contain terminology not yet introduced.
The next section discusses the contributions presented in my dissertation, and how they address the research questions in \Cref{f:requs}.
In this section, all the terminology becomes clear.

\section{Contributions}
\label{s:intro:contrib}

Thus far, I have (intentionally) been vague about what constitutes a ``program''.
There are many ways to define a program, but ultimately it boils down to context: for what purpose are you working with programs?
In my dissertation, I work with two specific notions of message-passing programs.

In \Cref{s:intro:contrib:pi}, I shall introduce the process-based approach.
In this approach, processes interact through message-passing without details about internal computation, i.e., processes are abstract representations of programs by focusing solely on their message-passing behavior.
Here, I introduce \Cref{c:APCP,c:clpi,c:piBI} in \Cref{p:pi}, which build on this process-based approach and address \Cref{requ:APCP,requ:ND,requ:BI} in \Cref{f:requs}, respectively.

In \Cref{s:intro:contrib:lambda}, I introduce a functional approach.
In this approach, we consider functional programs whose behavior is precisely defined, and consider how we can use message-passing to derive a deeper understanding of their behavior.
\Cref{c:LASTn,c:alphalambda} build on this approach in \Cref{p:lambda} and address \Cref{requ:LASTn,requ:AL} in \Cref{f:requs}, respectively.

In \Cref{s:intro:contrib:mpst}, I introduce \Cref{p:mpst}, which contains chapters on multiparty session types.
These \Cref{c:relProj,c:mpstAPCP,c:mpstMon} address \Cref{requ:relProj,requ:mpstAPCP,requ:mpstMon} in \Cref{f:requs}, respectively.

\subsection{Binary Session Types and Message-passing Processes (\crtCref{p:pi})}
\label{s:intro:contrib:pi}

The context I have introduced is that of message-passing concurrency, and the usage of session types as communication protocols to verify several correctness aspects therein.
It thus makes sense to give a definition of program that focuses on message-passing.

As the research presented in this thesis is of theoretical nature, I will define \emph{mathematical models} of message-passing programs.
This gives us the tools to precisely define the behavior of such programs, and to formally prove properties of them.
Many such formalisms already exist; instead of re-inventing the wheel, I base my definitions on such prior works.

\paragraph{The \picalc.}

The mathematical model of message-passing which most of this thesis leans on is the~\picalc.
Originally introduced by Milner \etal~\cite{book/Milner89,journal/ic/MilnerPW92}, the \picalc is a \emph{process calculus} focused entirely on message-passing.
In the \picalc, processes run in parallel and communicate by exchanging messages on dedicated channels.

We consider an example of message exchange on a channel:
\begin{align}
    P_1 \deq \pRes{xy} ( \pSel* x < \sff{helloWorld} ; \0 \| \pBra* y > \sff{helloWorld} ; \0 )
    \label{eq:intro:P1}
\end{align}
The process $P_1$ contains two parallel subprocesses, separated by `$\,\|\,$'.
These subprocesses can communicate because they are connected by a channel, indicated by the restriction `$\pRes{xy}$' wrapping the parallel composition.
This way, the subprocesses have access to the channel's endpoints $x$ and $y$.
The left subprocess sends on $x$ a label \sff{helloWorld} (denoted~$\pSel* x < \ldots$).
The continuation of this send, indicated by `$\,;\,$', is $\0$, which indicates that the subprocess is done.
The right subprocess receives on $y$ the same label (denoted~$\pBra* y > \ldots$), and also continues with $\0$.
This way, the subprocesses communicate by exchanging the label \sff{helloWorld}.

The behavior of \picalc processes is then defined in terms of a \emph{reduction semantics} that defines how parallel subprocesses that do complementary message exchanges on connected endpoints synchronize.
In the case of $P_1$~\eqref{eq:intro:P1}, the send on $x$ and the receive on $y$ can synchronize.
Afterwards, both subprocesses transition to $\0$:
\[
    P_1 \redd \pRes{xy}( \0 \| \0 )
\]
Since the endpoints $x$ and $y$ are no longer used, the channel can be garbage collected.
Moreover, the two inactive subprocesses can be merged into one.
This clean-up is often done implicitly, resulting in the following reduction:
\begin{align}
    P_1 \redd \0
    \label{eq:intro:P1Redd}
\end{align}

\paragraph{Channel mobility.}

The~\picalc can express much more interesting forms of message-passing than exchanging labels.
The distinctive feature of the~\picalc is \emph{channel mobility}: the exchange of channel endpoints themselves, influencing the way processes are connected.

The following example is similar to $P_1$~\eqref{eq:intro:P1}.
However, before exchanging the message, the subprocesses exchange a channel, changing their connections.
\begin{align}
    P_2 \deq \pRes{xy} \big( \pRes{zw} ( \pOut* z[u] ; \pFwd [u<>x] \| \pIn w(v) ; \pSel* v < \sff{helloWorld} ; \0 ) \| \pBra* y > \sff{helloWorld} ; \0 \big)
    \label{eq:intro:P2}
\end{align}
Process $P_2$ consists of two subprocesses, connected on a channel with endpoints $x$ and $y$.
The right subprocess expects to receive a label \sff{helloWorld}.
However, the left subprocess is further divided into two parallel subprocesses, connected on a channel with endpoints $z$ and~$w$.
In a nutshell, the left sub-subprocess has access to the endpoint $x$, but sends it to the right sub-subprocess which proceeds to send the label \sff{helloWorld} on it.
This is implemented in the left sub-subprocess by sending on $z$ a new endpoint $u$ (denoted~$\pOut* z[u]$), after which $u$ is linked to $x$ by means of a forwarder `$\pFwd [u<>x]$' that passes any messages incoming on $u$ to $x$ and vice versa.
In the right sub-subprocess an endpoint $v$ is received on $w$, after which the label is sent on $v$.

Unlike in $P_1$~\eqref{eq:intro:P1}, in $P_2$~\eqref{eq:intro:P2} we cannot synchronize on $x$ and $y$, as there is no send on~$x$ available.
We first need to resolve the communication on endpoints $z$ and $w$.
The left subprocess sends a new name $u$, so this creates a new channel; the endpoints $z$ and $w$ are no longer used, so this channel can be garbage collected:
\[
    P_2 \redd \pRes{xy} \big( \pRes{uv} ( \pFwd [u<>x] \| \pSel* v < \sff{helloWorld} ; \0 ) \| \pBra* y > \sff{helloWorld} ; \0 \big)
\]
Still, a synchronization on $x$ and $y$ is not possible.
Instead, we can eliminate the forwarder $\pFwd [u<>x]$.
The forwarder connects $x$ to $u$, and $u$ is connected to $v$ through a channel, so we can simply do all the message exchanges that were on $v$ before on $x$ directly:
\[
    \redd \pRes{xy} ( \pSel* x < \sff{helloWorld} ; \0 \| \pBra* y > \sff{helloWorld} ; \0 ) = P_1
\]
Hence, eventually we reach $P_1$~\eqref{eq:intro:P1}, which reduces as before~\eqref{eq:intro:P1Redd}.

\paragraph{Branching.}

Another important element of the~\picalc is branching: a process \emph{branches} on a choice between a set of labeled alternatives on one channel endpoint, and the process on the opposite endpoint \emph{selects} an alternative by sending one of the offered labels.
This way, the branching process can prepare different behaviors depending on the received label.

The following example illustrates how a process can branch to different behaviors:
\begin{align}
    P_3 &\deq \pBra* x > {\{ \sff{send}: \pOut* x[z] ; \0 , \sff{quit}: \0 \}}
    \label{eq:intro:P3}
\end{align}
Process~$P_3$ offers on $x$ a choice between behaviors labeled \sff{send} and \sff{quit}.
The behavior of the \sff{send}-branch is to send a fresh channel $z$ on $x$, while the behavior in the \sff{quit}-branch ends.

Now consider the following processes, that we intend to connect to $P_3$~\eqref{eq:intro:P3}.
Each process selects a different behavior, and thus they behave differently after the selection.
\begin{align*}
    Q_1 &\deq \pSel* y < \sff{send} ; y(w) ; \0
    \\
    Q_2 &\deq \pSel* y < \sff{quit} ; \0
\end{align*}
Process $Q_1$ selects on $y$ \sff{send} and receives on $y$, whereas $Q_2$ selects on $y$ \sff{quit} and ends.
This way, we can connect either process with $P_3$~\eqref{eq:intro:P3}, i.e., as $\pRes{xy} ( P_3 \| Q_1 )$ or as $\pRes{xy} ( P_3 \| Q_2 )$.

\paragraph{Session types and the \picalc.}

I have already anticipated the usefulness of session types as communication protocols in message-passing concurrency.
Indeed, session types are widely used to verify the behavior of \picalc processes on channels.
Given a channel (i.e., two connected endpoints), a sequence of communications on that channel is referred to as a \emph{session}.
The behavior of a process on an endpoint can thus be ascribed a session type.

Consider the following process, where subprocesses create and exchange a new channel to communicate on:
\[
    P_4 \deq \pRes{xy} ( \pOut* x[z] ; \pIn z(u) ; \0 \| \pIn y(w) ; \pOut* w[v] ; \0 )
\]
In process~$P_4$, the left subprocess sends on $x$ a fresh channel $z$, on which it receives.
Recall the notation of session types introduced on \Cpageref{eq:intro:st}.
Hence, in the left subprocess we can ascribe to $x$ the session type ${!}({?}\tEnd . \tEnd) . \tEnd$: on $x$ we send an endpoint typed ${?}\tEnd . \tEnd$.
In the right subprocess of $P_4$, we receive on $y$ a channel $w$ on which we send.
So, here we ascribe to $y$ the session type ${?}({!}\tEnd . \tEnd) . \tEnd$.
This session type is the dual of the type ascribed to $x$ in the left subprocess.
Hence, the left and right subprocesses describe complementary behaviors on $x$ and $y$, and thus their exchanges on $x$ and $y$ are \emph{safe}.

Session typing also prevents branching from leading to unexpected behavior.
This anticipates an aspect of branching that will be important in \Cref{s:intro:contrib:mpst}: what if a process that branches on $x$ also uses endpoints other than $x$ for communications?
For example, the following process branches on an endpoint $w$ to define different behaviors on another endpoint $y$:
\NewDocumentCommand\eqIntroPfive{s}{\ensuremath{P_5 \IfBooleanTF{#1}{=}{\deq} \pRes{xy} ( \pIn x(v) ; \0 \| \pBra* w > {\{ \sff{send}: \pOut* y[u] ; \0 , \sff{quit}: \0 \}} )}}
\begin{align}
    \eqIntroPfive
    \label{eq:intro:P5}
\end{align}
In process~$P_5$, endpoints $x$ and $y$ form a channel, and the left subprocess receives on $x$.
The right subprocess branches on another endpoint $w$.
In the \sff{send}-branch, it sends on $x$, but in the \sff{quit}-branch it immediately ends.
If the right subprocess receives on $w$ the label \sff{send}, the communication between $x$ and $y$ proceeds as expected.
However, if the right subprocess were to receive on $w$ the label \sff{quit}, the receive on $x$ in the left subprocess is left without a corresponding send: the exchanges on $x$ and $y$ are not guaranteed to be safe.

To guarantee safety for branching, session typing requires that a process' behavior on endpoints other than the branching endpoint has the same type across all branches.
This way, process~$P_5$~\eqref{eq:intro:P5} is not considered well-typed, avoiding unsafe branching: the behavior of the right subprocess on $y$ in the \sff{send}-branch is ${!}(\tEnd) . \tEnd$, but $\tEnd$ in the \sff{quit}-branch.

By assigning session types to the endpoints of processes, session typing guarantees protocol fidelity.
Requiring that connected endpoints should be typed dually and that endpoints should be typed consistently across branches is what guarantees communication safety.
However, as we will see, session types do not guarantee deadlock-freedom by themselves.

\paragraph{Session types and deadlock.}

Consider the following process, where two subprocesses communicate on two different channels:
\NewDocumentCommand\eqIntroPsix{s}{\ensuremath{P_6 \IfBooleanTF{#1}{=}{\deq} \pRes{xy} \pRes{zw} ( \pOut* x[u] ; \pIn z(v) ; \0 \| \pOut* w[v'] ; \pIn y(u') ; \0 )}}
\begin{align}
    \eqIntroPsix
    \label{eq:intro:P6}
\end{align}
This process connects endpoint $x$ to $y$ and $z$ to $w$, so these endpoints should be assigned pairwise dual session types.
Indeed, in the left subprocess $x$ is typed ${!}(\tEnd) . \tEnd$ and $z$ is typed ${?}(\tEnd) . \tEnd$, and in the right subprocess $y$ is typed ${?}(\tEnd) . \tEnd$ and $w$ is typed ${!}(\tEnd) . \tEnd$.
Hence, the communications in $V$ are deemed safe.

However, process~$P_6$~\eqref{eq:intro:P6} is clearly not deadlock-free: the send on $x$ has to wait for the corresponding receive on $y$, but it is blocked by the send on $w$ which has to wait for the receive on $z$, itself blocked by the send on $x$.
In other words, $P_6$ is well-typed but deadlocked.

The above is a canonical example of deadlock caused by circular dependencies: $x$ depends on $y$, $y$ depends on $w$, $w$ depends on $z$, $z$ depends on $x$, and so on.
In fact, at the level of abstraction of the~\picalc, circular dependencies are the only source of deadlock.
It thus seems that if we have a type system that excludes circular depenencies, then typing does guarantee deadlock-freedom.
A prominent approach is rooted in deep connections between session-typed message-passing concurrency and \emph{linear logic}~\cite{conf/concur/CairesP10,conf/icfp/Wadler12}.

\paragraph{Linear logic and deadlock-freedom.}

As hinted to before, linear logic is an important foundation for session type systems that guarantee deadlock-freedom by typing.
To see why, let us be a bit more formal regarding session typing for the \picalc.

When typing a process, we use \emph{typing judgments}, denoted $\vdash P \typInf \Gamma$.
Here $\Gamma$ is a \emph{typing context}: a list of \emph{typing assignments} $x:A$, denoting that endpoint $x$ is assigned the session type $A$.
This way, $\vdash P \typInf \Gamma$ says that the endpoints used in $P$ (that are not connected by restriction) are typed according to $\Gamma$.
The typing judgment of a process is obtained by means of typing inference, following typing rules.

Note that we use a non-standard notation for our typing judgments, as opposed to the more common notation $\Gamma \vdash P$. This is because this notation can be confusing for readers unfamiliar with session type systems: from a logic point of view as well as a traditional typing point of view, a judgment $\Gamma \vdash P$ can be interpreted to mean ``from the facts in $\Gamma$, we derive that $P$ holds,'' which is not in line with the intended meaning of the judgment.
Instead, we write $\vdash P \typInf \Gamma$ to be interpreted as ``we derive that $P$ is well-typed according to~$\Gamma$.''

Let us consider a pair of typing rules that are traditionally used for session types.
These are the rules for parallel composition and restriction (endpoint connection), respectively.
Remember that we write $\ol{A}$ to denote the dual of $A$.
\begin{mathpar}
    \begin{bussproof}[par]
        \bussAssume{
            \vdash P \typInf \Gamma
        }
        \bussAssume{
            \vdash Q \typInf \Delta
        }
        \bussBin{
            \vdash P \| Q \typInf \Gamma , \Delta
        }
    \end{bussproof}
    \and
    \begin{bussproof}[res]
        \bussAssume{
            \vdash P \typInf \Gamma , x:A , y:\ol{A}
        }
        \bussUn{
            \vdash \pRes{xy} P \typInf \Gamma
        }
    \end{bussproof}
\end{mathpar}
Thus, Rule~\ruleLabel{par} places two processes in parallel and combines their (disjoint) typing contexts.
Rule~\ruleLabel{res} requires two endpoints of dual types, and connects them thus removing them from the typing context.
These are the rules that are necessary to type the endpoint connections in process~$P_6$~\eqref{eq:intro:P6}, i.e., they do not exclude the circular dependencies that cause deadlock.

Now consider the following typing rule, inspired by linear logic's Rule~\ruleLabel{cut}:
\[
    \begin{bussproof}[cut]
        \bussAssume{
            \vdash P \typInf \Gamma, x:A
        }
        \bussAssume{
            \vdash Q \typInf \Delta, y:\ol{A}
        }
        \bussBin{
            \vdash \pRes{xy} ( P \| Q ) \typInf \Gamma, \Delta
        }
    \end{bussproof}
\]
This rule combines Rules~\ruleLabel{par} and~\ruleLabel{res} into one.
It places two processes in parallel, connecting them on \emph{exactly one} pair of endpoints.
Further connections between $P$ and $Q$ are not possible, as there are no rules to connect endpoints in $\Gamma,\Delta$.

Session type systems based on linear logic replace Rules~\ruleLabel{par} and~\ruleLabel{res} by Rule~\ruleLabel{cut}.
As a consequence, these type systems guarantee deadlock-freedom by typing.
In a nutshell, this holds because Rule~\ruleLabel{cut} is the only way to connect processes: it is impossible to cyclically connect processes thus ruling out circular dependencies.
This way, the deadlocked process~$P_6$~\eqref{eq:intro:P6} is not considered well-typed in these type systems.

\paragraph{A base presentation of the \picalc.}

The \picalc can take some effort to get used to.
Because it is an important backbone of this thesis, I include \Cref{c:basepi} to allow the reader to familiarize themselves with the \picalc and the kind of session types used in this thesis.
The chapter introduces a variant of the session-typed \picalc called \basepi, that is ``distraction-free'' (i.e., \basepi focuses on the core of message-passing and session types).
This \basepi can be seen as a basis for the other variants of the \picalc introduced in this dissertation.
The chapter explains how \basepi relates to those other variants.

\subsubsection{Deadlock-freedom in Cyclically Connected Processes}
\label{s:intro:contrib:pi:APCP}

The interpretation of linear logic's Rule~\ruleLabel{cut} guarantees deadlock-freedom by typing, with a fairly straightforward proof following linear logic's \emph{cut-reduction} principle.
However, this approach is an \emph{overestimation}: a large class of deadlock-free processes is not typable using Rule~\ruleLabel{cut}.

Consider the following process, which is an adaptation of process~$P_6$~\eqref{eq:intro:P6} where the send and receive in the right subprocess are swapped in order:
\NewDocumentCommand\eqIntroPseven{s}{\ensuremath{P_7 \IfBooleanTF{#1}{=}{\deq} \pRes{xy} \pRes{zw} ( \pOut* x[u] ; \pIn z(v) ; \0 \| \pIn y(u') ; \pOut* w[v'] ; \0 )}}
\begin{align}
    \eqIntroPseven
    \label{eq:intro:P7}
\end{align}
Just like $P_6$, process~$P_7$ is a cyclically connected process.
However, this time $P_7$ is not deadlocked: first the communication between $x$ and $y$ can take place, and then the communication between $z$ and $w$.
Yet, $P_7$ cannot be typed using Rule~\ruleLabel{cut}.
The question is then: how do we design a session type system that considers $P_7$ well-typed, while guaranteeing deadlock-freedom for all well-typed processes (excluding, e.g., $P_6$~\eqref{eq:intro:P6})?

Even before the discovery by Caires and Pfenning of using linear logic's Rule~\ruleLabel{cut} to guarantee deadlock-freedom, Kobayashi~\cite{conf/concur/Kobayashi06} introduced a session type system that answers precisely this question.
In his approach, session types are enriched with annotations.
Side-conditions on these annotations in typing rules guarantee deadlock-freedom for cyclically connected processes.
The general idea of the annotations and side-conditions is that they can only be satisfied when the process is free from circular dependencies.

Based on Wadler's interpretation of classical linear logic as \CP (Classical Processes) \cite{journal/jfp/Wadler14} and Kobayashi's approach to deadlock-freedom~\cite{conf/concur/Kobayashi06}, Dardha and Gay~\cite{conf/fossacs/DardhaG18} presented \PCP (Priority-based \CP).
Their work reconciles session type theory's Curry-Howard foundations in linear logic with Kobayashi's more expressive approach.

Dardha and Gay refer to the annotations on types as \emph{priorities}, as they are numbers that indicate a global ordering on communications.
The idea is that the types of communications that block other communications can only be annotated with lower priority than the priority annotations of the blocked communications.
Duality is then extended to dictate that priority annotations on complementary communications should coincide.
This way, the \emph{local} ordering of communications is transferred between parallel processes when connecting endpoints, imposing a \emph{global} ordering of communications.
As long as the ordering is satisfiable, the process is guaranteed to be free from circular dependencies, and thus deadlock-free.

Let us reconsider process~$P_6$~\eqref{eq:intro:P6}.
Recall:
\[
    \eqIntroPsix*
\]
Let us list the session types assigned to endpoints in the left and right subprocesses of~$P_6$, this time including priority annotations ($\tEnd$ does not entail a communication and so does not require an annotation):
\begin{align*}
    x &: {!}^\pri (\tEnd) . \tEnd
    &
    z &: {?}^\pi (\tEnd) . \tEnd
    &
    w &: {!}^\rho (\tEnd) . \tEnd
    &
    y &: {?}^\gamma (\tEnd) . \tEnd
\end{align*}
Here $\pri,\pi,\rho,\gamma$ are variables representing numbers.
The order in which the associated endpoints are used determines conditions on the relation between these numbers, that may or may not be satisfiable.
Since the send on $x$ blocks the receive on $z$, and the send on $w$ blocks the receive on $y$, we have the following conditions on these priorities:
\begin{align*}
    \pri &< \pi
    &
    \rho &< \gamma
\end{align*}
Finally, since $x$ and $y$ should be typed dually, we require $\pri = \gamma$.
Similarly, we need $\pi = \rho$.
In the end, we have the following condition on priorities:
\[
    \pri < \pi < \pri
\]
This condition is clearly unsatisfiably.
Hence, we have detected the circular dependency in~$P_6$ and consider it ill-typed.

To contrast, let us reconsider process~$P_7$~\eqref{eq:intro:P7}.
We deemed~$P_7$ ill-typed using Rule~\ruleLabel{cut}, yet it is deadlock-free.
Recall:
\[
    \eqIntroPseven*
\]
We assign the same session types and priorities to the endpoints in the subprocesses of~$P_7$ as we did for~$P_6$ above.
Yet, because the receive on $y$ is now blocking the send on $w$ (instead of the other way around), we no longer require that $\pi < \pri$, and hence only require that $\pri < \pi$.
Clearly, this condition is indeed satisfiable, and thus we deem $P_7$ free from circular dependencies and well-typed, i.e., deadlock-free.

\paragraph{The influence of synchronous versus asynchronous communication.}

Thus far, I have only considered \emph{synchronous} communication, under which sends block whatever process they prefix until their corresponding receives are ready and communication has taken place (and similarly for selections/branches).
Its counterpart, \emph{asynchronous} communication, is an important theme in practice and in this thesis.
Under asynchronous communication, sends and selections do not block the processes they prefix.

Consider the following example, where two subprocesses are prefixed by a send:
\[
    P_8 \deq \pOut* z[u] ; \pRes{xy} ( \pOut* x[z] ; \0 \| \pIn y(w) ; \0 )
\]
Process~$P_8$ is stuck under synchronous communication, because there is no corresponding receive for the send on $z$, so it blocks the communication between $x$ and $y$.
In contrast, under asynchronous communication, the send on $z$ is non-blocking, so the communication between $x$ and $y$ can take place:
\[
    P_8 \redd \pOut* z[u] ; \0
\]

Being the standard mode of communication in practice, asynchrony is an important aspect of message-passing concurrency.
This leads me to the following research question:

\requAPCP*

\noindent
In \Cref{c:APCP}, I describe my answer to this question.

More precisely, I adapt Dardha and Gay's \PCP~\cite{conf/fossacs/DardhaG18} to the asynchronous setting to define \APCP (Asynchronous \PCP).
Let us once again recall process~$P_6$~\eqref{eq:intro:P6}:
\[
    \eqIntroPsix*
\]
Under asynchronous communication, $P_6$ is not deadlocked: neither of the sends are blocking, so communications can take place.
To see how \APCP adapts \PCP's priority analysis to guarantee deadlock-freedom including, e.g., $P_6$, let us recall its session type assignments:
\begin{align*}
    x &: {!}^\pri (\tEnd) . \tEnd
    &
    z &: {?}^\pi (\tEnd) . \tEnd
    &
    w &: {!}^\pi (\tEnd) . \tEnd
    &
    y &: {?}^\pri (\tEnd) . \tEnd
\end{align*}
Since sends are non-blocking in \APCP, they do not influence the local ordering of communications.
Therefore, the priority conditions $\pri < \pi$ and $\pi < \pri$ that should hold in \PCP are unnecessary in \APCP.
Hence, all priority conditions (of which there are none) are satisfiable, so this time $P_6$ is consider free from circular dependencies and well-typed: $P_6$ is deadlock-free under asynchronous communication.

\APCP shows that asynchrony enables more deadlock-free communication patterns, compared to, e.g., \PCP's synchrony.
%
\APCP and its cyclically connected processes with asynchronous communication will also appear in \Cref{c:LASTn,c:APCP}, where they are used as a basis for the analysis of sequential programming and multiparty session types, respectively, both discussed later in this introduction.

\subsubsection{Non-determinism}
\label{s:intro:contrib:pi:clpi}

The aspects of message-passing concurrency discussed above in \Cref{s:intro:contrib:pi:APCP} are important and fundamental.
Another important and fundamental aspect of message-passing concurrency is \emph{non-determinism}.
The form of branching in the \picalc discussed above is \emph{deterministic}, in that the exact choice that will be made will always be hard-coded in the processes.
Traditionally, the \picalc also features non-deterministic choices, where the outcome of the choice is left unspecified.
Non-deterministic choices encode real-world scenarios, where a program's environment influences the program's behavior.
This way, we can model, e.g., user interaction or internet connections that might fail.

Consider the following example, where one subprocess non-deterministically makes a selection on another subprocess' branch:
\begin{align}
    P_9 \deq \pRes{xy} \big( \pBra* x > {\{ \sff{left} : \0 , \sff{right} : \0 \}} \| ( \pSel* y < \sff{left} ; \0 + \pSel* y < \sff{right} ; \0 ) \big)
    \label{eq:intro:P9}
\end{align}
The left subprocess of~$P_9$ offers on $x$ a choice between branches labeled \sff{left} and \sff{right}.
Unlike before, the right subprocess does not hard-code a selection on $y$; instead, it offers a non-deterministic choice, denoted `$\,+\,$', between selecting \sff{left} and \sff{right}.
As a result, the behavior of $P_9$ is to either communicate the label \sff{left} or \sff{right}.

This form of non-deterministic choice is difficult to reconcile with the linear logic foundations of the session-typed \picalc.
This is because linear logic is strict in the linear treatment of resources.
In contrast, non-determinism might lead to the discarding of resources, similar to how unsafe deterministic branches might discard resources.
For example, let us adapt process~$P_5$~\eqref{eq:intro:P5} with non-determinism:
\[
    P'_5 \deq \pRes{xy} \big( \pIn x(v) ; \0 \| ( \pOut* y[u] ; \0 + \0 ) \big)
\]
The right subprocess of $P'_5$ might or might not send on $y$; the outcome is non-deterministic.
However, the left subprocess of $P'_5$ always intends to receive on $x$.
Hence, the non-de\-ter\-mi\-nis\-tic choice in $P'_5$ might lead to unexpected, unsafe behavior.

Another issue in reconciling non-determinism with logical foundations is the lack of \emph{confluence}.
Confluence means that different paths of computation always end in the same result.
For example, resolving addition is confluent: we have $3+1+4 = 4+4 = 8$ but also $3+1+4 = 3+5 = 8$.
On the contrary, non-deterministic choice in the \picalc is not confluent.
For example, consider an adaptation of process~$P_9$~\eqref{eq:intro:P9} where the branching subprocess makes a different selection on a new endpoint $z$ in each branch:
\begin{align}
    P_{10} \deq \pRes{xy} \big( \pBra* x > {\{ \sff{left} : \pSel* z < \sff{left} ; \0 , \sff{right} : \pSel* z < \sff{right} ; \0 \}} \| ( \pSel* y < \sff{left} ; \0 + \pSel* y < \sff{right} ; \0 ) \big)
    \label{eq:intro:P10}
\end{align}
In the left subprocess of $P_{10}$, the label received on $x$ determines the label sent on $z$.
But this all depends on the non-deterministic choice in the right subprocess on which label is sent on~$y$.
Hence, the non-deterministic choice in the right subprocess leads to different outcomes, breaking confluence:
\begin{align}
    P_{10} &\redd \pSel* z < \sff{left} ; \0
    & &\text{or}
    &
    P_{10} &\redd \pSel* z < \sff{right} ; \0
    \label{eq:intro:P10Redd}
\end{align}

To reconcile non-determinism with confluence and Curry-Howard interpretations of linear logic as session type systems for the \picalc, Caires and Pérez~\cite{conf/esop/CairesP17} introduced a new non-deterministic (internal) choice operator that does not \emph{commit} to a choice: $P \oplus* Q$.
This way, if $P \redd P'$ and $Q \redd Q'$, then $P \oplus* Q \redd P' \oplus* Q'$: both branches behave independently.
Let us further adapt process~$P_{10}$~\eqref{eq:intro:P10} to illustrate the effect of this new operator:
\begin{align*}
    P_{11} &\deq \pRes{xy} \big( \pBra* x > {\{ \sff{left} : \pSel* z < \sff{left} ; \0 , \sff{right} : \pSel* z < \sff{right} ; \0 \}} \| ( \pSel* y < \sff{left} ; \0 \oplus* \pSel* y < \sff{right} ; \0 ) \big)
    \\
    &\redd \pSel* z < \sff{left} \oplus* \pSel* z < \sff{right}
\end{align*}
Thus, the non-deterministic choice in the right subprocess of $P_{11}$ does not commit to a choice; instead, after the communication of the label between $x$ and $y$, which label to send on $z$ remains a non-deterministic choice.

Although Caires and Pérez's non-deterministic choice operator maintains strong ties with linear logic foundations through confluence, it arguably is not an adequate representation of real-world non-determinism.
In practice, e.g., communication channels may randomly and unrecoverably fail, or a user may select an operation that is supposed to have a different outcome than another operation.
That is, realistic non-determinism favors \emph{commitment} over confluence.
This leads me to the following research question:

\requND*

\noindent
In \Cref{c:clpi}, I propose an approach to answer this question.

To be precise, my answer is the design of a session-typed \picalc \clpi with a new operator for non-deterministic choice, denoted `$\,\nd\,$'.
The calculus \clpi is equipped with two different semantics that explore the tension between confluence and commitment.

Opposed to Caires and Pérez's $\,\oplus*\,$ which sits at the extreme of confluence, \clpi has a so-called \emph{eager} semantics that sits at the extreme of commitment.
This way, the new operator $\,\nd\,$ behaves similarly to the traditional non-deterministic choice $\,+\,$.
That is, given
\begin{align}
    P_{12} \deq \pRes{xy} \big( \pBra* x > {\{ \sff{left} : \pSel* z < \sff{left} ; \0 , \sff{right} : \pSel* z < \sff{right} ; \0 \}} \| ( \pSel* y < \sff{left} ; \0 \nd \pSel* y < \sff{right} ; \0 ) \big),
    \label{eq:intro:P12}
\end{align}
compare the following to~\eqref{eq:intro:P10Redd}:
\begin{align}
    P_{12} &\redd \pSel* z < \sff{left} ; \0
    & &\text{or}
    &
    P_{12} &\redd \pSel* z < \sff{right} ; \0
    \label{eq:intro:P12Redd}
\end{align}

Perhaps more interesting is \clpi's so-called \emph{lazy} semantics that sits right between confluence and commitment.
To be more precise, the lazy semantics postpones commitment until truly necessary, i.e., until a choice has to be made.
For example, consider the following extension of $P_{12}$~\eqref{eq:intro:P12} where the selections on $y$ are now prefixed by a send:
\[
    P_{13} \deq \pRes{xy} \begin{array}[t]{@{}l@{}}
        \big( \pIn x(u) ; \pBra* x > {\{ \sff{left} : \pSel* z < \sff{left} ; \0 , \sff{right} : \pSel* z < \sff{right} \}}
        \\
        {} \| ( \pOut* y[w] ; \pSel* y < \sff{left} ; \0 \nd \pOut* y[w] ; \pSel* y < \sff{right} ) \big)
    \end{array}
\]
Notice that both branches of the non-deterministic choice in the right subprocess of~$P_{13}$ start with a send on $y$.
Hence, the lazy semantics postpones choosing a branch, resulting in $P_{13} \redd P_{12}$.
Since in $P_{12}$ there is an actual choice, the lazy semantics commits as in~\eqref{eq:intro:P12Redd}.
Compare this to how the eager semantics immediately picks a branch in $P_{14}$:
\begin{align*}
    P_{14} &\redd \pRes{xy} ( \pBra* x > {\{ \sff{left} : \pSel* z < \sff{left} ; \0 , \sff{right} : \pSel* z < \sff{right} ; \0 \}} \| \pSel* y < \sff{left} )
    \intertext{or}
    P_{14} &\redd \pRes{xy} ( \pBra* x > {\{ \sff{left} : \pSel* z < \sff{left} ; \0 , \sff{right} : \pSel* z < \sff{right} ; \0 \}} \| \pSel* y < \sff{right} )
\end{align*}
Hence, under the eager semantics the initial communication on $x$ and $y$ determines later choices, whereas under the lazy semantics these choices are left open.

Under both eager and lazy semantics, \clpi guarantees protocol fidelity, communication safety, and deadlock-freedom by typing.
\Cref{c:clpi} also briefly discusses a faithful translation of a \lamcalc with non-determinism, as well as interesting formal comparisons between the eager and lazy semantics.

\subsubsection{Alternative Logical Foundations: Bunched Implications}
\label{s:intro:contrib:pi:piBI}

The logical foundations of session types in linear logic due to Caires and Pfenning~\cite{conf/concur/CairesP10} are an important theme in this dissertation.
Its fine-grained control of linear resources (that must be used exactly once) fits very well with the ideas behind session types for the \picalc.
Many follow-up works have established the robustness of this Curry-Howard isomorphism (see, e.g., \cite{journal/jfp/Wadler14,conf/csl/DeYoungCPT12,journal/ic/PerezCPT14,conf/icfp/BalzerP17}).

However, linear logic is not the only logic that is suitable for reasoning about resources for concurrent programming.
In particular, O'Hearn and Pym's logic of bunched implications (\BI)~\cite{journal/bsl/OHearnP99} is another substructural logic with fine-grained resource control.
Instead of a focus on \emph{number-of-uses} as in linearity, \BI focuses on \emph{ownership} and the \emph{provenance} (i.e., origin) of resources.
Nonetheless, the inference systems of \BI and linear logic are rather similar.
Hence, this leads me to the next research question:

\requBI*

\Cref{c:piBI} gives an answer to this question by presenting \pibi, a variant of the \picalc with a Curry-Howard interpretation of \BI as session type system.
To give an overview of the design of \pibi, let me give some more context about the logic \BI.

Linear logic has one connective for each way of building propositions: one conjunction, one implication, and so forth.
In contrast, \BI has two connectives for each way of building propositions, that can be freely combined: additive and multiplicative.
For example, \BI has two forms of conjunction: `$\,\wedge\,$' (additive) and `$\,\sep\,$' (multiplicative).

Linear logic propositions all follow the same structural principles: linearity dictates that resources may not be contracted (duplicated) or weakened (discarded).
The case for \BI is more complicated: additive and multiplicative connectives follow different structural principles, i.e., additively combined resources may be contracted and weakened, but multiplicatively combined resources may not.
This leads to the fine-grained control of ownership and provenance of resources, though at this point it will probably not be clear how this translates to session types and message-passing.

The connectives (additive and multiplicative) of \BI are very similar to those of linear logic, and so are their associated inference rules.
It is thus unsurprising that the interpretation of \BI's inference rules for connectives as session typing rules is very similar to the interpretation of linear logic's inference rules.
The true challenge in designing \pibi was how to interpret the structural rules for the contraction and weakening of additively combined resources.

The result is a new construct called ``spawn''.
Spawn is related to programming principles for resource aliasing and pointers.
It allows processes to safely duplicate and discard sessions.

As an example of session duplication, consider the following process that receives on two endpoints: $\pIn y_1(v) ; \pIn y_2(w) ; \0$
Now suppose we only have a process available that sends on a single endpoint: $\pOut* x[z] ; \0$.
We can use the spawn construct to connect both endpoints of the former process to the single endpoint of the latter by contracting the two endpoints into one:
\[
    \pSpw [y->y_1,y_2] ; \pIn y_1(v) ; \pIn y_2(w) ; \0.
\]
This process is prefixed with the spawn construct `$\,\pSpw [\ldots]\,$', which in this case says that $y$ will be used twice, as $y_1$ and as $y_2$.
Now we can connect the single endpoint $y$ to $x$; the result is that the spawn construct requests two copies of the send on $x$:
\begin{align*}
    & \pRes{xy} ( \pOut* x[z] ; \0 \| \pSpw [y->y_1,y_2] ; \pIn y_1(v) ; \pIn y_2(w) ; \0 )
    \\
    {} \redd {} & \pRes{x_1y_1} ( \pOut* x_1[z_1] ; \0 \| \pRes{x_2y_2} ( \pOut* x_2[z_2] ; \0 \| \pIn y_1(v) ; \pIn y_2(w) ; \0 ) )
\end{align*}
Thus, the send on $x$ is copied twice, renamed and connected appropriately.

To illustrate the discarding of sessions, let us recall process~$P_5$~\eqref{eq:intro:P5}:
\[
    \eqIntroPfive*
\]
This process is considered unsafe, because the send on $y$, expected by the receive on $x$, may not be available.
The \sff{quit}-branch in the right subprocess of~$P_5$ can be seen as the discarding of the session on $x$ and $y$.
Hence, we can use the spawn construct to make a safe version of~$P_5$, this time pointing~$y$ to nothing to weaken it, denoted $\pSpw [y->\emptyset] ; \0$.
As a result, if the \sff{quit}-branch is taken, the receive on $x$ is discarded:
\begin{align*}
    & \pRes{zw} ( \pSel* z < \sff{quit} ; \0 \| \pRes{xy} ( \pIn x(v) ; \0 \| \pBra* w > {\{ \sff{send} ; \pOut* y[u] ; \0 , \sff{quit} : \pSpw [y->\emptyset] ; \0 \}} ) )
    \\
    {} \redd {} & \pRes{xy} ( \pIn x(v) ; \0 \| \pSpw [y->\emptyset] ; \0 )
    \\
    {} \redd {} & \0
\end{align*}
Thus, adding the spawn construct to~$P_5$ makes the process safe.

Duplicating and discarding processes requires care: if we duplicate/discard a session provided by a process that relies on further sessions, those sessions become unsafe.
Consider the following example, where the right subprocess requests two copies of the left subprocess, which relies on further sessions:
\begin{align*}
    & \pRes{xy} ( \pSel* u < \ell ; \pOut* x[z] ; \0 \| \pSpw [y->y_1,y_2] ; \pIn y_1(v) ; \pIn y_2(w) ; \0 )
    \\
    {} \redd {} & \pRes{x_1y_1} ( \pSel* u < \ell ; \pOut* x_1[z_1] ; \0 \| \pRes{x_2y_2} ( \pSel* u < \ell ; \pOut* x_1[z_2] ; \0 \| \pIn y_1(v) ; \pIn y_2(w) ; \0 ) )
\end{align*}
Where initially there was one selection on $u$, there are now two.
The initial process would need to connect $u$ to an endpoint doing a single branch, so after the duplication this session becomes unsafe.
The solution is so-called \emph{spawn propagation}: we create a new spawn construct that further duplicates/discards sessions on which the duplicated/discarded process relies.
This way, we make the example above safe by propagating the spawn:
\begin{align*}
    & \pRes{xy} ( \pSel* u < \ell ; \pOut* x[z] ; \0 \| \pSpw [y->y_1,y_2] ; \pIn y_1(v) ; \pIn y_2(w) ; \0 )
    \\
    {} \redd {} & \pSpw [u->u_1,u_2] ; \pRes{x_1y_1} ( \pSel* u_1 < \ell ; \pOut* x_1[z_1] ; \0 \| \pRes{x_2y_2} ( \pSel* u_2 < \ell ; \pOut* x_1[z_2] ; \0 \| \pIn y_1(v) ; \pIn y_2(w) ; \0 ) )
\end{align*}
Hence, if we initially connect $u$ to an endpoint doing a single branch, the propagated spawn will duplicate the connected process, making the session safe.

The calculus \pibi enjoys the correctness properties we desire: session fidelity, communication safety, and deadlock-freedom.
In the case of \pibi, deadlock-freedom is guaranteed because it uses the \BI Rule~\ruleLabel{cut} (cf.\ the discussion of the linear logic Rule~\ruleLabel{cut} in the introduction of \Cref{s:intro:contrib:pi}).
Additionally, \pibi is weakly normalizing, meaning that any well-typed process with all endpoints connected can always complete all its communications until it is done.

\subsection{Binary Session Types and Functional Programming (\crtCref{p:lambda})}
\label{s:intro:contrib:lambda}

The abstraction of message-passing programs introduced in \Cref{s:intro:contrib:pi} might feel somewhat foreign for those used to real-world programming languages.
In this section I introduce \Cref{p:lambda}, which discusses message-passing aspects of ``functional'' or ``sequential'' programming.
In particular, I will discuss variants of the well-known \lamcalc.

The \lamcalc is a simple calculus of \emph{functions}.
For example, the term $(\lam x . x)~y$ denotes a function that takes a parameter $x$, (denoted $\lam x \ldots$) and returns it, and applies it to the variable $y$.
As a result, we substitute the functions' parameter $x$ for $y$ and return the function's body: 
\[
    (\lam x . x)~y \redd y.
\]

As is the case for the \picalc, the \lamcalc can be extended with constructs to model all sorts of programming features.
Traditionally, extensions of the \lamcalc have been used as prototype programming languages to illustrate the features of newly introduced programming formalisms.
A (faithful) translation from the extended \lamcalc into the new formalism then shows that its features are suitable for integration in real programming.

The idea of a \lamcalc as a prototype programming language has always been a tradition in the line of work based on the \picalc, already starting at the conception of the \picalc by Milner \etal~\cite{journal/ic/MilnerPW92}.
This way, a (faithful) translation from a \lamcalc to a \picalc serves as the ultimate \emph{litmus test} of the capabilities of the \picalc.

\subsubsection{Asynchronous Message-passing and Cyclically Connected Threads}
\label{s:intro:contrib:lambda:LASTn}

The \lamcalc does not have to be limited to sequential computation.
In particular, there are many works that incorporate (session-typed) message-passing in functional programming formalisms, a line of work starting with~\cite{report/GayVR03,conf/concur/VasconcelosRG04,conf/padl/NeubauerT04,journal/tcs/VasconcelosGR06,report/GayV07}.
The first account is by Gay and Vasconcelos~\cite{journal/jfp/GayV10}.
They introduced a \lamcalc, referred to in this thesis as \LAST, with threads that are executed concurrently and communicate through message-passing over channels.

Message-passing in \LAST is asynchronous: sends and selects place messages in buffers without blocking their continuations, and receives and branches read these messages from the buffers.
Moreover, message-passing in \LAST is typed with linear session types.
As a result, well-typed \LAST terms enjoy session fidelity and communication safety, but not deadlock-freedom as threads may be connected cyclically and there is no mechanism to detect and reject circular dependencies (cf.\ \Cref{s:intro:contrib:pi}).

In his account of classical linear logic as a Curry-Howard foundation for the session-typed \picalc \CP, Wadler introduced a variant of \LAST called \GV (for Good Variation)~\cite{conf/icfp/Wadler12,journal/jfp/Wadler14}.
\GV features synchronous communication, and, based on linear logic's Rule~\ruleLabel{cut}, does not permit cyclically connected threads.
This way, \GV served as a litmus test for Wadler's \CP through a typed translation from the former to the latter.

Many works followed suit, accompanying new session-typed variants of the \picalc with variants of \LAST and providing a translation.
In particular, Kokke and Dardha designed \PGV (Priority \GV)~\cite{conf/forte/KokkeD21}, based on Dardha and Gay's \PCP~\cite{conf/fossacs/DardhaG18} (discussed in \Cref{s:intro:contrib:pi:APCP}).
\PGV extends Wadler's \GV with cyclically connected threads, and adds priority annotations and conditions to its session type system.
This way, well-typed \PGV terms are free from circular dependencies and thus deadlock-free.

Given the step from \PCP to \APCP (from synchronous to asynchronous message-passing) described in \Cref{s:intro:contrib:pi:APCP}, it is only natural to wonder whether something similar can be done from \PGV.
This leads me to my next research question:

\requLASTn*

The answer to this question is detailed in \Cref{c:LASTn}.
However, the answer does not entirely follow the step from \PCP to \APCP as you might expect, i.e., a calculus A\PGV (Asynchronous \PGV).
Instead, the result is the calculus \LASTn (call-by-name \LAST).

Besides not having priority annotations in its session type system (something I will come back to soon), \LASTn adopts a different semantics than \PGV and its predecessors.
To make this more precise, let me digress into the usual strategies for function application in the \lamcalc: call-by-value and call-by-name.
Under call-by-value semantics ($\reddV$), functions may only be applied once its parameters are fully evaluated; for example:
\[
    (\lam x . x)~\big((\lam y . y)~z\big) \reddV (\lam x . x)~z \reddV z
\]
That is, the right function application needs to be evaluated before applying the left function.
In contrast, under call-by-name semantics ($\reddN$), a function's parameters may not be evaluated until they have been substituted into the function's body; for example:
\[
    (\lam x . x)~\big((\lam y . y)~z\big) \reddN (\lam y . y)~z \reddN z
\]
That is, the left function needs to be applied before evaluating the right function application.

Where \PGV and its predecessors adopt call-by-value semantics, \LASTn adopts call-by-name semantics; this explains the name ``\LASTn'': it is a call-by-name variant of \LAST.
The reason for this important design decision is rooted in a desire for a strong connection with \APCP as stated in \Cref{requ:LASTn}; I will get back to this after discussing message-passing in \LASTn.

Inspired by \LAST, \LASTn features threads that communicate on channel endpoints connected by message buffers.
In the following, two threads exchange a term on a channel:
\[
    \pRes{x\tBfr{\epsi}y} ( \tSend (M,x) \prl \tRecv y ) \redd \pRes{x\tBfr{M}y} ( x \prl \tRecv y ) \redd \pRes{x\tBfr{\epsi}y} ( x \prl (M,y) )
\]
Here are two parallel threads (separated by `$\,\prl\,$').
The endpoints $x$ and $y$ are connected by a buffer (denoted $\pRes{x\tBfr{\ldots}y}$) that is empty (denoted $\epsilon$).
The left thread sends on $x$ some term~$M$, which ends up in the buffer.
In a following step, the right thread receives $M$ on $y$.

As I mentioned before, the session type system of \LASTn does not include priorities (as in \PGV).
Hence, \LASTn ensures session fidelity and communication safety by typing, but permits circular dependencies and so does not guarantee deadlock-freedom.
This was a conscious design decision: a faithful translation from \LASTn to \APCP allows us to leverage \APCP's priorities to recover deadlock-freedom for \LASTn indirectly, while keeping \LASTn and its type system relatively uncomplicated.

The design of \LASTn and its call-by-name semantics are directly inspired by a translation from \LASTn to \APCP.
The translation is faithful in that it \emph{preserves typing} and is \emph{operationally correct}.
The former means that the types of a source term are themselves translated to \APCP types, albeit without priority annotations.
The latter is an important property of translations coined by Gorla~\cite{journal/ic/Gorla10} that signifies that the behavior of translated processes concurs precisely with the behavior of their source terms.

Thus, the translation from \LASTn to \APCP preserves typing modulo priority annotations.
Yet, it is possible to annotate the types of translated \APCP processes with priorities.
If the associated conditions on priorities are satisfiable, then the translated \APCP processes are deadlock-free.
The operational correctness of the translation then allows us to infer that the source \LASTn terms of these processes are also deadlock-free.
Hence, deadlock-freedom does hold for the subset of \LASTn terms that translate to \APCP processes with satisfiable priority annotations.
There is strong evidence that such an approach to deadlock-freedom would not be possible if \LASTn had call-by-value semantics; hence, the call-by-name semantics.

\subsubsection{Bunched Functions: a Litmus Test for \texorpdfstring{\pibi}{piBI}}
\label{s:intro:contrib:lambda:alphalambda}

The message-passing calculus \pibi (described in \Cref{s:intro:contrib:pi:piBI}) is not the first Curry-Howard interpretation of \BI.
Hinted at in O'Hearn and Pym's first introduction of \BI~\cite{journal/bsl/OHearnP99}, O'Hearn introduces the \alamcalc~\cite{journal/jfp/OHearn03}.
The \alamcalc is a variant of the \lamcalc derived from \BI's natural deduction system, where propositions are functional types, proofs are terms, and proof normalization is $\beta$-reduction (i.e., function application and variable substitution).

The \alamcalc is a simple calculus with only function abstraction and application.
However, what sets the \alamcalc apart is that it has two kinds of functions.
Consider:
\begin{align}
    \lam x . \alpha f . (f \at x) \at x
    \label{eq:intro:alam:unusual}
\end{align}
Here, $\lam x \ldots$ denotes a ``linear'' function and $\alpha f \ldots$ an ``unrestricted'' one; the $\at$-symbol denotes application of an unrestricted function.
O'Hearn~\cite{journal/jfp/OHearn03} calls this example ``unusual''.
Indeed, especially to those familiar with variants of the \lamcalc with linearity, it is: the argument of a linear function `$\,x\,$' is used twice.

However, the example above only seems unusual under the interpretation of linearity that says that resources must be used exactly once.
Hence, the quotation marks around ``linear'' above: in the \alamcalc, the restriction of linear functions is not on how many times their arguments may be used, but on that they may not be applied to duplicated arguments.
To contrast, consider the following example:
\[
    \alpha f . \lam x . (x~f)~f
\]
Here, the ``linear'' and the ``unrestricted'' functions have swapped places compared to the example above~\eqref{eq:intro:alam:unusual}.
Under the usual interpretation of linearity, this example is fine: the argument `$\,x\,$' of the linear function is used exactly once.
However, it cannot be typed in the \alamcalc, because the linear function is applied to a duplicated resource `$\,f\,$'.

These examples show that the \alamcalc is a useful prototyping language for illustrating \BI's approach to fine-grained resource management.
Hence, the \alamcalc would be an excellent candidate for a translation to \pibi.
As discussed in the introduction to \Cref{s:intro:contrib:lambda}, such a translation would serve as a litmus test for the design and capabilities of \pibi, especially because the \alamcalc has not been designed with translations to \pibi in mind.
This leads me to the next research question:

\requAL*

\Cref{c:alphalambda} answers this question positively, by means of a typed translation from the \alamcalc to \pibi.
The translation is rather straightforward: it follows canonical translations of proofs in natural deduction into sequent calculi (cf., e.g., \cite[Section~6.3]{book/Pym13}), very similar to the well-known translations from variants of the \lamcalc to variants of the \picalc by Milner~\cite{journal/mscs/Milner92}, Sangiorgi and Walker~\cite{book/SangiorgiW03}, and Wadler~\cite{journal/jfp/Wadler14}.

The challenging part of the translation is handling the contraction (duplication) and weakening (discarding) of variables.
Without it, the translation would not really be a litmus test for \pibi.
Indeed, in \pibi, contraction and weakening are interpreted with the spawn-construct.
So, the real litmus test here is whether contraction and weakening in the \alamcalc translates faithfully to applications of spawn in \pibi.
Fortunately, the result is uncomplicatedly positive.

The translation presented in \Cref{c:alphalambda} is faithful.
This means that is satisfies two important properties.
Firstly, the translation preserves typing: translated \pibi processes are typed identically to their respective source \alamcalc terms (compare this to the translation introduced in \Cref{s:intro:contrib:lambda:LASTn}, where \LASTn types are translated to suitable \APCP types).
Secondly, the translation is operationally correct, i.e., the behavior of translated processes concurs precisely with that of their respective source terms.

\subsection{Multiparty Session Types (\crtCref{p:mpst})}
\label{s:intro:contrib:mpst}

Thus far, I have talked about \emph{binary} session types, that describe communication protocols between pairs of participants.
In practice, distributed systems often comprise multiple components, whose interactions are more intertwined than faithfully representable by binary protocols.
Hence, \Cref{p:mpst} of my dissertation concerns \emph{multiparty session types} (\MPSTs).

Introduced by Honda \etal~\cite{conf/popl/HondaYC08,journal/acm/HondaYC16}, \MPST is a theory for communication protocols with two \emph{or more} participants.
Such protocols are usually expressed as \emph{global types}, describing the interactions between participants \emph{from a vantage point}.

Let us consider a variant of the well-known example called ``The Two-buyer Protocol''; it describes how Alice ($a$) and Bob ($b$) together buy a book from Seller ($s$).
First, $a$ sends to $s$ the book's title.
Then, $s$ sends to $a$ and $b$ the book's price.
Finally, $b$ sends to $s$ a choice between buying the book or quiting, after which the protocol ends.
We can formalize this protocol as the following global type:
\begin{align}
    G_{\sff{2b}} \deq a {!} s (\msg \sff{title}<\sff{str}>) . s {!} a (\msg \sff{price}<\sff{int}>) . s {!} b (\msg \sff{price}<\sff{int}>) . b {!} s \{ \sff{buy} . \tEnd , \sff{quit} . \tEnd \}
    \label{eq:intro:G2b}
\end{align}
Here, $a {!} s (\ldots)$ denotes that $a$ sends to $s$ a messages.
Messages are of the form $\msg \sff{label}<\sff{type}>$: they carry a label and a value of the given type.
Another form of message is choice, denoted $b {!} s \{ \ldots \}$, where $b$ sends to $s$ either of the given labels, and the protocol proceeds accordingly.
The type $\tEnd$ denotes the end of the protocol.

As originally intended by Honda \etal, \MPSTs are designed to verify protocol conformance in practical settings, where the roles of protocol participants are implemented \emph{distributedly} across networks of components that communicate \emph{asynchronously}.
Here, distributed means that the theory makes no assumptions about how exactly components are connected: whether they communication directly in pairs or through a centralized unit of control does not matter.
Moreover, asynchrony means that it should be possible to establish protocol conformance without requiring components to wait for their messages to be received.

\subsubsection{A Relative Perspective of Multiparty Session Types}
\label{s:intro:contrib:mpst:relProj}

A significant benefit of the distributed and asynchronous nature of \MPSTs is that correctness verification is \emph{compositional}:  the correctness of the whole network of components can be guaranteed by verifying the correctness of each component separately.
Let me refer to a component implementing/modeling the role of a specific protocol participant's role as simply a ``participant''.

To verify the behavior of a participant we need to focus on their precise contributions in the overal protocol.
Using global types directly make this an overly complicated task, as they contain ``noise'' from other participants.
It therefore makes sense to instead work with a protocol that provides a perspective of the overal protocol that is \emph{local} to the participant.

Honda \etal~\cite{conf/popl/HondaYC08,journal/acm/HondaYC16} obtain such a local perspective by a \emph{local projection} of a global type onto a participant, resulting in a \emph{local type}.
For example, the projection of~$G_{\sff{2b}}$~\eqref{eq:intro:G2b} onto participant $a$ is as follows:
\[
    a {!} s (\msg \sff{title}<\sff{str}>) . a {?} s (\msg \sff{price}<\sff{int}>) . \tEnd
\]
Here, `$\,{?}\,$' denotes receiving.
The role of $a$ in $G_{\sff{2b}}$ is thus to send to $s$ a book's title, then to receive from $s$ the book's price, and then their role in the protocol ends.

Local projection gets more complicated as protocols get more interesting.
Consider the following adaptation of $G_{\sff{2b}}$~\eqref{eq:intro:G2b}, where $b$ informs $a$ of their choice after informing~$s$:
\begin{align}
    G'_{\sff{2b}} \deq a {!} s (\msg \sff{title}<\sff{str}>) . s {!} a (\msg \sff{price}<\sff{int}>) . s {!} b (\msg \sff{price}<\sff{int}>) . b {!} s \left\{
        \begin{array}{@{}l@{}}
            \sff{buy} . b {!} a (\sff{buy}) . \tEnd ,
            \\
            \sff{quit} . b {!} a (\sff{quit}) . \tEnd
        \end{array}
    \right\}
    \label{eq:intro:G2bp}
\end{align}
The local projection of $G'_{\sff{2b}}$ onto $a$ requires care, because the label they are supposed to receive from $b$ at the end of the protocol depends on the message sent by $b$ to $s$ before.
That is, $a$'s protocol depends on a choice outside of their protocol, referred to as a \emph{non-local} choice.

Honda \etal's local projection cannot deal with global types such as $G'_{\sff{2b}}$, because it requires the projections of the branches of non-local choices to be indistinguishable; this de facto rules out any non-local choices.
To support more interesting protocols, Yoshida \etal~\cite{conf/fossacs/YoshidaDBH10} introduced a variant of local projection with an operator that \emph{merges} the branches of non-local choices as long as they all are receives from the same participants.
Using merge, the local projection of $G'_{\sff{2b}}$ onto $a$ is as follows:
\[
    a {!} s (\msg \sff{title}<\sff{str}>) . a {?} s (\msg \sff{price}<\sff{int}>) . a {?} b \{ \sff{buy} . \tEnd , \sff{quit} . \tEnd \}
\]
This local type is quite sensible: $a$ can be ready to receive \sff{buy} or \sff{quit}, independent of which label $b$ sent to $s$ before.

Let us now consider a perhaps even more interesting adaptation of $G_{\sff{2b}}$~\eqref{eq:intro:G2b}, where $a$ sends their address to $s$ when $b$ chooses to buy:
\begin{align}
    G''_{\sff{2b}} \deq a {!} s (\msg \sff{title}<\sff{str}>) . s {!} a (\msg \sff{price}<\sff{int}>) . s {!} b (\msg \sff{price}<\sff{int}>) . b {!} s \left\{
        \begin{array}{@{}l@{}}
            \sff{buy} . a {!} s (\msg \sff{addr}<\sff{str}>) . \tEnd ,
            \\
            \sff{quit} . \tEnd
        \end{array}
    \right\}
    \label{eq:intro:G2bpp}
\end{align}
This time, merge cannot resolve $b$'s non-local choice in $a$'s protocol, since it cannot reconcile a send and the end of the protocol.
This is unfortunate, because $G''_{\sff{2b}}$ represents a protocol that is very useful in practice.

It is apparent that global types can express protocols that cannot be handled by \MPST theories based on local projection (with or without merge).
Hence, such theories rely on \emph{well-formed} global types, defined by local projectability onto all participants.
A large portion of the literature on \MPSTs accepts the class of well-formed global types induced by local projection with merge as the standard, even though this class does not include practical protocol such as $G''_{\sff{2b}}$~\eqref{eq:intro:G2bpp}.

Scalas and Yoshida argue to reprimand this lack of expressivity by getting rid of global types and only working with local types (and their compatibility) directly~\cite{conf/popl/ScalasY19}.
Their solution allows for the verification of a much wider range of multiparty interactions, but their severance with global types may not be of interest for distributed implementations of multiparty protocols in practice.
This leads me to my next research question:

\requRelProj*

\ExecuteMetaData[c.relProj.tex]{intro:relProj:diagram}

\Cref{c:relProj} answers this question by introducing a new local perspective of global types: instead of individual participants' protocols, it considers protocols between \emph{pairs} of participants.
Such protocols are represented by so-called \emph{relative types}, which are obtained from global types through \emph{relative projection}.
\Cref{f:intro:relProj:diag} compares how local and relative projection decompose a global type with three participants, and how the resulting types are used to verify the behavior of three processes implementing the roles of the global type's participants.

A major advantage of relative types is their direct compatibility with binary session types, in contrast to local types which need another level of projection to be compatible.
However, the distinctive feature of relative types is their explicit treatment of non-local choices as \emph{dependencies}.
That is, if a global type contains a non-local choice, relative projection will encode the non-local choice as a dependency message in the relative type.

To illustrate relative types and their explicit dependency messages, let us consider the relative projection of $G''_{\sff{2b}}$~\eqref{eq:intro:G2bpp} onto $a$ and $s$ (i.e., the protocol representing the interactions in $G''_{\sff{2b}}$ between participants $a$ and $s$):
\[
    a {!} s (\msg \sff{title}<\sff{str}>) . s {!} a (\msg \sff{title}<\sff{str}>) . (s {?} b) {!} a \left\{
        \begin{array}{@{}l@{}}
            \sff{buy} . a {!} s (\msg \sff{addr}<\sff{str}>) . \tEnd ,
            \\
            \sff{quit} . \tEnd
        \end{array}
    \right\}
\]
Here, `$\,(s {?} b) {!} a\,$' denotes that $s$ receives from $b$ a choice between labels \sff{buy} and \sff{quit}, and that $s$ needs to forward the received choice to $a$.
This way, $a$ becomes aware of the branch of the non-local choice by $b$ in which they are, allowing them to proceed correctly according to the global protocol.

\Cref{c:relProj} then defines a new class of well-formed global types based on relative projection.
This class includes $G_{\sff{2b}}$~\eqref{eq:intro:G2b}, $G'_{\sff{2b}}$~\eqref{eq:intro:G2bp}, and $G''_{\sff{2b}}$~\eqref{eq:intro:G2bpp}.
However, there are global types that are well-formed under local projection that are not well-formed under relative projection, though \Cref{c:relProj} discusses how such protocols can be fixed for relative projection to support them.
The final two contributions of my dissertation, introduced next, rely heavily on the theory of relative projection defined in \Cref{c:relProj}.

\subsubsection{Analyzing Distributed Implementations of Multiparty Session Types}
\label{s:intro:contrib:mpst:mpstAPCP}

Studying how \MPSTs can be used in the verification of model implementations of protocols is important for bringing \MPSTs closer to practical application.
The literature on this subject is vast, with approaches from many angles.
The most salient approach is based on variants of the \picalc with type systems that use global and local types directly, such as in Honda \etal's initial development of \MPSTs~\cite{conf/popl/HondaYC08,journal/acm/HondaYC16}.

Type systems tailored for \MPSTs usually guarantee deadlock-freedom by typing, as global types do not induce communication patterns that can deadlock.
This is a major advantage of this approach, but there are downsides.
First and foremost, proving correctness guarantees for such systems is a complex task, as it requires reasoning on global and local levels at the same time.
As a result, extending existing systems with practical features such as error handling is quite an endeavor.
Second, \MPSTs are intended for distributed systems, but such type systems often abstract away from network configuration.
This makes it unclear how lower level aspects of network configuration may affect correctness guarantees.

Another approach is to connect \MPSTs with binary session types.
If done correctly, this enables the use of a vast range of binary session type systems with a multitude of features and correctness guarantees for the analysis of model implementations of multiparty protocols.
The idea, coined by Caires and Pérez~\cite{conf/forte/CairesP16} and by Carbone \etal~\cite{conf/concur/CarboneMSY15,conf/concur/CarboneLMSW16,journal/ai/CarboneMSY17}, is to translate local types to binary session types.
This makes it possible to use binary session type systems to guarantee correctness properties of models of protocol participants.
Moreover, meta-theoretical results guarantee that the entire system conforms to the global protocol.

The approaches by Caires and Pérez~\cite{conf/forte/CairesP16} and Carbone \etal~\cite{conf/concur/CarboneMSY15,conf/concur/CarboneLMSW16,journal/ai/CarboneMSY17} have a caveat that is pivotal to their practical applicability.
Both works rely on binary session type systems derived from linear logic, which do not allow processes to be cyclically connected (which indirectly guarantees deadlock-freedom, cf.\ \Cref{s:intro:contrib:pi:APCP}).
Then, the only way to compose processes modeling protocol participants is to connect them all to an additional process that forwards messages between the participants.
Such an additional process (called \emph{medium} in~\cite{conf/forte/CairesP16} and \emph{arbiter} in~\cite{conf/concur/CarboneMSY15,conf/concur/CarboneLMSW16,journal/ai/CarboneMSY17}) can be generated from a global type, ensuring that it distributes messages correctly.

Requiring an additional process to connect the participants and ``orchestrate'' their interaction defies the distributedness of multiparty systems, while it is essential for implementations of \MPSTs and real-world multiparty systems to be distributed systems.
Hence, it is important to come up with solutions that leverage binary session types while preserving distributedness.
Moreover, important correctness features such as deadlock-freedom and protocol conformance should be supported.
This leads me to the next research question:

\requMpstAPCP*

\Cref{c:mpstAPCP} argues that \APCP (introduced in \Cref{s:intro:contrib:pi:APCP}) is up to the task: it guarantees deadlock-freedom for cyclically connected, thus enabling the analysis of distributed model implementations of \MPSTs.
Using \APCP overcomes another shortcoming of~\cite{conf/forte/CairesP16,conf/concur/CarboneMSY15,conf/concur/CarboneLMSW16,journal/ai/CarboneMSY17}: their systems rely on synchronous communication, whereas \APCP's asynchronous communication preserves the asynchrony inherent in \MPSTs and real-world distributed systems.
In fact, solving the shortcomings of these prior works was a major motivation for the development of \APCP.

The move to \APCP as modeling system introduces an important challenge.
Connecting processes that model protocol participants in a distributed fashion means connecting each pair of processes directly.
Hence, the local types of \MPSTs, that describe one participant's message exchanges as one protocol, are insufficient as binary session types: we need protocols that describe message exchanges between pairs of participants.

Scalas \etal~\cite{conf/ecoop/ScalasDHY17} introduce another layer of projection, called \emph{partial projection} that projects local types onto binary session types that represent such protocols between pairs of participants.
However, partial projection relies on complex operations (another form of the merge operator discussed in \Cref{s:intro:contrib:mpst:relProj}) to support interesting global protocols.
As it turns out, these multiple layers of projection and merge are incompatible with the approach based on \APCP, so another form of projection is needed.

This is where relative types and relative projection, introduced in \Cref{s:intro:contrib:mpst:relProj}, come in.
By encoding non-local choices in global protocols as dependency messages, relative projection evades a need for operators such as merge while still supporting interesting global types.
The design of relative projection was highly motivated by the analysis in \Cref{c:mpstAPCP}.

\Cref{c:mpstAPCP} then sets up a framework for the analysis of distributed models of \MPSTs in \APCP, relying on relative projection for processes typing.
Were we to use no dependency messages in relative types, this is enough: deadlock-freedom and protocol conformance hold when process models of participants are connected directly to each other.
However, we need to include dependency messages to support interesting protocols, and this requires care.
Dependencies require participants to forward choices received/sent on one channel on other channels, but using only types does not guarantee that the forwarded choices are correct.

\ExecuteMetaData[\fileSCICO]{intro:mpstAPCP:diagram}

The framework's solution is a form of distributed orchestration: it includes orchestrators as in~\cite{conf/forte/CairesP16,conf/concur/CarboneMSY15,conf/concur/CarboneLMSW16,journal/ai/CarboneMSY17}, though not centralized in one process, but distributed among local orchestrators called \emph{routers} at each process.
Routers are processes that are synthesized from global types, using relative projection to detect dependencies.
They forward messages between a participant's process and the rest of the network, forwarding choices as dependency messages when necessary.
\Cref{f:intro:mpstAPCP:diag} illustrates how our approach with distributed routers compares to approaches that use centralized orchestrators.

Through meta-theoretical results, the framework guarantees important correctness properties for model implementations of \MPSTs in \APCP.
First, because routers are well-typed, if participant processes are well-typed as well, then their composition as a distributed system is well-typed as a whole and thus deadlock-free.
Second, such distributed systems conform to the protocol specified by the governing global type.
Third, the framework generalizes the existing centralized solutions~\cite{conf/forte/CairesP16,conf/concur/CarboneMSY15,conf/concur/CarboneLMSW16,journal/ai/CarboneMSY17}, as witnessed by a behavioral equivalence of systems with a centralized orchestrator and with distributed routers.
Finally, \Cref{c:mpstAPCP} includes a series of examples that illustrate the compatibility of the analysis framework with features of \APCP such as interleaving and delegation, under which processes can exchange roles within multiparty protocols.

\subsubsection{Runtime Verification of Blackbox Implementations \texorpdfstring{\\}{} of Multiparty Session Types}
\label{s:intro:contrib:mpst:mpstMon}

Most of the chapters introduced so far are about \emph{static verification}.
That is, these chapters introduce models of systems that can be verified by inspecting their specification: they predict a program's behavior by looking at the actions they specify along with their types.

In practice, having access to the specifications and types of all the components of a system is a rather strong assumption.
Often, systems appeal to third-party components, whose specifications are not publically available, or whose publically available specifications are incomplete or outdated.
Alternatively, components may be specified in a plethora of languages---some typed, some untyped.
This way, developing a unified approach to verification---necessary for the verification of the system as a whole---is a complex task and often very specific to the system under scrutiny.

It is thus essential to develop and research techniques for \emph{dynamic verification}~\cite{conf/rv/ChenR03}.
In dynamic verification, the behavior of programs is inspected \emph{as it is executed}; in other words, the execution of the program is \emph{observed}.
Generally, dynamic verification does not provide as strong guarantees as static verification does, because behavior observed in the past does not give any guarantees about behavior in the future.
Hence, dynamic verification can be seen as a more \emph{passive} form of verification, where the observed behavior is compared against what is expected of a system, and unexpected behavior is reported.
Once unexpected behavior has been observed, one can repair or roll back the behavior on the fly, or it might warrant a (more costly) static inspection of the program to find the source of the flaw.

The dynamic verification of distributed systems is an active field of research (for a survey, see~\cite{book/FrancalanzaPS18}).
In context of my dissertation, it is natural to try to bring \MPSTs into the mix, and this has been done before.
To motivate my contributions, let me briefly discuss the state of the art, in particular the work by Bocchi \etal~\cite{conf/tgc/ChenBDHY11,conf/forte/BocchiCDHY13,journal/tcs/BocchiCDHY17}.

Bocchi \etal's framework considers the dynamic verification of systems consisting of components that implement the roles of participants of global types.
Their global types are more than representations of multiparty protocols, as they include \emph{assertions} about the values that are exchanged.
Dynamic verification is implemented by equipping each component with a \emph{monitor}: a finite state machine (\FSM) that accepts sequences of incoming and outgoing messages; when an unexpected message is observed, the monitor simply drops it.
In their work, Bocchi \etal use local types, projected from the global type, as monitors; hence, they monitor protocol conformance.
Their system then guarantees \emph{safety}---monitored components never behave unexpectedly (as monitors drop unexpected messages)---and \emph{transparency}---monitors do not interfere with (expected) messages.

Bocchi \etal are able to give such strong guarantees because the components they dynamically verify are typed \picalc processes (similar to those in~\cite{conf/popl/HondaYC08,journal/acm/HondaYC16}); that is, they mix dynamic and static verification.
The question is then to what extent their approach is applicable to \emph{blackbox components} (or simply \emph{blackboxes}).
A blackbox is a program whose specification is unknown but whose behavior is observable, e.g., a third-party, closed-source program.
This leads me to my final research question:

\requMpstMon*

My answer to this question is presented in \Cref{c:mpstMon}.
This chapter presents a framework for the dynamic verification of \emph{networks of monitored blackboxes}.
Blackboxes are processes with unknown specification but observable behavior in the form of a labeled transition system (\LTS), with minimal assumptions.
Each blackbox is equipped with a monitor (i.e., a \FSM), forming a monitored blackboxes.
The monitored blackboxes then form a network, in which they communicate asynchronously through buffers.

\ExecuteMetaData[\fileRV]{intro:mpstMon:diagram}

This framework is in many ways a dynamic adaption of the static verification in \Cref{c:mpstAPCP} (introduced in \Cref{s:intro:contrib:mpst:mpstAPCP}).
Blackboxes implement the roles of the participants of global types, and their monitors are synthesized from global types using relative projection, using an algorithm that is very similar to the synthesis of routers discussed in \Cref{c:mpstAPCP}.
Hence, similar to the monitors of Bocchi \etal~\cite{conf/tgc/ChenBDHY11,conf/forte/BocchiCDHY13,journal/tcs/BocchiCDHY17}, these monitors verify the protocol conformance of blackboxes.
\Cref{f:intro:mpstMon:diag} illustrates our approach.

Besides components being blackboxes, the framework in \Cref{c:mpstMon} has another major difference with Bocchi \etal's: instead of ignoring unexpected messages, monitors signal errors when they encounter unexpected messages.
These error signal propagate through the entire network, thus stopping execution upon protocol violations.
A correctness guarantee such as Bocchi \etal's safety (absence of protocol violations) is thus out of the question.

To confirm that a monitored blackbox conforms to its part in the global protocol (i.e., to relative projections from a global type), \Cref{c:mpstMon} defines \emph{satisfaction}.
Satisfaction is a fine-grained, dynamic correctness property, that thoroughly inspects the behavior of a monitored blackbox and compares it to its protocol specification divided among relative types.
Then, when a monitored blackbox satisfies the protocol, we are sure that its monitor will never signal an error (i.e., a protocol violation).

When all monitored blackboxes in a network satisfy their part of the global protocol, the network as a whole conforms to the protocol.
More precisely, any interactions that occur in the network follow the interactions prescribed by the governing global type.
This important correctness guarantee is called \emph{soundness}.

Soundness shows that satisfaction defines a \emph{compositional} method of verifying the protocol conformance of networks of monitored blackboxes.
That is, overal conformance follows from the conformance of individual components, which can be checked in isolation, i.e., without needing to run any other components.
This is especially useful when different parties develop their components separately (and they might not want to share their specifications).

The other property proved in \Cref{c:mpstMon} is transparency, similar to Bocchi \etal's.
Transparency means that monitors interfere minimally with the blackboxes they observe.
Like soundness, transparency leans on satisfaction: clearly, a protocol violation leading to an error signal alters the blackboxes behavior.

As mentioned before, the framework presented in \Cref{c:mpstMon} leverages the relative types and projections introduced in \Cref{s:intro:contrib:mpst:relProj}.
Though dealing with the involved dependency messages complicates satisfaction, soundness and transparency, this approach enables the framework to support a wide range of practical multiparty protocols (different from those supported by Bocchi \etal~\cite{conf/tgc/ChenBDHY11,conf/forte/BocchiCDHY13,journal/tcs/BocchiCDHY17}).

\section{Thesis Outline and Derived Publications}
\label{s:intro:outline}

Starting at \Cpageref{c:APCP}, the contributions in this thesis are organized as follows:

\begin{itemize}

    \item[\Cref{p:pi}] ``\nameref{p:pi}'' (\Cref{s:intro:contrib:pi}).

        \begin{itemize}

            \item[\Cref{c:APCP}] ``\nameref{c:APCP}''.

            \item[\Cref{c:clpi}] ``\nameref{c:clpi}''.

            \item[\Cref{c:piBI}] ``\nameref{c:piBI}''.

        \end{itemize}

    \item[\Cref{p:lambda}] ``\nameref{p:lambda}'' (\Cref{s:intro:contrib:lambda}).

        \begin{itemize}

            \item[\Cref{c:LASTn}] ``\nameref{c:LASTn}''.

            \item[\Cref{c:alphalambda}] ``\nameref{c:alphalambda}''.

        \end{itemize}

    \item[\Cref{p:mpst}] ``\nameref{p:mpst}'' (\Cref{s:intro:contrib:mpst}).

        \begin{itemize}

            \item[\Cref{c:relProj}] ``\nameref{c:relProj}''.

            \item[\Cref{c:mpstAPCP}] ``\nameref{c:mpstAPCP}''.

            \item[\Cref{c:mpstMon}] ``\nameref{c:mpstMon}''.

        \end{itemize}

\end{itemize}
References from all chapters are collected on \Cpageref{refs}.
\Cref{c:basepi} serves as a gentle introduction to the session-typed \picalc, that can be safely skipped by those familiar; each chapter is self-contained.
\ifappendix Starting at \Cpageref{ac:basepi}, for a complete and self-contained dissertation, appendices include detailed definitions and proofs.
\else The extended version of this dissertation~\cite{thesis/vdHeuvel24ex} includes appendices with detailed definitions and proofs. \fi

The chapters in this dissertation are derived from publications, as follows; note that some chapters are derived from multiple publications:

\begingroup
\defcounter{maxnames}{99}
\begin{itemize}

    \item
        \Cref{c:APCP,c:relProj,c:mpstAPCP,c:mpstMon} are derived from the following journal paper (a further development of~\cite{conf/ice/vdHeuvelP21,conf/places/vdHeuvelP20}):
        \\
        \fullcite{journal/scico/vdHeuvelP22}.

    \item
        \Cref{c:APCP,c:LASTn,c:mpstAPCP} are derived from a journal paper with Jorge A.\ Pérez under submission (superseding~\cite{conf/places/vdHeuvelP20}); the following is a preprint:
        \\
        \fullcite{report/vdHeuvelP24}.

    \item
        \Cref{c:clpi} is derived from the following conference paper:
        \\
        \fullcite{conf/aplas/vdHeuvelPNP23}.

    \item
        \Cref{c:piBI,c:alphalambda} are derived from the following conference paper:
        \\
        \fullcite{conf/oopsla/FruminDHP22}.

    \item
        \Cref{c:relProj,c:mpstMon} are derived from the following conference paper:
        \\
        \fullcite{conf/rv/vdHeuvelPD23}.

\end{itemize}
\endgroup


\addtocontents{toc}{\protect\hrulefill\protect\leavevmode\par}
\mypart[Binary Session Types for Message-passing Processes]{Binary Session Types \texorpdfstring{\\}{} for Message-passing Processes}
\label{p:pi}
\chapter{Message-passing Concurrency: The \texorpdfstring{\picalc}{pi-calculus}}
\label{c:basepi}

The chapters presented in this thesis all concern fundamental properties of message-passing concurrency.
To this end, each chapter analyzes mathematical models of message-passing.
A salient approach therein is the \picalc, which (almost) every chapter uses in some form.

The \picalc is a well-known process calculus that focusses on message-passing, originally introduced by Milner \etal~\cite{book/Milner89,journal/ic/MilnerPW92}.
In the \picalc, parallel processes communicate by exchanging messages on channels.
A prominent feature is that such messages can be channels themselves, effectively changing the way processes are connected when channels are exchanged.

There are many variants of the \picalc, each targeting a different specific aspect of message-passing.
In this chapter, I present \basepi, a base variant of the \picalc that is common to all variants of the \picalc in this thesis.
Each chapter then details how \basepi is extended to form a variant of the \picalc that is especially suitable for the topic therein.

An important aspect of \basepi is mode of communication.
A synchronous mode of communication means that a process sending a message can only continue after that message has been received.
In contrast, under an asynchronous mode of communication sending is non-blocking, in that a sending process can continue even before its message has been received~\cite{conf/ecoop/HondaT91,conf/occ/HondaT91,report/Boudol92}.
Some chapters in this thesis rely on synchronous communication, while others require asynchronous communication.
In principle, \basepi features synchronous communication.
However, it has been designed in such a way that asynchronous communication can easily be modelled.
This way, \basepi is a good basis for all variants of the \picalc in this thesis.

In \Cref{s:basepi:syntax,s:basepi:semantics}, I present the syntax and semantics of \basepi, respectively: how processes are constructed and how they behave.
In \Cref{s:basepi:typeSys}, I present a type system for \basepi; types are a common approach to correctness guarantees in this thesis.
Finally, in \Cref{s:basepi:syncAsync}, I discuss how \basepi can be restricted to enforce synchronous or asynchronous communication.

\section{Syntax}
\label{s:basepi:syntax}

The main protagonists of \basepi are \emph{names}, denoted $a,b,\ldots,x,y,\ldots$.
Names represent \emph{endpoints} of channels.
Processes, denoted $P,Q,\ldots$, are defined as sequences of \emph{communications} on names: sends and receives, selections and branches, and closes and waits.
In fact, processes are \emph{parallel compositions} of such sequences.
Processes also include \emph{scoped restrictions} (simply, restrictions), which indicate pairs of names that together form a communication channel (i.e., the names are the opposite endpoints of the channel).
As we will see, connecting names to form channels is what enables communication.

In \basepi, all names are used \emph{linearly}: each name is used for exactly one communication.
Linearity is a useful property when using typing to guarantee correctness properties, as we will see in \Cref{s:basepi:typeSys} and the other chapters of this thesis.

Linear names are nonetheless part of \emph{sessions}, ordered sequences of communications.
In order to implement sessions using linear, one-shot names, \basepi employs \emph{dyadic} communication.
That is, besides a message, outputs carry a \emph{continuation} name; after the message and continuation are received, the session continues on the continuation name.

\begin{definition}[Syntax for \Basepi]
    \label{d:basepi:syntax}
    \begin{align*}
        P, Q &
        \begin{array}[t]{@{}l@{}lr@{\kern2em}l@{\kern1em}lr@{}}
            {} ::= {} &
            \pOut x[a,b] ; P & \text{send}
            & \sepr &
            \pIn x(y,x') ; P & \text{receive}
            \\ \sepr* &
            \pSel x[b] < \ell ; P & \text{select}
            & \sepr &
            \pBra x(x') > {\{ i . P \}_{i \in I}} & \text{branch}
            \\ \sepr* &
            \pClose x[] & \text{close}
            & \sepr &
            \pWait x() ; P & \text{wait}
            \\ \sepr* &
            P \| Q & \text{parallel}
            & \sepr &
            \pRes {xy} P & \text{restriction}
            \\ \sepr* &
            \0 & \text{inaction}
            & \sepr &
            \pFwd [x<>y] & \text{forwarder}
        \end{array}
    \end{align*}
\end{definition}

Let us discuss each constructor in \Cref{d:basepi:syntax}:
\begin{itemize}

    \item
        A send $\pOut x[a,b] ; P$ sends along name $x$ two names $a$ and $b$, and continues as $P$ thereafter.
        A receive $\pIn x(y,x') ; P$ receives along $x$ and continues as $P$; the received names $y$ and $x'$ are placeholder names used in $P$, and they will be substituted for the actual names received.
        A receive $\pIn x(y,x') ; P$ binds $y$ and $x'$ in $P$.

    \item
        A select $\pSel x[b] < \ell ; P$ sends along $x$ some \emph{label} $\ell$ along with a name $b$, and continues as $P$ thereafter; the label represents a choice among a set of alternatives.
        A branch $\pBra x(x') > {\{ i . P_i \}_{i \in I}}$ receives along $x$ a label $i \in I$ along with $x'$, continuing as $P_i$; again, $x'$ is a placeholder that will be substituted for upon reception.
        A branch $\pBra x(x') > {\{ i . P_i \}_{i \in I}}$ binds $x'$ in $P_i$ for every $i \in I$.

    \item
        A close $\pClose x[]$ ends the session on name $x$; it thus has no continuation.
        A wait $\pWait x(); P$ waits for the session on $x$ to close and continues as $P$ thereafter.

    \item
        Parallel composition is denoted $P \| Q$.
        Restriction is denoted $\pRes{xy} P$, connecting the names $x$ and $y$ in $P$.
        Restriction $\pRes{xy} P$ binds $x$ and $y$ in $P$.

    \item
        Inaction $\0$ denotes a process that has nothing to do.

    \item
        A forwarder $\pFwd [x<>y]$ denotes a bidirectional link between names $x$ and $y$, effectively forwarding all inputs on $x$ to $y$ and vice versa.

\end{itemize}
In the following, I often write ``output'' to denote either a send, a select, or a close.
I write $\fn(P)$ to denote the free names of $P$, i.e., the names of $P$ that are not bound.
Substitution of names, denoted $P \{y/x\}$, is defined as replacing all free occurrences of $x$ in $P$ with $y$.
I write $P \{y_1/x_1,\ldots,y_n/x_n\}$ to denote $P \{y_1/x_1\} \ldots \{y_n/x_n\}$.

\section{Semantics}
\label{s:basepi:semantics}

There are several ways to define semantics for the \picalc, for example as a Labeled Transition System.
Here, I present a \emph{reduction semantics} for \basepi.
A reduction step represents two parallel processes performing complementary communications (such as a send and a receive) on connected names; such a step is often referred to as a \emph{synchronization}.

Rules for reduction, given in \Cref{d:basepi:redd}, require specific shapes of processes.
However, processes may not always fit such shapes.
In such cases, the process can be rearranged to fit the shapes required by reduction rules, without affecting the overall behavior of the process.
This rearrangement is called \emph{structural congruence}, meaning that the rules for rearrangement can be applied inside any context induced by the syntax of processes.

\begin{definition}[Structural Congruence \texorpdfstring{($\equiv$)}{} for \Basepi]
    \label{d:basepi:strcong}
    Structural congruence for \basepi, denoted $P \equiv Q$, is defined by the following rules and closed under all contexts induced by the syntax in \Cref{d:basepi:syntax}:
    \begin{mathpar}
        \begin{bussproof}[cong-alpha]
            \bussAssume{
                P \equiv_\alpha Q
            }
            \bussUn{
                P \equiv Q
            }
        \end{bussproof}
        \and
        \begin{bussproof}[cong-par-unit]
            \bussAx{
                P \| \0 \equiv P
            }
        \end{bussproof}
        \and
        \begin{bussproof}[cong-par-comm]
            \bussAx{
                P \| Q \equiv Q \| P
            }
        \end{bussproof}
        \and
        \begin{bussproof}[cong-par-assoc]
            \bussAx{
                P \| (Q \| R) \equiv (P \| Q) \| R
            }
        \end{bussproof}
        \and
        \begin{bussproof}[cong-scope]
            \bussAssume{
                x,y \notin \fn(P)
            }
            \bussUn{
                P \| \pRes{xy} Q \equiv \pRes{xy} ( P \| Q )
            }
        \end{bussproof}
        \and
        \begin{bussproof}[cong-res-comm]
            \bussAx{
                \pRes{xy} \pRes{zw} P \equiv \pRes{zw} \pRes{xy} P
            }
        \end{bussproof}
        \and
        \begin{bussproof}[cong-res-symm]
            \bussAx{
                \pRes{xy} P \equiv \pRes{yx} P
            }
        \end{bussproof}
        \and
        \begin{bussproof}[cong-fwd-symm]
            \bussAx{
                \pFwd [x<>y] \equiv \pFwd [y<>x]
            }
        \end{bussproof}
        \and
        \begin{bussproof}[cong-res-fwd]
            \bussAx{
                \pRes{xy} \pFwd [x<>y] \equiv \0
            }
        \end{bussproof}
    \end{mathpar}
\end{definition}

Let us discuss each rule in \Cref{d:basepi:strcong}:
\begin{itemize}

    \item
        Rule~\ruleLabel{cong-alpha} defines $\alpha$-equivalent processes (i.e., processes that are equal up to renaming of bound names) as structurally congruent.

    \item
        Rule~\ruleLabel{cong-par-unit} defines $\0$ as the unit of parallel composition.
        Rules~\ruleLabel{cong-par-comm} and~\ruleLabel{cong-par-assoc} define parallel composition as commutative and associative, respectively.

    \item
        Rule~\ruleLabel{cong-scope} allows extending the scope of a restriction, as long as this does not bind any free names.
        Rules~\ruleLabel{cong-res-comm} and~\ruleLabel{cong-res-symm} define restriction as commutative and symmetric, respectively.

    \item
        Rule~\ruleLabel{cong-fwd-symm} defines forwarders as symmetric.
        Rule~\ruleLabel{cong-res-fwd} says that a forwarder with both names bound together is the same as inaction.

\end{itemize}

I now define reduction:

\begin{definition}[Reduction \texorpdfstring{($\protect\redd$)}{} for \Basepi]
    \label{d:basepi:redd}
    Reduction for \basepi is a relation between processes, denoted $P \redd Q$.
    It is defined by the following rules:
    \begin{mathpar}
        \begin{bussproof}[red-send-recv]
            \bussAx{
                \pRes{xy} ( \pOut x[a,b] ; P \| \pIn y(z,y') ; Q )
                \redd
                P \| Q \{a/z,b/y'\}
            }
        \end{bussproof}
        \and
        \begin{bussproof}[red-sel-bra]
            \bussAssume{
                j \in I
            }
            \bussUn{
                \pRes{xy} ( \pSel x[b] < j ; P \| \pBra y(y') > {\{ i . Q_i \}_{i \in I}} )
                \redd
                P \| Q_j \{b/y'\}
            }
        \end{bussproof}
        \and
        \begin{bussproof}[red-close-wait]
            \bussAx{
                \pRes{xy} ( \pClose x[] \| \pWait y() ; Q )
                \redd
                Q
            }
        \end{bussproof}
        \and
        \begin{bussproof}[red-fwd]
            \bussAssume{
                x,y \neq z
            }
            \bussUn{
                \pRes{xy}( \pFwd [x<>z] \| Q )
                \redd
                Q \{z/y\}
            }
        \end{bussproof}
        \and
        \begin{bussproof}[red-cong]
            \bussAssume{
                P \equiv P'
            }
            \bussAssume{
                P' \redd Q'
            }
            \bussAssume{
                Q' \equiv Q
            }
            \bussTern{
                P \redd Q
            }
        \end{bussproof}
        \and
        \begin{bussproof}[red-res]
            \bussAssume{
                P \redd Q
            }
            \bussUn{
                \pRes{xy} P \redd \pRes{xy} Q
            }
        \end{bussproof}
        \and
        \begin{bussproof}[red-par]
            \bussAssume{
                P \redd Q
            }
            \bussUn{
                P \| R \redd Q \| R
            }
        \end{bussproof}
    \end{mathpar}
    I write $\redd*$ to denote the reflexive, transitive closure of $\redd$.
\end{definition}

Let us discuss each rule in \Cref{d:basepi:redd}:
\begin{itemize}

    \item
        Rule~\ruleLabel{red-send-recv} synchronizes a send and a receive on names connected by restriction.
        Rule~\ruleLabel{red-sel-bra} synchronizes a selection and a branch on similarly connected names.
        Rule~\ruleLabel{red-close-wait} similarly synchronizes a close and a wait.
        In each case, the received placeholder names are substituted for the sent names.
        Moreover, the restriction is removed, as the involved names will no longer be used (as per linearity).

    \item
        Rule~\ruleLabel{red-fwd} ``shortcuts'' a forwarder when one of its names is connected to a parallel process by restriction.
        It effectively substitutes the connected name for the forwarder's other name.

    \item
        Rules~\ruleLabel{red-cong}, \ruleLabel{red-res}, and~\ruleLabel{red-par} close reduction under structural congruence, restriction, and parallel composition, respectively.

\end{itemize}

\section{Session Type System}
\label{s:basepi:typeSys}

As mentioned in \Cref{s:basepi:syntax}, the linear names of \basepi are part of sessions, ordered sequences of communications.
It is then a natural choice to statically analyze \basepi processes using \emph{session types}.
Session types, introduced originally by Honda \etal~\cite{conf/concur/Honda93,conf/esop/HondaVK98,conf/parle/TakeuchiHK94}, represent communication protocols on names as sequences of communications.

Every chapter in this thesis uses session types in some way.
Many of these chapters are inspired by a Curry-Howard correspondence between linear logic~\cite{journal/tcs/Girard87} and session types for the \picalc, introduced by Caires and Pfenning~\cite{conf/concur/CairesP10} and Wadler~\cite{conf/icfp/Wadler12}.
In summary, the correspondence connects session types and linear logic propositions, typing derivations and sequent calculus derivations, and reduction semantics and cut-reduction.
Because linear logic is a recurring theme in this thesis, I present session types in the form of linear logic propositions.
On the other hand, the type system I present here does not correspond to a linear logic sequent calculus, but rather is closer to the original session type systems for the \picalc by Honda and others~\cite{conf/concur/Honda93,conf/esop/HondaVK98,conf/parle/TakeuchiHK94}.
I will discuss the Curry-Howard correspondence between linear logic and session types in more detail in \Cref{p:pi}.

The general idea of the session type system for \basepi I present here is that the free names of a process are assigned session types.
These session types then prescribe how the process should behave on those names.
Session types for \basepi, in the form of linear logic propositions, are defined as follows:

\begin{definition}[Session Types for \Basepi]
    \label{d:basepi:types}
    \begin{align*}
        A,B &
        \begin{array}[t]{@{}l@{}lr@{\kern2em}l@{\kern1em}lr@{}}
            {} ::= {}&
            A \tensor B & \text{(send)}
            & \sepr &
            A \parr B & \text{(receive)}
            \\ \sepr* &
            \oplus_{i \in I} A & \text{(select)}
            & \sepr &
            \&_{i \in I} A & \text{(branch)}
            \\ \sepr* &
            \1 & \text{(close)}
            & \sepr &
            \bot & \text{(wait)}
        \end{array}
    \end{align*}
\end{definition}

Let us discuss the constructors in \Cref{d:basepi:types}:
\begin{itemize}

    \item
        The type $A \tensor B$ is assigned to a name on which a name of type $A$ is sent, along with a continuation name of type $B$.
        Dually, the type $A \parr B$ is assigned to a name on which a name of type $A$ is received, along with a continuation name of type $B$.
        As we will see, these types are assigned to processes that send and receive, respectively.

    \item
        The type $\oplus_{i \in I} A_i$ is assigned to a name on which a label $i \in I$ is sent, along with a continuation name of type $A_i$.
        Dually, the type $\&_{i \in I} A_i$ is assigned to a name on which a label $i \in I$ is received, along with a continuation name of type $A_i$.
        As we will see, these types are assigned to processes that select and branch, respectively.

    \item
        The type $\1$ is assigned to a name on which the session is closed.
        Dually, the type $\bot$ is assigned to a name on which the closing of the session is awaited.
        As we will see, these types are assigned to processes that close and wait, respectively.

\end{itemize}
Additionally, each of these types may also be assigned to forwarders.

The description above mentions \emph{dual} types.
Duality is the cornerstone of session types.
The idea is that opposite names of a channel should be assigned dual types.
This makes sure that the names are not simultaneously used for conflicting communications, e.g., sending on both names at the same time.
Duality is defined as follows:

\begin{definition}[Duality]
    \label{d:basepi:dual}
    Duality for session types, denoted $\ol{A}$, is defined as follows:
    \begin{align*}
        \ol{\1} &\deq \bot
        &
        \ol{A \tensor B} &\deq \ol{A} \parr \ol{B}
        &
        \ol{\oplus_{i \in I} A_i} &\deq \&_{i \in I} \ol{A_i}
        \\
        \ol{\bot} &\deq \1
        &
        \ol{A \parr B} &\deq \ol{A} \tensor \ol{B}
        &
        \ol{\&_{i \in I} A_i} &\deq \oplus_{i \in I} \ol{A_i}
    \end{align*}
\end{definition}

\noindent
Clearly, duality is an involution: $\ol{\;\ol{A}\;} = A$ for any session type $A$.

To type a process, each of its free names is assigned a session type, denoted $x:A$.
A process is then assigned a typing context $\Delta$ consisting of such assignments, denoted as the typing judgment $\vdash P \typInf \Delta$.
Writing $x:A , y:B$ implicitly assumes that $x \neq y$, which straightforwardly extends to typing contexts.
Typing is implicitly closed under exchange, i.e., $\Delta_1 , x:A , y:B , \Delta_2 = \Delta_1 , y:B , x:A , \Delta_2$.
The type system for \basepi is then defined by the following rules:

\begin{definition}[Type System for \Basepi]
    \label{d:basepi:typeSys}
    \begin{mathpar}
        \begin{bussproof}[typ-send]
            \bussAssume{
                \vdash P \typInf \Delta
            }
            \bussUn{
                \vdash \pOut x[a,b] ; P \typInf \Delta , x:A \tensor B , a:\ol{A} , b:\ol{B}
            }
        \end{bussproof}
        \and
        \begin{bussproof}[typ-recv]
            \bussAssume{
                \vdash P \typInf \Delta , y:A , x':B
            }
            \bussUn{
                \vdash \pIn x(y,x') ; P \typInf \Delta , x:A \parr B
            }
        \end{bussproof}
        \and
        \begin{bussproof}[typ-sel]
            \bussAssume{
                \vdash P \typInf \Delta
            }
            \bussAssume{
                j \in I
            }
            \bussBin{
                \vdash \pSel x[b] < j ; P \typInf \Delta , x:\oplus_{i \in I} A_i , b:\ol{A_j}
            }
        \end{bussproof}
        \and
        \begin{bussproof}[typ-bra]
            \bussAssume{
                \forall i \in I.~
                \vdash P_i \typInf \Delta , x':A_i
            }
            \bussUn{
                \vdash \pBra x(x') > {\{ i . P_i \}_{i \in I}} \typInf \Delta , x:\&_{i \in I} A_i
            }
        \end{bussproof}
        \and
        \begin{bussproof}[typ-close]
            \bussAx{
                \vdash \pClose x[] \typInf x:\1
            }
        \end{bussproof}
        \and
        \begin{bussproof}[typ-wait]
            \bussAssume{
                \vdash P \typInf \Delta
            }
            \bussUn{
                \vdash \pWait x() ; P \typInf \Delta , x:\bot
            }
        \end{bussproof}
        \and
        \begin{bussproof}[typ-par]
            \bussAssume{
                \vdash P \typInf \Delta_1
            }
            \bussAssume{
                \vdash Q \typInf \Delta_2
            }
            \bussBin{
                \vdash P \| Q \typInf \Delta_1 , \Delta_2
            }
        \end{bussproof}
        \and
        \begin{bussproof}[typ-res]
            \bussAssume{
                \vdash P \typInf \Delta , x:A , y:\ol{A}
            }
            \bussUn{
                \vdash \pRes{xy} P \typInf \Delta
            }
        \end{bussproof}
        \and
        \begin{bussproof}[typ-inact]
            \bussAx{
                \vdash \0 \typInf \emptyset
            }
        \end{bussproof}
        \and
        \begin{bussproof}[typ-fwd]
            \bussAx{
                \vdash \pFwd [x<>y] \typInf x:A , y:\ol{A}
            }
        \end{bussproof}
    \end{mathpar}
\end{definition}

Let us discuss each rule in \Cref{d:basepi:typeSys}:
\begin{itemize}

    \item
        Rule~\ruleLabel{typ-send} types a send $\pOut x[a,b] ; P$ by assigning to $x$ the type $A \tensor B$, to $a$ the type~$\ol{A}$, and to $b$ the type $\ol{B}$; $P$ may be arbitrarily types.
        It may seem counterintuitive that the types assigned to $a$ and $b$ are dual to the types assigned to $x$.
        However, consider that $\ol{A \tensor B} = \ol{A} \parr \ol{B}$: the receiving name of the channel of which $x$ is part is expecting to receive names of types $\ol{A}$ and $\ol{B}$.

    \item
        Rule~\ruleLabel{typ-recv} types a receive $\pIn x(y,x') ; P$ by assigning to $x$ the type $A \parr B$.
        It requires that in $P$ the name $y$ is assigned the type $A$, and $x'$ the type $B$.

    \item
        Rule~\ruleLabel{typ-sel} types a selection $\pSel x[b] < j ; P$ by assigning to $x$ the type $\oplus_{i \in I} A_i$ with \mbox{$j \in I$}, and to $b$ the type $\ol{A_j}$; $P$ may be arbitrarily typed.
        Rule~\ruleLabel{typ-bra} types a branch \mbox{$\pBra x(x') > {\{ i . P_i \}_{i \in I}}$} by assigning to $x$ the type $\&_{i \in I} A_i$.
        It requires that, for every $i \in I$, in $P_i$ the name $x'$ is assigned the type $A_i$.

    \item
        Rule~\ruleLabel{typ-close} types a close $\pClose x[]$ by assigning to $x$ the type $\1$; no other assignments may appear in the typing context.
        Rule~\ruleLabel{typ-wait} types a wait $\pWait x() ; P$ by assigning to $x$ the type $\bot$; $P$ may be arbitrarily types.

    \item
        Rule~\ruleLabel{typ-par} types the parallel composition of two typed processes.
        Rule~\ruleLabel{typ-res} types a restriction $\pRes{xy} P$ by requiring that the names $x$ and $y$ are assigned dual types.

    \item
        Rule~\ruleLabel{typ-inact} types inaction $\0$ with an empty typing context.
        Rule~\ruleLabel{typ-fwd} types a forwarder $\pFwd [x<>y]$ by assigning to $x$ and $y$ dual types.

\end{itemize}

It may not seem entirely clear how the rules in \Cref{d:basepi:typeSys} type sessions, i.e., ordered sequences of communications.
This will become clearer in \Cref{s:basepi:syncAsync}, where I discuss synchronous versus asynchronous modes of communication that determine the method of implementing session using the rules in \Cref{d:basepi:typeSys}.

The foundation of correctness guarantees from type systems is \emph{type preservation}.
Type preservation means that typing is preserved as processes transform, under structural congruences (\Cref{t:basepi:subjCong}) and under reduction (\Cref{t:basepi:subjRed}).
A direct consequence of type preservation is that well-typed \basepi processes satisfy \emph{protocol fidelity} (they correctly implement their assigned session types) and \emph{communication safety} (there are no message mismatches).
Proofs are detailed in\ifappendix \Cref{ac:basepi}\else~\cite{thesis/vdHeuvel24ex}\fi.

\begin{restatable}[Subject Congruence]{theorem}{tPiCalcSubjCong}
    \label{t:basepi:subjCong}
    If $\vdash P \typInf \Delta$ and $P \equiv Q$, then $\vdash Q \typInf \Delta$.
\end{restatable}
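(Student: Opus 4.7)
The plan is to proceed by induction on the derivation of $P \equiv Q$, treating each of the base rules of structural congruence from \Cref{d:basepi:strcong} as a case, and then handling the contextual closure by an inner induction on the shape of the surrounding context. Since $\equiv$ is defined as the smallest congruence closed under all process contexts satisfying the base rules, a clean way to organize the argument is to first verify the base cases in both directions (because $\equiv$ is symmetric), and then show that if typing is preserved under $\equiv$ at the top level, it is preserved under every process constructor. The induction hypothesis will handle $\equiv$ appearing in the premise positions of Rules~\ruleLabel{typ-send} through \ruleLabel{typ-wait}, Rule~\ruleLabel{typ-par}, and Rule~\ruleLabel{typ-res}.

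For the base rules, most cases are straightforward inversions and reassemblies of typing derivations. For \ruleLabel{cong-alpha}, the key observation is that the typing rules assign types to names, and typing assignments are preserved under renaming of bound names, so alpha-equivalence preserves typing. For \ruleLabel{cong-par-unit}, one direction uses Rule~\ruleLabel{typ-inact} to add an empty contribution to $\Delta$, while the other inverts Rule~\ruleLabel{typ-par} and observes that $\vdash \0 \typInf \Delta_2$ forces $\Delta_2 = \emptyset$. Rules~\ruleLabel{cong-par-comm} and~\ruleLabel{cong-par-assoc} are immediate from the implicit exchange on typing contexts together with Rule~\ruleLabel{typ-par}. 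For \ruleLabel{cong-res-comm} and~\ruleLabel{cong-res-symm}, we invert Rule~\ruleLabel{typ-res} twice (or once), using in the symmetry case the involutive property $\ol{\;\ol{A}\;} = A$ noted after \Cref{d:basepi:dual}. Rule~\ruleLabel{cong-fwd-symm} is handled by Rule~\ruleLabel{typ-fwd} together with exchange. Rule~\ruleLabel{cong-res-fwd} follows because $\vdash \pFwd [x<>y] \typInf x:A, y:\ol{A}$ combined with Rule~\ruleLabel{typ-res} yields $\vdash \pRes{xy} \pFwd [x<>y] \typInf \emptyset$, which matches $\vdash \0 \typInf \emptyset$ by Rule~\ruleLabel{typ-inact}.

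The case that needs a little care is \ruleLabel{cong-scope}: assuming $x,y \notin \fn(P)$, we must show that $\vdash P \| \pRes{xy} Q \typInf \Delta$ iff $\vdash \pRes{xy}(P \| Q) \typInf \Delta$. Here one inverts Rule~\ruleLabel{typ-par} and then Rule~\ruleLabel{typ-res} (or vice versa), and the side condition $x,y \notin \fn(P)$ guarantees that the assignments $x:A, y:\ol{A}$ required by the restriction do not need to reside in the context typing $P$. In particular, well-typedness of $P$ forbids free occurrences of $x,y$ from appearing in $\Delta$ under $P$, so the two decompositions of $\Delta, x:A, y:\ol{A}$ coincide. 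A small auxiliary lemma stating that if $\vdash P \typInf \Delta$ then $\fn(P)$ coincides with the names assigned in $\Delta$ makes this precise, but it is a routine induction on the typing derivation.

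Finally, for the congruence closure, we show by induction on the process context that if $P \equiv Q$ implies preservation of typing at the top, then $C[P] \equiv C[Q]$ implies preservation for any context $C$. Each case simply inverts the outermost typing rule, applies the induction hypothesis to the premise containing the rearranged subterm, and reassembles the derivation with identical types. The main obstacle, if any, is the scope-extrusion case \ruleLabel{cong-scope}, because it is the only rule whose side condition couples the typing context of one subprocess with the binders of another; everything else is essentially bookkeeping on derivations.
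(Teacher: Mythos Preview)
Your proposal is correct and follows essentially the same approach as the paper: induction on the derivation of $P \equiv Q$, with each base axiom of \Cref{d:basepi:strcong} handled by inversion of the relevant typing rules and reassembly, and the contextual closure cases by the induction hypothesis. You are slightly more explicit than the paper in noting that both directions of each axiom must be checked and in isolating the auxiliary fact that $\fn(P)$ coincides with $\dom(\Delta)$ for the \ruleLabel{cong-scope} case, but the underlying argument is the same.
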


\begin{sketch}
    By induction on the derivation of $P \equiv Q$.
    The inductive cases, which apply congruence under contexts, follow from the IH straightforwardly.
    The base cases correspond to the axioms in \Cref{d:basepi:strcong}.
    In each case, apply inversion to derive the typing of subprocesses and consequently the typing of the structurally congruent process.
\end{sketch}

\begin{restatable}[Subject Reduction]{theorem}{tPiCalcSubjRedd}
    \label{t:basepi:subjRed}
    If $\vdash P \typInf \Delta$ and $P \redd Q$, then $\vdash Q \typInf \Delta$.
\end{restatable}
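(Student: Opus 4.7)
The plan is to proceed by induction on the derivation of $P \redd Q$, with the main subtlety being the handling of name substitution in the communication rules. For the inductive cases, the structure is standard: Rule~\ruleLabel{red-res} and Rule~\ruleLabel{red-par} follow directly by applying the induction hypothesis after inverting Rules~\ruleLabel{typ-res} and~\ruleLabel{typ-par} respectively; Rule~\ruleLabel{red-cong} is handled by invoking Subject Congruence (\Cref{t:basepi:subjCong}) at both ends of the reduction chain $P \equiv P' \redd Q' \equiv Q$.

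Before attacking the base cases, I would establish a Substitution Lemma: \emph{if $\vdash P \typInf \Delta, x:A$ and $y \notin \fn(P)$, then $\vdash P\{y/x\} \typInf \Delta, y:A$}. This is proved by a routine induction on the typing derivation of $P$, with the interesting cases being the communication prefixes where the substituted name may be the subject of a prefix. I expect the bulk of the technical effort of subject reduction to depend on this lemma, since each of the communication base cases involves substituting fresh sent names for the bound placeholders of the receiving side.

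For the base cases, I would in each instance invert the typing of the left-hand side to extract the types of the participating subprocesses. For Rule~\ruleLabel{red-send-recv}, inverting \ruleLabel{typ-res}, then \ruleLabel{typ-par}, then \ruleLabel{typ-send} and \ruleLabel{typ-recv}, yields $x:A \tensor B$ and $y:\ol{A}\parr\ol{B} = \ol{A\tensor B}$, and moreover $\vdash P \typInf \Delta_1, a:\ol{A}, b:\ol{B}$ and $\vdash Q \typInf \Delta_2, z:A, y':B$; applying the Substitution Lemma twice to $Q$ (for $z \mapsto a$ and $y' \mapsto b$) and then \ruleLabel{typ-par} gives the required typing $\vdash P \| Q\{a/z,b/y'\} \typInf \Delta_1, \Delta_2$, matching the original context since the restricted names have disappeared. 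Rules~\ruleLabel{red-sel-bra} and~\ruleLabel{red-close-wait} are analogous, with the branching case additionally requiring that the chosen $j$ lie in the index set $I$ offered by the branch, which is guaranteed by~\ruleLabel{typ-sel}.

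The step I expect to require the most care is Rule~\ruleLabel{red-fwd}: from $\vdash \pRes{xy}(\pFwd[x<>y'] \| Q) \typInf \Delta$ (after $\alpha$-renaming to distinguish the forwarder's second name $y'$ from the bound $y$, so that the reduct $Q\{z/y\}$ from the rule corresponds properly to substituting through the forwarder), I would invert to obtain $\vdash \pFwd[x<>z] \typInf x:A, z:\ol{A}$ via~\ruleLabel{typ-fwd} and $\vdash Q \typInf \Delta, y:\ol{A}$. Applying the Substitution Lemma (with renaming as needed, using $\alpha$-equivalence from \ruleLabel{cong-alpha} to avoid capture) gives $\vdash Q\{z/y\} \typInf \Delta$ as required. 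Throughout, exchange on typing contexts is used implicitly. No new lemmas beyond substitution and subject congruence should be necessary.
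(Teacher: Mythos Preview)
Your approach is essentially the paper's: induction on the reduction derivation, inversion of the typing rules in each base case, and appeal to subject congruence for \ruleLabel{red-cong}. The paper does not isolate a substitution lemma but simply remarks that the substituted typing is ``straightforward to derive''; your explicit lemma is a harmless (and arguably cleaner) refinement of the same step.

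There is, however, a bookkeeping slip in your \ruleLabel{red-send-recv} case. Inverting \ruleLabel{typ-send} gives $\vdash P \typInf \Delta_1$ with $a:\ol{A}$ and $b:\ol{B}$ appearing only in the \emph{conclusion}, not in the premise for $P$; and inverting \ruleLabel{typ-recv} at $y:\ol{A}\parr\ol{B}$ gives $\vdash Q \typInf \Delta_2, z:\ol{A}, y':\ol{B}$, not $z:A, y':B$. Consequently the reduct is typed $\vdash P \| Q\{a/z,b/y'\} \typInf \Delta_1, \Delta_2, a:\ol{A}, b:\ol{B}$, which is exactly the context of the redex (the names $a,b$ are free, not restricted). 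With this correction your argument goes through and matches the paper's.
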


\begin{sketch}
    By induction on the derivation of $P \redd Q$.
    The cases correspond to the rules in \Cref{d:basepi:redd}.
    In each case, apply inversion to derive the typing of subprocesses and consequently the typing of the reduced process, in some cases appealing to the IH and \Cref{t:basepi:subjCong}.
\end{sketch}

Another important correctness aspect that is desirable to be guaranteed by typing is \emph{deadlock-freedom} (processes do not get stuck waiting to receive messages from each other).
\Basepi does not guarantee this property.

\begin{example}
    \label{x:basepi:deadlock}
    Consider the following process:
    \begin{align*}
        Q_1 &\deq \pRes{ab} \pOut x[a,b] ; \pRes{cd} \pOut z[c,d] ; \0
        \\
        Q_2 &\deq \pIn w(v,w') ; \pIn y(u,y') ; ( \pFwd [v<>w'] \| \pFwd [u<>y'] )
        \\
        P &\deq \pRes{xy} \pRes{zw} ( Q_1 \| Q_2 )
    \end{align*}
    Clearly, $P$ is deadlocked.
    $Q_1$ tries to send on $x$, but the corresponding receive on $y$ in $Q_2$ is blocked by the receive on $w$.
    In turn, this receive on $w$ is connected to a send on $z$, which is blocked by the send on $x$ in $Q_1$.

    Nonetheless, $P$ is well-typed:
    \begin{mathpar}
        \begin{bussproof}
            \bussAx{
                \vdash \0 \typInf \emptyset
            }
            \bussUn{
                \vdash \pOut z[c,d] ; \0 \typInf z:B \tensor \ol{B} , c:\ol{B} , d:B
            }
            \bussUn{
                \vdash \pRes{cd} \pOut z[c,d] ; \0 \typInf z:B \tensor \ol{B}
            }
            \bussUn{
                \vdash \pOut x[a,b] ; \pRes{cd} \pOut z[c,d] ; \0 \typInf x:A \tensor \ol{A} , a:\ol{A} , b:A , z:B \tensor \ol{B}
            }
            \bussUn{
                \vdash \underbrace{ \pRes{ab} \pOut x[a,b] ; \pRes{cd} \pOut z[c,d] ; \0 }_{Q_1} \typInf x:A \tensor \ol{A} , z:B \tensor \ol{B}
            }
        \end{bussproof}
        \\
        \begin{bussproof}
            \bussAx{
                \vdash \pFwd [v<>w'] \typInf v:\ol{B} , w':B
            }
            \bussAx{
                \vdash \pFwd [u<>y'] \typInf u:\ol{A} , y':A
            }
            \bussBin{
                \vdash \pFwd [v<>w'] \| \pFwd [u<>y'] \typInf v:\ol{B} , w':B , u:\ol{A} , y':A
            }
            \bussUn{
                \vdash \pIn y(u,y') ; ( \pFwd [v<>w'] \| \pFwd [u<>y'] ) \typInf y:\ol{A} \parr A , v:\ol{B} , w':B
            }
            \bussUn{
                \vdash \underbrace{ \pIn w(v,w') ; \pIn y(u,y') ; ( \pFwd [v<>w'] \| \pFwd [u<>y'] ) }_{Q_2} \typInf w:\ol{B} \parr B , y:\ol{A} \parr A
            }
        \end{bussproof}
        \\
        \begin{bussproof}
            \bussAssume{
                \vdash Q_1 \typInf x:A \tensor \ol{A} , z:B \tensor \ol{B}
            }
            \bussAssume{
                \vdash Q_2 \typInf w:\ol{B} \parr B , y:\ol{A} \parr A
            }
            \bussBin{
                \vdash Q_1 \| Q_2 \typInf x:A \tensor \ol{A} , z:B \tensor \ol{B} , w:\ol{B} \parr B , y:\ol{A} \parr A
            }
            \bussUn{
                \vdash \pRes{zw} ( Q_1 \| Q_2 ) \typInf x:A \tensor \ol{A} , y:\ol{A} \parr A
            }
            \bussUn{
                \vdash \pRes{xy} \pRes{zw} ( Q_1 \| Q_2 ) \typInf \emptyset
            }
        \end{bussproof}
    \end{mathpar}
\end{example}

\noindent
The chapters in \Cref{p:pi} extend and restrict \basepi in such a way that deadlock-freedom is guaranteed by typing.

\section[Synchronous versus Asynchronous Communication]{Synchronous versus \texorpdfstring{\\}{} Asynchronous Communication}
\label{s:basepi:syncAsync}

A mentioned in the introduction to this chapter, the mode of communication of \basepi is synchronous: when a process sends a message, it needs to wait for the message to be received.
Typically, the continuation of an output's session is implemented in the output's continuation process.
\Basepi supports this style of implementing sessions, though it does not enforce it:
\begin{align*}
    & \vdash \pRes{bx'} \pOut x[a,b] ; P \typInf \Delta , x:A \tensor B , a:\ol{A}
    & & \text{where } \vdash P \typInf \Delta , x':B
    \\
    & \vdash \pRes{bx'} \pSel x[b] < j ; P \typInf \Delta , x:\oplus_{i \in I} A_i
    & & \text{where } \vdash P \typInf \Delta , x':A_j
\end{align*}

Because \basepi does not enforce that the continuations of outputs' sessions are implemented in their continuation processes, \basepi straightforwardly supports asynchronous communication.
The idea is that an output session's continuation is implemented in parallel to the output, bound to the output's continuation name by restriction:
\begin{align*}
    & \vdash \pRes{bx'} ( \pOut x[a,b] ; \0 \| P ) \typInf \Delta, x:A \tensor B , a:\ol{A}
    & & \text{where } \vdash P \typInf \Delta , x':B
    \\
    & \vdash \pRes{bx'} ( \pSel x[b] < j ; \0 \| P ) \typInf \Delta , x:\oplus_{i \in I}
    & & \text{where } \vdash P \typInf \Delta , x':A_j
\end{align*}

To enforce asynchronous communication, one can restrict the continuations of outputs to always be $\0$.
Omitting the trailing $\0$ of outputs, I can then derive the following typing rules for asynchronous output:
\begin{align*}
    \begin{bussproof}[typ-send-async]
        \bussAx{
            \vdash \pOut x[a,b] \vphantom{; \0} \typInf x:A \tensor B , a:\ol{A} , b:\ol{B}
        }
    \end{bussproof}
    \quad &\deq \quad
    \begin{bussproof}
        \bussAx{
            \vdash \0 \typInf \emptyset
        }
        \bussUn{
            \vdash \pOut x[a,b] ; \0 \typInf x:A \tensor B , a:\ol{A} , b:\ol{B}
        }
    \end{bussproof}
    \\
    \begin{bussproof}[typ-sel-async]
        \bussAssume{
            j \in I
        }
        \bussUn{
            \vdash \pSel x[b] < j \typInf x:\oplus_{i \in I} A_i , b:\ol{A_j}
        }
    \end{bussproof}
    \quad &\deq \quad
    \begin{bussproof}
        \bussAx{
            \vdash \0 \typInf \emptyset
        }
        \bussAssume{
            j \in I
        }
        \bussBin{
            \vdash \pSel x[b] < j ; \0 \typInf x:\oplus_{i \in I} A_i , b:\ol{A_j}
        }
    \end{bussproof}
\end{align*}

Roughly speaking, this approach to sessions with asynchronous communication is equivalent to extending \basepi with ordered message buffers~\cite{conf/csl/DeYoungCPT12}: outputs place their messages in buffers to unblock their continuations, and inputs will retrieve messages from buffers.

\section{Extensions of \texorpdfstring{\Basepi}{Base-pi}}
\label{s:basepi:extensions}

Here, I give a brief overview of how the rest of the chapters in \Cref{p:pi} extend \basepi.
\begin{itemize}
    \item[\Cref{c:APCP}] introduces \APCP.
        Its process language restricts to asynchronous communication (as described in \Cref{s:basepi:syncAsync}), makes the closing of sessions implicit, and adds tail-recursion.
        Its type system conflates \basepi's $\1$ and $\bot$ to $\bullet$, and adds tail-recursive types.
        To guarantee deadlock-freedom, \APCP adds priority annotations to its types and type system.
        \Cref{s:APCP:APCP:exts} additionally describes how to integrate explicit closing (with types $\1$ and~$\bot$) and servers and clients (i.e., unrestricted sessions to be used any number of types).

    \item[\Cref{c:clpi}] introduces \clpi.
        Its process language restricts to synchronous communication, and adds constructs for non-determinism: non-deterministic choice and prefixes for session that may not be available.
        Its type system only adds a typing rule for non-deterministic choice, and modalities for sessions that may not be available.
        To guarantee deadlock-freedom, \clpi replaces Rule~\ruleLabel{typ-res} with linear logic's Rule~\ruleLabel{cut}: this forbids cyclically connected processes, thus preventing circular dependencies and hence deadlocks.

    \item[\Cref{c:piBI}] introduces \pibi.
        Its process language restricts to synchronous communication, and adds the spawn construct.
        Its type system is based on the logic of bunched implications (\BI), rather than linear logic.
        Besides significant differences in structural rules (i.e., how resource duplication and discarding is handled), the type sytem of \pibi uses \emph{two-sided sequents} derived from \BI.
        The types in \pibi are similar to those in \basepi, but each connective has a left- and a right-rule, describing how to \emph{use} and \emph{provide} behavior, respectively.
        The \pibi deadlock-freedom guarantee follows that of \clpi described above, as \BI also has a Rule~\ruleLabel{cut}.
\end{itemize}


\APCPtrue \chapter{Cyclic Networks and Asynchronous Communication}
\label{c:APCP}

This chapter studies new ways of combining message-passing processes with cyclic connections and asynchronous communication.
The setting is that of session type systems derived from linear logic, and a main aim is to maintain deadlock-freedom by typing.
It thus answers the following research question, introduced in \Cref{s:intro:contrib:pi:APCP}:

\requAPCP*

\section{Introduction}
\label{s:APCP:intro}

\ExecuteMetaData[\fileLMCS]{APCP:intro:1}

\ExecuteMetaData[\fileLMCS]{APCP:intro:2}

\ExecuteMetaData[\fileLMCS]{APCP:intro:3}

\section{Example: Milner's Cyclic Scheduler in \texorpdfstring{\APCP}{APCP}}
\label{s:APCP:milner}

\ExecuteMetaData[\fileLMCS]{APCP:milner}

\section[\texorpdfstring{\APCP}{APCP}: Asynchronous Priority-based Classical Processes]{\texorpdfstring{\APCP}{APCP}: Asynchronous Priority-based \texorpdfstring{\\}{} Classical Processes}
\label{s:APCP:APCP}

\ExecuteMetaData[\fileLMCS]{APCP:APCP}

\subsection{The Process Language}
\label{s:APCP:APCP:procLang}

\ExecuteMetaData[\fileLMCS]{APCP:APCP:proclang}

\subsection{The Type System}
\label{s:APCP:APCP:types}

\ExecuteMetaData[\fileLMCS]{APCP:APCP:types}

\subsection{Type Preservation and Deadlock-freedom}
\label{s:APCP:APCP:results}

\ExecuteMetaData[\fileLMCS]{APCP:APCP:results}

\subsection{Reactivity}
\label{s:APCP:APCP:react}

\ExecuteMetaData[\fileLMCS]{APCP:APCP:react}

\subsection{Typing Milner's Cyclic Scheduler}
\label{s:APCP:APCP:milner}

\ExecuteMetaData[\fileLMCS]{APCP:APCP:milner}

\subsection{Extensions: Explicit Session Closing and Replicated Servers}
\label{s:APCP:APCP:exts}

\ExecuteMetaData[\fileLMCS]{APCP:APCP:exts}

\section{Related Work}
\label{s:APCP:rw}

\ExecuteMetaData[\fileLMCS]{APCP:rw}

\section{Closing Remarks}
\label{s:APCP:concl}

\ExecuteMetaData[\fileLMCS]{APCP:concl}

 \APCPfalse
\chapter{Non-determinism}
\label{c:clpi}

This chapter studies non-determinism in message-passing processes.
Though this is not a new theme in general, it has not been thoroughly studied in context of session type systems derived from linear logic.
The aim is to introduce expressive forms of non-deterministic choice, while retaining correctness guarantees inherited from linear logic such as deadlock-freedom.
Hence, this chapter attempts to answer the following research question, introduced in \Cref{s:intro:contrib:pi:clpi}:

\requND*

This chapter is the result of a collaboration between myself and Joseph W.\ N.\ Paulus, Daniele Nantes-Sobrinho, and Jorge A.\ Pérez.
Here, I focus on the parts of our publication (to appear) to which I contributed most significantly: the design of \clpi, its two semantics, and its meta-theoretical results (type preservation and deadlock-freedom for both semantics, and a formal comparison of the semantics).
The publication contains additional contributions: a resource \lamcalc where elements are fetched from bags non-deterministically, a translation from this calculus into \clpi, and operational correspondence results for this translation under both semantics; these contributions are mainly due to Joseph W.\ N.\ Paulus, who will include them in their own dissertation.

\section{Introduction}
\label{s:clpi:intro}

\ExecuteMetaData[\fileAPLAS]{clpi:intro}

\section{A Typed \texorpdfstring{\picalc}{pi-calculus} with Non-deterministic Choice}
\label{s:clpi:clpi}

\ExecuteMetaData[\fileAPLAS]{clpi:clpi}

\subsection{Syntax and Semantics}
\label{s:clpi:clpi:syntaxSemantics}

\ExecuteMetaData[\fileAPLAS]{clpi:clpi:syntaxSemantics}

\subsection{Resource control for \texorpdfstring{\clpi}{clpi} via Session Types}
\label{s:clpi:clpi:typeSys}

\ExecuteMetaData[\fileAPLAS]{clpi:clpi:typeSys}

\subsection{Correctness Properties}
\label{s:clpi:clpi:props}

\ExecuteMetaData[\fileAPLAS]{clpi:clpi:props}

\section{An Eager Semantics for \texorpdfstring{\clpi}{clpi}}
\label{s:clpi:eager}

\ExecuteMetaData[\fileAPLAS]{clpi:eager}

\subsection{The Semantics}
\label{s:clpi:eager:semantics}

\ExecuteMetaData[\fileAPLAS]{clpi:eager:semantics}

\subsection{Comparing Lazy and Eager Semantics}
\label{s:clpi:eager:comp}

\ExecuteMetaData[\fileAPLAS]{clpi:eager:comp}

\section{Related Work}
\label{s:clpi:rw}

\ExecuteMetaData[\fileAPLAS]{clpi:rw}

\section{Conclusions}
\label{s:clpi:concl}

\ExecuteMetaData[\fileAPLAS]{clpi:concl}


\piBItrue \chapter{A Bunch of Sessions}
\label{c:piBI}

This chapter studies alternative logical foundations for session-typed message-passing processes.
In particular, it proposes a Curry-Howard interpretation of the logic of bunched implications as a session type system.
The aim is to reconcile the message-passing features obtained from this interpretation with practical concerns, as well as maintaining correctness guarantees such as deadlock-freedom.
This chapter thus addresses the following research question, introduced in \Cref{s:intro:contrib:pi:piBI}:

\requBI*

This chapter is the result of a collaboration between myself and Dan Frumin, Emanuele D'Osualdo, and Jorge A.\ Pérez.
Here, I focus on the parts of our publication~\cite{conf/oopsla/FruminDHP22} to which I contributed most significantly: the design of \piBI and its meta-theoretical results (type preservation, deadlock-freedom, and weak normalization).
The publication contains additional contributions: a denotational semantics to derive behavior equivalence and provenance tracking (omitted as a whole), and a translation from the \alamcalc into \piBI (\Cref{c:alphalambda}).

\section{Introduction}
\label{s:piBI:intro}

\ExecuteMetaData[\fileOOPSLA{intro}]{piBI:intro}

\section{The Calculus \texorpdfstring{\pibi}{piBI}}
\label{s:piBI:piBI}

\ExecuteMetaData[\fileOOPSLA{calculus}]{piBI:piBI}

\subsection{Process Syntax}
\label{s:piBI:piBI:syntax}

\ExecuteMetaData[\fileOOPSLA{calculus}]{piBI:piBI:syntax}

\subsection{Reduction Semantics}
\label{s:piBI:piBI:reduction}

\ExecuteMetaData[\fileOOPSLA{calculus}]{piBI:piBI:reduction}

\subsection{Typing}
\label{s:piBI:piBI:typing}

\ExecuteMetaData[\fileOOPSLA{calculus}]{piBI:piBI:typing}

\subsection{Examples and Comparisons}
\label{s:piBI:piBI:examples}

\ExecuteMetaData[\fileOOPSLA{calculus}]{piBI:piBI:examples}

\section{Meta-theoretical Properties}
\label{s:piBI:meta}

\ExecuteMetaData[\fileOOPSLA{meta}]{piBI:meta}

\subsection{Type Preservation and Deadlock-freedom}
\label{s:piBI:meta:tpdf}

\ExecuteMetaData[\fileOOPSLA{meta}]{piBI:meta:tpdf}

\subsection{Weak Normalization}
\label{s:piBI:meta:wn}

\ExecuteMetaData[\fileOOPSLA{meta}]{piBI:meta:wn}

\section{Related Work}
\label{s:piBI:rw}

\ExecuteMetaData[\fileOOPSLA{relwork}]{piBI:rw}

\section{Concluding Remarks and Future Perspectives}
\label{s:piBI:concl}

\ExecuteMetaData[\fileOOPSLA{conclusion}]{piBI:concl}

 \piBIfalse

\addtocontents{toc}{\protect\hrulefill\protect\leavevmode\par}
\mypart[Message-passing Functions as Processes]{Message-passing Functions \texorpdfstring{\\}{} as Processes}
\label{p:lambda}

\LASTntrue 
\chapter[Cyclic Thread Configurations and Asynchronous Communication]{Cyclic Thread Configurations \texorpdfstring{\\}{} and \texorpdfstring{\\}{} Asynchronous Communication}
\label{c:LASTn}

This chapter studies message-passing in functional programming.
The focus is on cyclic connections between threads and asynchronous communication.
The aim is to exploit prior knowledge on these aspects on the level of message-passing processes in \APCP.
By building strong ties between these worlds of functional programming and processes, this chapter transfers correctness results from the latter to the former.
It thus provides an answer to the following research question, introduced in \Cref{s:intro:contrib:lambda:LASTn}:

\requLASTn*

\ifappendix
This chapter includes a self-contained summary of \APCP (\Cref{s:LASTn:APCP}).
Omitted details and proofs can be found in the chapter dedicated to \APCP (\Cref{c:APCP}).
\else
This chapter relies on the definition and results of \APCP, presented in detail in \Cref{c:APCP}; because this chapter does not require \APCP's recursion, it is omitted entirely.
\fi

\section{Introduction}
\label{s:LASTn:intro}

Society relies heavily on software consisting of distributed components that cooperate by asynchronously exchanging messages.
It is vital that such software functions as intended and without error.
In this context, we study \emph{session types}: behavioral types that represent communication protocols used in the static verification of message-passing software.

\ExecuteMetaData[\fileLMCS]{LASTn:intro:1}
\ExecuteMetaData[\fileLMCS]{LASTn:intro:2}
\ExecuteMetaData[\fileLMCS]{LASTn:intro:3}

\section{A Bookshop Scenario in \LASTn}
\label{s:LASTn:example}

\ExecuteMetaData[\fileLMCS]{LASTn:example}

\section{An Intermezzo: From \texorpdfstring{\APCP}{APCP} to \texorpdfstring{\protect\LAST}{LAST}}
\label{s:LASTn:LAST}

\ExecuteMetaData[\fileLMCS]{LASTn:LAST}

\subsection{The Syntax and Semantics of \protect\LAST*}
\label{s:LASTn:LAST:syntaxSemantics}

\ExecuteMetaData[\fileLMCS]{LASTn:LAST:syntaxSemantics}

\subsection{The Type System of \protect\LAST*}
\label{s:LASTn:LAST:typeSystem}

\ExecuteMetaData[\fileLMCS]{LASTn:LAST:typeSystem}

\ifappendix
\subsection{\APCP: a Summary}
\label{s:LASTn:APCP}

We briefly summarize the session-typed process calculus \APCP (Asynchronous Priority-based Classical Processes).
We introduce its syntax, semantics, and type system.
\APCP supports tail-recursion, but since \LAST* does not we omit it.
We also summarize results: type preservation (\Cref{t:LASTn:APCP:tp}) and deadlock-freedom (\Cref{t:LASTn:APCP:df}).
\Cref{c:APCP} discusses all details of \APCP.

\paragraph{Syntax.}

\ExecuteMetaData[\fileLMCS]{LASTn:APCP:syntax}

\paragraph{Operational semantics.}

\ExecuteMetaData[\fileLMCS]{LASTn:APCP:semantics}

\paragraph{Type system.}

\ExecuteMetaData[\fileLMCS]{LASTn:APCP:types}

\paragraph{Results.}

\ExecuteMetaData[\fileLMCS]{LASTn:APCP:results}
\fi

\subsection{Towards a Faithful Translation of \protect\LAST* into APCP}
\label{s:LASTn:LAST:trans}

\ExecuteMetaData[\fileLMCS]{LASTn:LAST:trans}

\section{\texorpdfstring{\LASTn}{LASTn} and a Faithful Translation into \texorpdfstring{\APCP}{APCP}}
\label{s:LASTn:LASTn}

\ExecuteMetaData[\fileLMCS]{LASTn:LASTn}

\subsection{The Language of \texorpdfstring{\LASTn}{LASTn}}
\label{s:LASTn:LASTn:language}

\ExecuteMetaData[\fileLMCS]{LASTn:LASTn:language}

\subsection{Faithfully Translating \texorpdfstring{\LASTn}{LASTn} into \texorpdfstring{\APCP}{APCP}}
\label{s:LASTn:LASTn:trans}

\ExecuteMetaData[\fileLMCS]{LASTn:LASTn:trans}

\subsection{Deadlock-free \texorpdfstring{\LASTn}{LASTn}}
\label{s:LASTn:LASTn:df}

\ExecuteMetaData[\fileLMCS]{LASTn:LASTn:df}

\section{Related Work}
\label{s:LASTn:rw}

\ExecuteMetaData[\fileLMCS]{LASTn:rw}

\section{Closing Remarks}
\label{s:LASTn:concl}

\ExecuteMetaData[\fileLMCS]{LASTn:concl}

 \LASTnfalse
\alphalambdatrue 
\chapter{Bunched Functions as Processes}
\label{c:alphalambda}

This chapter discusses an interpretation of the logic of bunched implications (\BI) as a typed functional calculus.
It frames the calculus as a prototype programming language.
This way, an important litmus test is obtained for the interpretation of \BI as \pibi, a session-typed message-passing process calculus: a faithful translation would show that the process calculus can accurately model real programming languages and concepts.
Thus, this chapter addresses the following research question, introduced in \Cref{s:intro:contrib:lambda:alphalambda}:

\requAL*

This chapter is the result of a collaboration between myself and Dan Frumin, Emanuele D'Osualdo, and Jorge A.\ Pérez.
Here, I focus on the translation part of our publication~\cite{conf/oopsla/FruminDHP22} to which I contributed significantly.
\ifappendix
I include a self-contained summary of \BI and \pibi (\Cref{s:alphalambda:pibi}); omitted details and proofs can be found in the chapter dedicated to \pibi (\Cref{c:piBI}).
The publication contains additional material: a denotational semantics to derive behavioral equivalence and provenance tracking (omitted as a whole).
\else
This chapter relies on the definition and results of \BI and \pibi, discussed in detail in \Cref{c:piBI} on \Cpageref{c:piBI}.
Additional material (a denotation semantics to define behavioral equivalence and provenance tracking) can be found in~\cite{thesis/vdHeuvel24ex}.
\fi

\section{Introduction}
\label{s:alphalambda:intro}

The \picalc~\cite{book/Milner89,journal/ic/MilnerPW92} is a paradigmatic process calculus for studying message-passing concurrency, i.e., the interaction between concurrent processes collaborating by exchanging messages.
Studying message-passing concurrency as in the \picalc is important, as it reveals fine-grained and precise properties of systems that coordinate by exchanging messages.
However, due to this level of precision in the \picalc, it is not always obvious how results for the \picalc can be applied to real-world programming.

One way to gain insights from the \picalc is to design programming calculi that are more coarse and perhaps less expressive, but more intuitive.
Such calculi can showcase more directly the properties of programs gained from formal study.
Ideally, this idea can be strengthened by translation of the more intuitive calculus into the \picalc.
This then shows that the fine-grained features of the \picalc suitably handle the features of the coarser calculus, possibly enabling the transference of some of the properties of the \picalc.

This story also works in the opposite direction.
Suppose given a calculus that is intuitive to work with.
Because of the coarseness of its design, it may not be clear how it can be implemented as a programming language (e.g., how to design a compiler).
A translation into the \picalc can then reveal more precise and fine-grained principles, giving insights in how to achieve such an implementation.

\paragraph{Our contributions.}

In this chapter, we study such a translation.
In particular, we study the \alcalc~\cite{journal/bsl/OHearnP99,journal/jfp/OHearn03,book/Pym13}: a variant of the \lamcalc, derived from a Curry-Howard interpretation of the logic of bunched implications (\BI).
The \alcalc is interesting, because it inherits from \BI a fine-grained resource control based on ownership.
Though such properties are an interesting topic for study on their own, in this chapter the focus lies on translating the \alcalc into \piBI.
The calculus \piBI is a variant of the \picalc~\cite{conf/oopsla/FruminDHP22}, also derived from a Curry-Howard interpretation of \BI.
As such, \piBI is also interesting on its own, as it implements \BI's management of ownership explicity in the process language.
In contrast, the \alcalc handles ownership management more coarsely and implicitly.
Here, we show how \piBI can serve as a fine-grained explanation of the management of ownership in the \alcalc inherited from \BI.
Details on \piBI and the management of ownership derived from \BI can be found in \Cref{c:alphalambda}.

In \Cref{s:alphalambda:alphalambda} we introduce the \alcalc\ifappendix, and in \Cref{s:alphalambda:pibi} we briefly recall \piBI\fi.
In \Cref{s:alphalambda:trans} we define a translation from the \alcalc into \piBI, and show that this translation satisfies \emph{operational correspondences}: established correctness properties of translations that ensure that the behavior of translated programs precisely captures the behavior of source programs, attesting to the faithfulness of the translation.
\Cref{s:alphalambda:rw,s:alphalambda:concl} briefly discuss related work and conclusions, respectively.

\section{The \texorpdfstring{\alamcalc}{alpha-lambda-calculus}}
\label{s:alphalambda:alphalambda}

Here, we introduce the \alamcalc: syntax (\Cref{s:alphalambda:alphalambda:syntax}), semantics (\Cref{s:alphalambda:alphalambda:reduction}), and type system (\Cref{s:alphalambda:alphalambda:typeSys}), as introduced by O'Hearn and Pym~\cite{journal/bsl/OHearnP99,journal/jfp/OHearn03,book/Pym13}.

\subsection{Syntax}
\label{s:alphalambda:alphalambda:syntax}

\ExecuteMetaData[\fileOOPSLA{translation}]{alphalambda:alphalambda:syntax}

\subsection{Semantics}
\label{s:alphalambda:alphalambda:reduction}

\ExecuteMetaData[\fileOOPSLA{translation}]{alphalambda:alphalambda:reduction}

\subsection{Type System}
\label{s:alphalambda:alphalambda:typeSys}

The type system for the \alcalc is derived from the natural deduction presentation of \BI.
It follows the idea of the simply-typed \lamcalc, where a term requires variables of a certain type to provide a behavior of a certain type.
As such, \alcalc typing judgments are of the form $\Delta \vdash M : A$, where the term $M$ provides a behavior of type $A$ by using the variables in the typing context $\Delta$.

\begin{figure}[p]
    \def\MathparLineskip{\lineskip=3pt}
    \def\defaultHypSeparation{\hskip1ex}
    \ExecuteMetaData[\fileOOPSLA{calculus}]{alphalambda:alphalambda:types}

    \dashes

    Typing:
    \vspace{-1.5ex}
    \begin{mathpar}
        \begin{bussproof}[typ-id]
            \bussAx{
                x:A \vdash x : A
            }
        \end{bussproof}
        \and
        \begin{bussproof}[typ-cong]
            \bussAssume{
                \Delta \vdash M : A
            }
            \bussAssume{
                \Delta \equiv \Theta
            }
            \bussBin{
                \Theta \vdash M : A
            }
        \end{bussproof}
        \and
        \begin{bussproof}[typ-weaken]
            \bussAssume{
                \bunchCtx{\Gamma}[\Delta] \vdash M : A
            }
            \bussUn{
                \bunchCtx{\Gamma}[\Delta ; \Delta'] \vdash M : A
            }
        \end{bussproof}
        \and
        \begin{bussproof}[typ-contract]
            \bussAssume{
                \bunchCtx{\Gamma}[\idx{\Delta}{1} ; \idx{\Delta}{2}] \vdash M : A
            }
            \bussUn{
                \bunchCtx{\Gamma}[\Delta] \vdash M \{ x/\idx{x}{1} , x/\idx{x}{2} \mid x \in \fv(\Delta) \} : A
            }
        \end{bussproof}
        \and
        \begin{bussproof}[typ-wand-I]
            \bussAssume{
                \Delta , x:A \vdash M : B
            }
            \bussUn{
                \Delta \vdash \lam x . M : A \wand B
            }
        \end{bussproof}
        \and
        \begin{bussproof}[typ-impl-I]
            \bussAssume{
                \Delta ; x:A \vdash M : B
            }
            \bussUn{
                \Delta \vdash \alpha x . M : A \to B
            }
        \end{bussproof}
        \and
        \begin{bussproof}[typ-wand-E]
            \bussAssume{
                \Delta \vdash M : A \wand B
            }
            \bussAssume{
                \Theta \vdash N : A
            }
            \bussBin{
                \Delta , \Theta \vdash M\ N : B
            }
        \end{bussproof}
        \and
        \begin{bussproof}[typ-impl-E]
            \bussAssume{
                \Delta \vdash M : A \to B
            }
            \bussAssume{
                \Theta \vdash N : A
            }
            \bussBin{
                \Delta ; \Theta \vdash M \at N : B
            }
        \end{bussproof}
        \and
        \begin{bussproof}[typ-emp-I]
            \bussAx{
                \mEmpty \vdash \mUnit : \mOne
            }
        \end{bussproof}
        \and
        \begin{bussproof}[typ-true-I]
            \bussAx{
                \aEmpty \vdash \aUnit : \aOne
            }
        \end{bussproof}
        \and
        \begin{bussproof}[typ-emp-E]
            \bussAssume{
                \Delta \vdash M : \mOne
            }
            \bussAssume{
                \bunchCtx{\Gamma}[\mEmpty] \vdash N : A
            }
            \bussBin{
                \bunchCtx{\Gamma}[\Delta] \vdash \tLet \mUnit = M \tIn N : A
            }
        \end{bussproof}
        \and
        \begin{bussproof}[typ-true-E]
            \bussAssume{
                \Delta \vdash M : \aOne
            }
            \bussAssume{
                \bunchCtx{\Gamma}[\aEmpty] \vdash N : A
            }
            \bussBin{
                \bunchCtx{\Gamma}[\Delta] \vdash \tLet \aUnit = M \tIn N : A
            }
        \end{bussproof}
        \and
        \begin{bussproof}[typ-sep-I]
            \bussAssume{
                \Delta \vdash M : A
            }
            \bussAssume{
                \Theta \vdash N : B
            }
            \bussBin{
                \Delta , \Theta \vdash \<M,N\> : A \sep B
            }
        \end{bussproof}
        \and
        \begin{bussproof}[typ-conj-I]
            \bussAssume{
                \Delta \vdash M : A
            }
            \bussAssume{
                \Theta \vdash N : B
            }
            \bussBin{
                \Delta ; \Theta \vdash (M,N) : A \land B
            }
        \end{bussproof}
        \and
        \begin{bussproof}[typ-sep-E]
            \bussAssume{
                \Delta \vdash M : A \sep B
            }
            \bussAssume{
                \bunchCtx{\Gamma}[x:A , y:B] \vdash N : C
            }
            \bussBin{
                \bunchCtx{\Gamma}[\Delta] \vdash \tLet \<x,y\> = M \tIn N : C
            }
        \end{bussproof}
        \and
        \begin{bussproof}[typ-conj-E]
            \bussAssume{
                \Delta \vdash M : A_1 \land A_2
            }
            \bussAssume{
                i \in \{1,2\}
            }
            \bussBin{
                \Delta \vdash \pi_i M : A_i
            }
        \end{bussproof}
        \and
        \begin{bussproof}[typ-disj-I]
            \bussAssume{
                \Delta \vdash M : A_i
            }
            \bussAssume{
                i \in \{1,2\}
            }
            \bussBin{
                \Delta \vdash \tSel_i(M) \vdash A_1 \lor A_2
            }
        \end{bussproof}
        \and
        \begin{bussproof}[typ-disj-E]
            \bussAssume{
                \Delta \vdash M : A_1 \lor A_2
            }
            \bussAssume{
                \bunchCtx{\Gamma}[x_1:A_1] \vdash N_1 : C
            }
            \bussAssume{
                \bunchCtx{\Gamma}[x_2:A_2]
                \vdash N_2 : C
            }
            \bussTern{
                \bunchCtx{\Gamma}[\Delta] \vdash \tCase M \tOf \{ 1(x_1):N_1 , 2(x_2):N_2 \} : C
            }
        \end{bussproof}
        \and
        \begin{bussproof}[typ-cut]
            \bussAssume{
                \Delta \vdash M : A
            }
            \bussAssume{
                \bunchCtx{\Gamma}[x:A] \vdash N : C
            }
            \bussBin{
                \bunchCtx{\Gamma}[\Delta] \vdash N \{ M/x \} : C
            }
        \end{bussproof}
    \end{mathpar}

    \caption{Types and typing rules for the \alcalc.}
    \label{f:alphalambda:typing}
\end{figure}

The top of \Cref{f:alphalambda:typing} gives types, bunches, and bunched contexts; we explain the behavior associated with types when we discuss the typing rules below.%
\ExecuteMetaData[\fileOOPSLA{calculus}]{alphalambda:alphalambda:bunch}

\Cref{f:alphalambda:typing} also gives the typing rules for the \alamcalc.
Rule~\ruleLabel{typ-id} types a variable of arbitrary type.
Rule~\ruleLabel{typ-cong} closes typing under \ExecuteMetaData[\fileOOPSLA{calculus}]{alphalambda:alphalambda:bunchEquiv}
Rule~\ruleLabel{typ-weaken} discards resources from a bunch joined by `$\,;\,$' (i.e., additively combined).
Rule~\ruleLabel{typ-contract} duplicates resources under `$\,;\,$'; notation $\idx{\Delta}{1}$ denotes a copy of $\Delta$ with each variable annotated with `$\_^{(1)}$'.

Rule~\ruleLabel{typ-wand-I} (resp.~\ruleLabel{typ-impl-I}) types a multiplicative (resp.\ additive) abstraction.
Notice how the resources are joined with `$\,,\,$' (resp.~`$\,;\,$'), thus limiting (resp.\ enabling) the application of Rules~\ruleLabel{typ-weaken} and~\ruleLabel{typ-contract}; this principle holds for all other rules.
Rules~\ruleLabel{typ-wand-E} and~\ruleLabel{typ-impl-E} type function application.
Rules~\ruleLabel{typ-empty-I} and~\ruleLabel{typ-true-I} type the units, and Rule~\ruleLabel{typ-empty-E} and~\ruleLabel{typ-true-E} type unit-lets.
Rule~\ruleLabel{typ-sep-I} types multiplicative pairs, and Rule~\ruleLabel{typ-sep-E} types their unpacking.
Rule~\ruleLabel{typ-conj-I} types additive pairs, and Rule~\ruleLabel{typ-conj-E} types their projection.
Rule~\ruleLabel{typ-disj-I} types selection, and Rule~\ruleLabel{typ-disj-E} types branching.
Finally, Rule~\ruleLabel{typ-cut} types substitution; note that this rule is admissible (cf.\ \cite{journal/jfp/OHearn03}).

\ifappendix
\section{\texorpdfstring{\piBI}{piBI}: a Summary}
\label{s:alphalambda:pibi}

Similar to how the \alamcalc is derived from the natural deduction presentation of \BI, \piBI is a variant of the session-typed \picalc derived from the sequent calculus presentation of \BI.
In this section, we give a brief summary of \piBI; see \Cref{c:piBI} for details and proofs.

\ExecuteMetaData[\fileOOPSLA{calculus}]{alphalamda:piBI}

\paragraph{Syntax.}

\ExecuteMetaData[\fileOOPSLA{calculus}]{alphalamda:piBI:syntax}

\paragraph{Semantics.}

\ExecuteMetaData[\fileOOPSLA{calculus}]{alphalamda:piBI:semantics}

\paragraph{Typing.}

\ExecuteMetaData[\fileOOPSLA{calculus}]{alphalamda:piBI:typing}
\fi

\section{Translating the \texorpdfstring{\alamcalc}{alpha-lambda-caluclus} into \texorpdfstring{\piBI}{piBI}}
\label{s:alphalambda:trans}

\ExecuteMetaData[\fileOOPSLA{translation}]{alphalambda:trans}

\subsection{The Translation}
\label{s:alphalambda:trans:trans}

\ExecuteMetaData[\fileOOPSLA{translation}]{alphalambda:trans:trans}

\subsection{Operational Correspondence}
\label{s:alphalambda:trans:oc}

\ExecuteMetaData[\fileOOPSLA{translation}]{alphalambda:trans:oc}

\subsubsection{Completeness}
\label{s:alphalambda:trans:oc:completeness}

\ExecuteMetaData[\fileOOPSLA{translation}]{alphalambda:trans:oc:completeness}

\subsubsection{Soundness}
\label{s:alphalambda:trans:oc:soundness}

\ExecuteMetaData[\fileOOPSLA{translation}]{alphalambda:trans:oc:soundness}

\section{Related Work}
\label{s:alphalambda:rw}

\ExecuteMetaData[\fileOOPSLA{relwork}]{alphalambda:rw}

\section{Conclusions}
\label{s:alphalambda:concl}

We have presented a translation from O'Hearn and Pym's \alamcalc---a variant of the \lamcalc derived from the logic of bunched implications (\BI)---to our own \piBI---a variant of the \picalc derived from \BI.
As main result, we have shown that the translation is operationally correct: translated processes preserve the behavior of source terms (completeness) and no more (soundness).
Hence, the translation induces an adequate explanation of the implicit management of resource ownership in the \alamcalc in terms of explicit ownership management in \piBI.

 \alphalambdafalse

\addtocontents{toc}{\protect\hrulefill\protect\leavevmode\par}
\mypart[Multiparty Session Types: Distributed and Asynchronous]{Multiparty Session Types: Distributed and Asynchronous}
\label{p:mpst}

\relProjtrue 
\chapter{The Global and Local Perspective: Relative Types}
\label{c:relProj}

This chapter discusses multiparty session types (\MPSTs), from a global and local perspective.
That is, \MPSTs are usually expressed as global types that describe communication protocols from a vantage point.
But confirming protocol conformance is usually done on a protocol participant level, so a perspective of the protocol that is ``local'' to the participant is needed.
Prior methods of obtaining such local perspectives have been shown to be problematics, so this chapter explores new ways of obtaining local perspectives.
Hence, it answers the following research question, introduced in \Cref{s:intro:contrib:mpst:relProj}:

\requRelProj*

\section{Introduction}
\label{s:relProj:intro}

It is essential that the components of which modern distributed software systems consist communicate well.
This is easier said than done, as these components are often provided by several distributors, who prefer not to share the specifications of their software.
As a result, developers of distributed software are left to guess at the behavior of third-party components, as their public documentation is often incomplete or outdated.

It is thus of the utmost importance to find ways of taming the communication between the components of distributed systems, on which we rely in our everyday lives.
Designing methods for guaranteeing that components communicate correctly should be guided by two main principles: (1)~the techniques should be straightforward to implement in existing software, and (2)~verification techniques should be compositional.
Principle~(1) helps in convincing software distributors to invest in implementing the techniques, as the implementation requires little effort but gains greater accessibility making their software more attractive.
Principle~(2) means that guaranteeing the correctness of individual components implies the correctness of the whole system.
This entails that software distributors can guarantee that their software behaves correctly in any (correctly behaving) circumstances, without disclosing their software's design.
Moreover, it alleviates verifying correctness in entire distributed systems, which can be costly and complex, as distributed systems contain a lot of concurrency (i.e., things can happen in many orders).

A particularly salient approach to verifying the correctness of communication between distributed components can be found in multiparty session types (\MPSTs)~\cite{conf/popl/HondaYC08,journal/acm/HondaYC16}.
The theory of \MPSTs deals with communication \emph{protocols}, i.e., sequences of message exchanges between protocol \emph{participants}.
The messages may contain data, but also labels that determine which path to take when a protocol branches.
In \MPSTs, such protocols are often expressed as \emph{global types} that describe communication protocols between multiple participants from a vantage point.
The following global type expresses a protocol where a Server (`$s$') requests a Client (`$c$') to login through an Authorization Service (`$a$') (adapted from an example from~\cite{conf/popl/ScalasY19}):
\begin{align}
    \gtc{G_{\sff{auth}}} \deq \mu X \gtc. s{!}c \gtBraces{ \sff{login} \gtc. c{!}a (\msg \sff{pwd}<\sff{str}>) \gtc. a{!}s (\msg \sff{succ}<\sff{bool}>) \gtc. X \gtc, \sff{quit} \gtc. \tEnd }
    \label{eq:relProj:Gauth}
\end{align}
The global type $\gtc{G_{\sff{auth}}}$ denotes a recursive protocol (`$\mu X$'), meaning that it may repeat.
It starts with a message from $s$ to $c$ (`$s{!}c$') carrying a label \sff{login} or \sff{quit} that determines how the protocol proceeds.
In the \sff{login} case, the protocol proceeds with a message from $c$ to $a$ carrying a label \sff{pwd} and a value of type \sff{str} (`$\msg \sff{pwd}<\sff{str}>$').
This is followed by a message from $a$ to $s$ carrying the label \sff{succ} with a \sff{bool} value, after which the protocol repeats (`$X$').
In the \sff{quit} case, the protocol ends (`$\tEnd$').

The idea of verification with \MPSTs is that a protocol participant's implementation as a distributed component is checked for compliance to its part in a global type.
If all components in a system comply to their respective parts in a global type, compositionality of such \emph{local} verifications then guarantees that the whole system complies to the protocol.
As global types may contain information that is irrelevant to one particular participant, local verification can be simplified by only considering those interactions of the global type that are relevant to the participant under scrutiny, i.e., by taking a perspective that is \emph{local} to the participant.
Such local perspectives can also be helpful to guide the development of components.

\begin{figure}[t]
    \begin{minipage}[t]{0.45\linewidth}
        \centering
        \begin{tikzpicture}
            \node (G) {$\gtc{G}$};
            \node [below=7mm of G] (Lq) {$L_q$};
            \node [left=of Lq] (Lp) {$L_p$};
            \node [right=of Lq] (Lr) {$L_r$};
            \node [below=7mm of Lp] (P) {$P$};
            \node [below=7mm of Lq] (Q) {$Q$};
            \node [below=7mm of Lr] (R) {$R$};

            \draw[->, thick] (G.south) -- (Lp);
            \draw[->, thick] (G.south) -- (Lq.north);
            \draw[->, thick] (G.south) -- (Lr);

            \draw[dashed,->] (Lp.south) -- (P.north);
            \draw[dashed,->] (Lq.south) -- (Q.north);
            \draw[dashed,->] (Lr.south) -- (R.north);
        \end{tikzpicture}
    \end{minipage}%
    \hfill
    \vline
    \hfill
    \begin{minipage}[t]{0.45\linewidth}
        \centering
        \begin{tikzpicture}
            \node (G) {$\gtc{G}$};
            \node [below=7mm of G] (QR) {$\rtc{R_{q,r}}$};
            \node [left=of QR] (PQ) {$\rtc{R_{p,q}}$};
            \node [right=of QR] (RP) {$\rtc{R_{r,p}}$};
            \node [below=7mm of PQ] (P) {$P$};
            \node [below=7mm of QR] (Q) {$Q$};
            \node [below=7mm of RP] (R) {$R$};

            \draw[->, thick] (G.south) -- (PQ);
            \draw[->, thick] (G.south) -- (QR);
            \draw[->, thick] (G.south) -- (RP);

            \draw[dashed,->] (PQ.south) -- (Q);
            \draw[dashed,->] (QR.south) -- (R);
            \draw[dashed,->] (RP.south) -- (P);
            \draw[dashed,->] (PQ.south) -- (P.north);
            \draw[dashed,->] (QR.south) -- (Q.north);
            \draw[dashed,->] (RP.south) -- (R.north);
        \end{tikzpicture}
    \end{minipage}
    \caption{Local projection (left) and relative projection (right) of global type $\gtc{G}$ onto its participants $p,q,r$, implemented by processes $P,Q,R$, respectively.}
    \ifrelProj \label{f:relProj:diag}
    \else \label{f:intro:relProj:diag} \fi
\end{figure}

The widely accepted method of obtaining local perspectives from global types is \emph{local projection}, which projects a global type onto a single participant, leading to a \emph{local type} that expresses only those message exchanges in the global type in which the participant is involved~\cite{conf/popl/HondaYC08,journal/acm/HondaYC16,conf/fossacs/YoshidaDBH10}.
\Cref{f:relProj:diag} (left) illustrates how this method is used in the local verification of processes implementing the roles of the participants of a global type.
However, as we discuss next, global types may contain communication patterns that cannot be expressed as local types.
Therefore, verification techniques that rely on local projection can only handle a limited class of \emph{well-formed} global types.

The class of global types that are well-formed for local projection excludes many nonsensical communication patterns, but it also excludes many useful patterns such as those occurring in $\gtc{G_{\sff{auth}}}$~\eqref{eq:relProj:Gauth} (as observed by Scalas and Yoshida~\cite{conf/popl/ScalasY19}).
This is because local projection cannot adequately handle \emph{non-local choices}.
In $\gtc{G_{\sff{auth}}}$, the exchange between $s$ and $c$ denotes a non-local choice for the protocol local to $a$: the global type states that $a$ should communicate with $c$ and $s$ only after $s$ sends to $c$ the label \sff{login}.

It is unfortunate that local projection cannot handle global types with non-local choices as in $\gtc{G_{\sff{auth}}}$~\eqref{eq:relProj:Gauth}: clearly, this global type expresses a protocol that is very useful in practice.
Hence, it is important to reconsider using local projection for compositional verification techniques.
This chapter introduces \emph{relative projection}, a new approach to obtaining local perspectives of global types.
Relative projection projects global types onto \emph{pairs of participants}, leading to a form of binary session types called \emph{relative types}.
This new projection enables the verification of a new class of \emph{relative well-formed} global types that includes global types with non-local choices such as $\gtc{G_{\sff{auth}}}$.
It does so by incorporating explicit coordination messages called \emph{dependencies} that force participants to synchronize on non-local choices.

\Cref{s:relProj:global,s:relProj:local} introduce global and relative types, respectively.
\Cref{s:relProj:relProj} introduces relative projection and the corresponding class of relative well-formed global types.
\Cref{s:relProj:wf} compares classical well-formedness based on local projection and our new relative well-formedness.
\Cref{s:relProj:rw,s:relProj:concl} discuss related work and conclusions, respectively.

\section{The Global Perspective}
\label{s:relProj:global}

\ExecuteMetaData[\fileSCICO]{relProj:global}

\section{The Local Perspective}
\label{s:relProj:local}

\ExecuteMetaData[\fileSCICO]{relProj:local}

\section{Relative Projection and Well-formedness}
\label{s:relProj:relProj}

\ExecuteMetaData[\fileSCICO]{relProj:relProj}

\section{Comparing Well-formedness}
\label{s:relProj:wf}

\ExecuteMetaData[\fileSCICO]{relProj:comp}

\section{Related Work}
\label{s:relProj:rw}

\ExecuteMetaData[\fileSCICO]{relProj:rw1}

\ExecuteMetaData[\fileSCICO]{relProj:rw2}

\section{Conclusions}
\label{s:relProj:concl}

We have introduced a new method to obtain local perspectives from global types.
This methods uses \emph{relative projection} to project a global type onto \emph{pairs} of its participants, yielding a form of binary session type dubbed \emph{relative types}.

Relative types and projection support an expressive class of global types (those that are \emph{relative well-formed}), because of the way they deal with the \emph{non-local choices} that influence the interactions between pairs of participants: they make these decision points that are implicit in the global type explicit by means of coordination messages called \emph{dependencies}.

Our new relative projection improves over traditional methods of \emph{local projection}.
First, relative projection yields relative types that are more directly compatible with binary session type theories.
Second, the class of relative well-formed global types is arguably more expressive than the the class of global types that are well-formed for local projection.
Though these classes are incomparable, we have demonstrated with several examples that many global types that are not relative well-typed can be transformed to equivalent, relative well-formed global types.
In future work, we plan to improve relative projection to subsume merge-based approaches entirely.

The methods introduced in this chapter are readily usable for the verification of implementations of multiparty session types, be it static (\Cref{c:mpstAPCP}) or dynamic (\Cref{c:mpstMon}).

 \relProjfalse
\mpstAPCPtrue 
\chapter{Binary Session Types for Distributed Multiparty Session}
\label{c:mpstAPCP}

This chapter addresses the verification of distributed systems, in particular as implementations of multiparty session types (\MPSTs).
The chapter then introduces a framework in which such distributed models of \MPSTs can be analyzed.
Instead of re-inventing the wheel, the framework relies on message-passing processes in \APCP and their derived correctness properties.
It thus answers the following research question, introduced in \Cref{s:intro:contrib:mpst:mpstAPCP}:

\requMpstAPCP*

\ifappendix
This chapter includes a self-contained summary of \MPSTs (\Cref{s:mpstAPCP:mpst}) and of \APCP (\Cref{s:mpstAPCP:APCP}).
Omitted details and proofs can be found in the chapters dedicated to these subjects (\Cref{c:relProj,c:APCP}, respectively).
\else
This chapter relies on the definition and results of \APCP, presented in detail in \Cref{c:APCP}.
\fi
The chapter also relies on global types and relative projection, presented in detail in \Cref{c:relProj}; a summary is included in \Cref{s:mpstAPCP:mpst} with minor adjustments compared to \Cref{c:relProj}.

\section{Introduction}
\label{s:mpstAPCP:intro}

\ExecuteMetaData[\fileSCICO]{mpstAPCP:intro}

\ifappendix
\section{\texorpdfstring{\APCP}{APCP}: a Summary}
\label{s:mpstAPCP:APCP}

In this section we briefly summarize the session-typed process calculus \APCP (Asynchronous Priority-based Classical Processes).
We introduce its syntax, semantics, and type system.
We also summarize results: type preservation (\Cref{t:mpstAPCP:APCP:tp}) and deadlock-freedom (\Cref{t:mpstAPCP:APCP:df}).
\Cref{c:APCP} discusses all details of \APCP.

\paragraph{Syntax.}

\ExecuteMetaData[\fileLMCS]{mpstAPCP:APCP:syntax}

\paragraph{Operational semantics.}

\ExecuteMetaData[\fileLMCS]{mpstAPCP:APCP:semantics}

\paragraph{Type system.}

\ExecuteMetaData[\fileLMCS]{mpstAPCP:APCP:types}

\paragraph{Results.}

\ExecuteMetaData[\fileLMCS]{mpstAPCP:APCP:results}

\paragraph{Reactivity.}

\ExecuteMetaData[\fileLMCS]{mpstAPCP:APCP:react}
\fi

\ifappendix
\paragraph{Example.}
\else
\section{The Authorization Protocol in \texorpdfstring{\APCP}{APCP}}
\label{s:mpstAPCP:APCPExample}
\fi

\ExecuteMetaData[\fileSCICO]{mpstAPCP:APCP:example}

\section{Global Types and Relative Types: a Summary}
\label{s:mpstAPCP:mpst}

In this section we briefly summarize \MPSTs global types and their projection onto relative types.
This corresponds to their detailed introduction in \Cref{c:relProj} with minor adjustments to accommodate routers.

\paragraph{Global types.}

\ExecuteMetaData[\fileSCICO]{mpstAPCP:mpst:global}

\paragraph{Relative types.}

\ExecuteMetaData[\fileSCICO]{mpstAPCP:mpst:relative}

\paragraph{Relative projection and well-formedness.}

\ExecuteMetaData[\fileSCICO]{mpstAPCP:mpst:relProj}

\section{Analyzing Global Types using Routers}
\label{s:mpstAPCP:analysis}

\ExecuteMetaData[\fileSCICO]{mpstAPCP:analysis}

\section{Routers in Action}
\label{s:mpstAPCP:examples}

\ExecuteMetaData[\fileSCICO]{mpstAPCP:examples}

\section{Related Work}
\label{s:mpstAPCP:rw}

\ExecuteMetaData[\fileSCICO]{mpstAPCP:rw}

\section{Conclusions}
\label{s:mpstAPCP:concl}

\ExecuteMetaData[\fileSCICO]{mpstAPCP:concl}

 \mpstAPCPfalse
\mpstMontrue \chapter[Monitors for Blackbox Implementations of Multiparty Session Types]{Monitors \texorpdfstring{\\}{} for Blackbox Implementations \texorpdfstring{\\}{} of Multiparty Session Types}
\label{c:mpstMon}

This chapter explores the verification of the correctness of distributed systems without access to the specification of system components.
That is, these systems consist of ``blackboxes''.
The idea is to confirm protocol conformance for multiparty session types (\MPSTs), where blackboxes implement the roles of protocol participants.
By equipping blackboxes with monitors that observe their behavior, this chapter studies the dynamic verification of distributed implementations of \MPSTs.
That is, the chapter addresses the following research question, introduced in \Cref{s:intro:contrib:mpst:mpstMon}:

\requMpstMon*

\ifappendix
This chapter includes a self-contained summary of \MPSTs (\Cref{s:mpstMon:synth}).
Omitted details can be found in the chapter dedicated to \MPSTs (\Cref{c:relProj}).
\else
This chapter relies on global types and relative projection (cf.\ \Cref{c:relProj} on \Cpageref{c:relProj}).
\fi

\section{Introduction}
\label{s:mpstMon:intro}

\ExecuteMetaData[\fileRV]{mpstMon:intro}

\section{Networks of Monitored Blackboxes}
\label{s:mpstMon:networks}

\ExecuteMetaData[\fileRV]{mpstMon:networks}

\section[Monitors for Blackboxes Synthesized from Global Types]{Monitors for Blackboxes Synthesized \texorpdfstring{\\}{} from Global Types}
\label{s:mpstMon:synth}

\ExecuteMetaData[\fileRV]{mpstMon:synth}

\section{Properties of Correct Monitored Blackboxes}
\label{s:mpstMon:properties}

\ExecuteMetaData[\fileRV]{mpstMon:properties}

\subsection{Satisfaction}
\label{s:mpstMon:properties:satisfaction}

\ExecuteMetaData[\fileRV]{mpstMon:properties:satisfaction}

\subsection{Soundness}
\label{s:mpstMon:properties:soundness}

\ExecuteMetaData[\fileRV]{mpstMon:properties:soundness}

\subsection{Transparency}
\label{s:mpstMon:properties:transparency}

\ExecuteMetaData[\fileRV]{mpstMon:properties:transparency}

\section{Conclusion}
\label{s:mpstMon:concl}

\ExecuteMetaData[\fileRV]{mpstMon:concl}

 \mpstMonfalse

\addtocontents{toc}{\protect\hrulefill\protect\leavevmode\par}
\chapter{Conclusions}
\label{c:conclusions}

Since each chapter includes conclusions of their own, I will only briefly conclude my dissertation as a whole.
In this thesis, I have addressed from different angles the following encompassing research question:

\theEncrequ*

\noindent
Let me rephrase this, not in terms of pushing boundaries, but in terms of using logical foundations as a starting point.
Linear logic identifies precisely an area of session types where processes/programs are very well-behaved: they satisfy important correctness criteria of which deadlock-freedom is most significant.
Starting in this area, we can change parts of the definitions of session types and processes/programs, such that we relax some restrictions or change the behavior of processes/programs.
The question is then: how and where can we do this without sacrificing correctness?

\Cref{c:APCP,c:clpi} take this approach most directly: the former switches from synchronous to asynchronous communication, and the latter reintroduces non-deterministic choice; in both cases, we prove that correctness properties are preserved.
\Cref{c:piBI} takes a different angle, by studying the area of session types identified by the logic of bunched implications and the correctness properties that hold therein.

\Cref{c:LASTn,c:alphalambda} connect the worlds of processes in \Cref{c:APCP,c:piBI} to the world of functional programs.
In particular, the former shows how guarantees gained from linear logic in the world of processes can be transferred to the world of functional programs, whereas the latter shows that the worlds of processes and of functional programs---both derived separately from the logic of bunched implications---are tightly connected.

\Cref{c:relProj,c:mpstAPCP} show how useful the insights gained in \Cref{c:APCP} can be, as they connect binary session types---connected to linear logic directly---to multiparty session types---further from logic, but more useful in practice.
\Cref{c:mpstMon} then shows that this connection not only applies to the static verification of multiparty session types implemented as processes, but also to dynamic verification.

All in all, this dissertation demonstrates that the established logical foundations in linear logic are important and useful, but that they by no means should be viewed as absolute ``laws of message-passing''.
Session type systems derived from logic form a solid starting point in developing systems that address more specific features (such as asynchrony and non-determinism).
As such, logic contributes greatly to providing correctness properties.
Moreover, it provides a unified language design on which to build, perhaps paving the way for an idealistic framework in which many approaches can be united to address a multitude of important message-passing aspects at once.

\thumbfalse

\addtocontents{toc}{\protect\hrulefill\protect\leavevmode\par}
\printbibliography[heading=bibintoc,label=refs]

\addcontentsline{toc}{chapter}{Glossary}
\printglossary[title={Glossary},nogroupskip,style=mylongstyle]

\addtocontents{toc}{\protect\hrulefill\protect\leavevmode\par}
\chapter*{Acknowledgments}
\markboth{Acknowledgments}{}
\addcontentsline{toc}{chapter}{Acknowledgments}

As far as I know, there is no way to write ``concurrently'', so unfortunately I need to ``linearize'' my unordered acknowledgments.
I shall switch between English and Dutch, reflecting the language I spreak with people.
Groups of people are listed alphabetically.

My PhD supervisor, Jorge Pérez, deserves a lot of gratitude.
Starting with my first email asking for advice on getting a PhD position in TCS, he has always supported me and my career with 100\% conviction.
Through the years, we found a good way of collaborating, even though I worked from Amsterdam most of the time.
Jorge's abilities and insights in the academic world are inspiring and humbling.
Jorge, take a day or two off every now and then ;).
I should also express my gratitude to my second supervisor, Gerard Renardel de Lavalette; we didn't spend much time together, but I did enjoy it when we did.

I would also like to express my deepest gratitude to the reading committee of my dissertation: Marieke Huisman, Alexander Lazovik, Nobuko Yoshida.
I know the thesis got rather lengthy, so I greatly appreciate the time they took from their busy schedules to read it.

Zonder mijn ouders, Carolien en Teije, zou ik hier niet zijn gekomen (letterlijk).
Mijn hele leven hebben ze mij zonder (grote) vraagtekens altijd gesteund in wat ik ook doe.
Pap en mam, hoewel wij als een goed Nederlands gezin elkaar soms maanden niet zien, denk ik altijd aan jullie en ben ik jullie altijd dankbaar voor hoe jullie onvoorwaardelijk voor me klaarstaan; ik hou van jullie!
Ik hoop dat mijn vader ermee kan leven dat zijn investeringen in mij niet veel zullen bijdragen aan zijn pensioen als ik op het academische pad blijf \textellipsis

Arianne is de liefde van m'n leven, en mijn steun en toeverlaat.
Hoe ze het uithoudt met mij is me een compleet raadsel, zeker gezien mijn carrièrekeuze niet de meest optimale leefsituaties oplevert voor haar.
Arianne, je bent een enorme steun geweest tijdens de zwaarste perioden van mijn PhD, en ik hoop dat ik je niet te veel heb belast met mijn gestress.
Laten we nog heel veel beleven samen, ik hou van je!

Tom (or, as he prefers to be referred to by others than me, Tomislav) is one of my best friends.
We met each other at the Master of Logic in Amsterdam, and Tom quickly became an excellent friend at a time where I really needed one.
Tom, I am very grateful to have you in my life, and although you are moving back to Zagreb, I don't think you will ever get rid of me.
Not only has Tom submerged me in Croatian culture at his wedding, he has also introduced me to his many friends from the Balkans who live in Amsterdam; I consider them my own friends and fondly refer them as the ``Balkan Boys'': Frane, Granit, Milan, Nino, and Vojin.

Met acht jaar jonger, gingen ik en mijn broertje, Sam, de langste tijd niet echt goed met elkaar om.
Tijdens mijn PhD, echter, zijn we meer en meer gaan hangen, en nu beschouw ik Sam als een van mijn beste vrienden.
Wat we ook doen samen---gamen, deels mislukte fiets-/kampeervakanties, helpen met verhuizen---, het geeft me altijd veel plezier.
Sam, ik ben heel blij dat we zo goed met elkaar omgaan de laatste tijd.
Natuurlijk ben ik mijn zus, Eva, ook dankbaar.
Ik heb geleerd politieke discussies met haar uit de weg gegaan (vooral omdat ze nog progressiever is dan ik \textellipsis), maar het is altijd fijn samen; dankzij haar ken ik ook haar partner, Anne, die me ervan heeft overtuigd te gaan GM'en voor RPGs, en natuurlijk heb ik mijn twee fantastische neefjes, Siep en Willem, aan hen te danken.

Tijmen is de beste vriend die ik aan mijn tijd bij de bachelorstudieverening via over heb gehouden.
Toendertijd hebben we veel gehangen, waarbij we altijd muziek maakten, luisterden, en ontdekten.
Een tijdje hebben we weinig contact gehad, maar ik ben erg blij dat we elkaar hervonden hebben door onze stoner-band-projecten met Arianne en Niels.
Ik ben ook heel blij met onze recente regelmatige game-/hang-sessies.
Niet te vergeten: Tijmen, enorm bedankt voor het ontwerpen van de kaft van mijn thesis, je hebt het fantastisch gedaan!

Ik moet ook de andere fantastische vrienden noemen die ik aan mijn via-tijd over heb gehouden, en die ik nog steeds met enige regelmaat zie: Bram, Chiel, Elise, Fabiën, Iris, Jelte, Jorn, Micha, Robbert, Sander.
Het is altijd een plezier om met jullie wat te drinken, bordspelletjes of RPGs te spelen, op city-trips te gaan, etcetera.

Of course, I want to thank my colleagues at the University of Groningen: Alen, Anton, Dan, Helle, Joe, Juan, Revantha, Tina, Vitor.
During my PhD I also got to know researchers outside of Groningen, both through visits and through fruitful collaboration: Daniele, Emanuele, Farzaneh, Stephanie; thank you, and I hope we get to work together more in the future.

To anyone I didn't mention explicitly, family, friends, and colleagues: thank you!

\ifappendix

\thumbtrue

\appendix
\addtocontents{toc}{\protect\hrulefill\protect\leavevmode\par}
\part{Appendices}

\chapter{Proofs for \crtCref{c:basepi}}
\label{ac:basepi}

This appendix details proofs of the results in \Cref{c:basepi}.

\tPiCalcSubjCong*

\begin{proof}
    By induction on the derivation of $P \equiv Q$.
    The inductive cases, which apply congruence under contexts, follow from the IH straightforwardly.
    I discuss each base case, corresponding to the axioms in \Cref{d:basepi:strcong}, separately:
    \begin{itemize}

        \item
            \textbf{(Rule~\ruleLabel{cong-alpha})}
            Then $P \equiv_\alpha Q$, i.e., $P$ and $Q$ are equal up to renaming of bound names.
            Hence, reflect this renaming in the derivation of $\vdash P \typInf \Delta$ to obtain a derivation of $\vdash Q \typInf \Delta$.

        \item
            \textbf{(Rule~\ruleLabel{cong-par-unit})}
            By inversion of Rules~\ruleLabel{typ-par} and~\ruleLabel{typ-inact}, $\vdash P \typInf \Delta$, $\vdash \0 \typInf \emptyset$, and $\vdash P \| \0 \typInf \Delta$, immediately showing the thesis.

        \item
            \textbf{(Rule~\ruleLabel{cong-par-comm})}
            By inversion of Rule~\ruleLabel{typ-par}, $\vdash P \typInf \Delta_1$, $\vdash Q \typInf \Delta_2$, and $\vdash P \| Q \typInf \Delta_1 , \Delta_2$.
            By Rule~\ruleLabel{typ-par}, $\vdash Q \| P \typInf \Delta_1 , \Delta_2$, showing the thesis.

        \item
            \textbf{(Rule~\ruleLabel{cong-par-assoc})}
            By inversion of two applications of Rule~\ruleLabel{typ-par}, $\vdash P \typInf \Delta_1$, $\vdash Q \typInf \Delta_2$, $\vdash R \typInf \Delta_3$, and $\vdash P \| (Q \| R) \typInf \Delta_1 , \Delta_2 , \Delta_3$.
            By two applications of Rule~\ruleLabel{typ-par}, $\vdash (P \| Q) \| R \typInf \Delta_1 , \Delta_2 , \Delta_3$, showing the thesis.

        \item
            \textbf{(Rule~\ruleLabel{cong-scope})}
            By inversion of Rules~\ruleLabel{typ-par} and~\ruleLabel{typ-res}, $\vdash P \typInf \Delta_1$, \mbox{$\vdash Q \typInf \Delta_2 , x:A , y:\ol{A}$}, $\vdash \pRes{xy} Q \typInf \Delta_2$, and $\vdash P \| \pRes{xy} Q \typInf \Delta_1 , \Delta_2$.
            Since $x,y \notin \fn(P)$, \mbox{$x,y \notin \dom(P)$}.
            By Rules~\ruleLabel{typ-par} and~\ruleLabel{typ-res}, $\vdash \pRes{xy} ( P \| Q ) \typInf \Delta_1 , \Delta_2$, showing the thesis.

        \item
            \textbf{(Rule~\ruleLabel{cong-res-comm})}
            By inversion of two applications of Rule~\ruleLabel{typ-res}, \mbox{$\vdash P \typInf \Delta , x:A , y:\ol{A} , z:B , w:\ol{B}$}, and $\vdash \pRes{xy} \pRes{zw} P \typInf \Delta$.
            By two applications of Rule~\ruleLabel{typ-res}, $\vdash \pRes{zw} \pRes{xy} P \typInf \Delta$, showing the thesis.

        \item
            \textbf{(Rule~\ruleLabel{cong-res-symm})}
            By inversion of Rule~\ruleLabel{typ-res}, $\vdash P \typInf \Delta , x:A , y:\ol{A}$, and $\vdash \pRes{xy} P \typInf \Delta$.
            By involution of duality and Rule~\ruleLabel{typ-res}, $\vdash \pRes{yx} P \typInf \Delta$, showing the thesis.

        \item
            \textbf{(Rule~\ruleLabel{cong-fwd-symm})}
            By inversion of Rule~\ruleLabel{typ-fwd}, $\vdash \pFwd [x<>y] \typInf x:A , y:\ol{A}$.
            By involution of duality and Rule~\ruleLabel{typ-fwd}, $\vdash \pFwd [y<>x] \typInf x:A , y:\ol{A}$, showing the thesis.

        \item
            \textbf{(Rule~\ruleLabel{cong-res-fwd})}
            By inversion of Rules~\ruleLabel{typ-res} and~\ruleLabel{typ-fwd}, \mbox{$\vdash \pFwd [x<>y] \typInf x:A , y:\ol{A}$}, and $\vdash \pRes{xy} \pFwd [x<>y] \typInf \emptyset$.
            By Rule~\ruleLabel{typ-inact}, $\vdash \0 \typInf \emptyset$, showing the thesis.
            \qedhere

    \end{itemize}
\end{proof}

\tPiCalcSubjRedd*

\begin{proof}
    By induction on the derivation of $P \redd Q$.
    I discuss each case, corresponding to the rules in \Cref{d:basepi:redd}, separately:
    \begin{itemize}

        \item
            \textbf{(Rule~\ruleLabel{red-close-wait})}
            By inversion of Rules~\ruleLabel{typ-res}, \ruleLabel{typ-par}, \ruleLabel{typ-close}, and \ruleLabel{typ-wait}, $\vdash \pClose x[] \typInf x:\1$, $\vdash Q \typInf \Delta$ , $\vdash \pWait y() ; Q \typInf \Delta, y:\bot$, and $\vdash \pRes{xy} ( \pClose x[] \| \pWait y() ; Q ) \typInf \Delta$, immediately showing the thesis.

        \item
            \textbf{(Rule~\ruleLabel{red-send-recv})}
            By inversion of Rules~\ruleLabel{typ-res}, \ruleLabel{typ-par}, \ruleLabel{typ-send}, and~\ruleLabel{typ-recv}, $\vdash P \typInf \Delta_1$, $\vdash \pOut x[a,b] ; P \typInf \Delta_1 , x:A \tensor B , a:\ol{A} , b:\ol{B}$, $\vdash Q \typInf \Delta_2 , z:\ol{A} , y':\ol{B}$, $\vdash \pIn y(z,y') ; Q \typInf \Delta_2 , y:\ol{A} \parr \ol{B}$, and $\vdash \pRes{xy} ( \pOut x[a,b] ; P \| \pIn y(z,y') ; Q ) \typInf \Delta_1 , \Delta_2 , a:\ol{A} , b:\ol{B}$.
            It is straightforward to derive $\vdash Q \{a/z,b/y'\} \typInf \Delta_2 , a:\ol{A} , b:\ol{B}$.
            By Rule~\ruleLabel{typ-par}, $\vdash P \| Q \{a/z,b/y'\} \typInf \Delta_1 , \Delta_2 , a:\ol{A} , b:\ol{B}$, showing the thesis.

        \item
            \textbf{(Rule~\ruleLabel{red-sel-bra})}
            By inversion of Rules~\ruleLabel{typ-res}, \ruleLabel{typ-par}, \ruleLabel{typ-sel}, and~\ruleLabel{typ-bra}, $\vdash P \typInf \Delta_1$, $\vdash \pSel x[b] < j ; P \typInf \Delta_1 , x:\oplus_{i \in I} A_i , b:\ol{A_j}$, $\vdash Q_i \typInf \Delta_2 , y':\ol{A_i}$ for every~\mbox{$i \in I$}, \mbox{$\vdash \pBra y(y') > {\{ i . Q_i \}_{i \in I}} \typInf \Delta_2 , y:\&_{i \in I} \ol{A_i}$}, and \mbox{$\vdash \pRes{xy} ( \pSel x[b] < j ; P \| \pBra y(y') > {\{ i . Q_i \}_{i \in I}} ) \typInf \Delta_1 , \Delta_2 , b:\ol{A_j}$}.
            It is straightforward to derive $\vdash Q_j \{b/y'\} \typInf \Delta_2 , b:\ol{A_j}$.
            By Rule~\ruleLabel{typ-par}, \mbox{$\vdash P \| Q_j \{b/y'\} \typInf \Delta_1 , \Delta_2 , b:\ol{A_j}$}, showing the thesis.

        \item
            \textbf{(Rule~\ruleLabel{red-fwd})}
            By inverison of Rules~\ruleLabel{typ-res}, \ruleLabel{typ-par}, and \ruleLabel{typ-fwd}, \mbox{$\vdash \pFwd [x<>z] \typInf x:A , z:\ol{A}$}, $P \vdash \Delta , y:\ol{A}$, and $\vdash \pRes{xy} ( \pFwd [x<>z] \| P ) \typInf \Delta, z:\ol{A}$.
            It is straightforward to derive $\vdash P \{z/y\} \typInf \Delta , y:\ol{A}$, showing the thesis.

        \item
            \textbf{(Rule~\ruleLabel{red-cong})}
            By \Cref{t:basepi:subjCong}, $\vdash P' \typInf \Delta$.
            By the IH, $\vdash Q' \typInf \Delta$.
            By \Cref{t:basepi:subjCong}, $\vdash Q \typInf \Delta$, showing the thesis.

        \item
            \textbf{(Rule~\ruleLabel{red-res})}
            By inversion of Rule~\ruleLabel{typ-res}, $\vdash P \typInf \Delta , x:A , y:\ol{A}$, and $\vdash \pRes{xy} P \typInf \Delta$.
            By the IH, $\vdash Q \typInf \Delta , x:A , y:\ol{A}$.
            By Rule~\ruleLabel{typ-res}, $\vdash \pRes{xy} Q \typInf \Delta$, showing the thesis.

        \item
            \textbf{(Rule~\ruleLabel{red-par})}
            By inversion of Rule~\ruleLabel{typ-par}, $\vdash P \typInf \Delta_1$, $\vdash R \typInf \Delta_2$, and $\vdash P \| R \typInf \Delta_1 , \Delta_2$.
            By the IH, $\vdash Q \typInf \Delta_1$.
            By Rule~\ruleLabel{typ-par}, $\vdash Q \| R \typInf \Delta_1 , \Delta_2$, showing the thesis.
            \qedhere

    \end{itemize}
\end{proof}


\APCPtrue \chapter{Proofs for \crtCref{c:APCP}}
\label{ac:APCP}

This appendix details proofs of the results in \Cref{c:APCP}.

\section{Subject Reduction}

\tAPCPSC*

\ExecuteMetaData[\fileLMCS]{APCP:sc:proof}

\section{Deadlock-freedom}

\lAPCPUnfold*

\ExecuteMetaData[\fileLMCS]{APCP:unfold:proof}

\tAPCPProgress*

\ExecuteMetaData[\fileLMCS]{APCP:progress:proof}

\lAPCPNotLiveNoAction*

\ExecuteMetaData[\fileLMCS]{APCP:notLiveNoAction:proof}

\tAPCPDF*

\ExecuteMetaData[\fileLMCS]{APCP:df:proof}

\section{Reactivity}

\pLredd*

\ExecuteMetaData[\fileLMCS]{APCP:lredd:proof}

\tAPCPReact*

\ExecuteMetaData[\fileLMCS]{APCP:react:proof}

 \APCPfalse
\chapter{Appendices for \crtCref{c:clpi}}
\label{ac:clpi}

\section{Proof of Type Preservation}
\label{as:clpi:tp}

Here we prove type preservation for both semantics.
Type preservation consists of \emph{subject congruence} (structural congruence preserves typing) and \emph{subject reduction} (reduction preserves typing).
The former is ubiquitous for both semantics, while the latter requires separate proofs.

\subsection{Subjection Congruence}
\label{as:clpi:tp:sc}

\ExecuteMetaData[\fileAPLASAppPi]{clpi:tp:sc}

\subsection{Subject Reduction: Eager Semantics}
\label{as:clpi:tp:srEager}

The proofs relies on the follows auxiliary results:
\begin{itemize}

    \item
        \Cref{l:clpi:ctxType} infers the type of a name from how it appear in a process.

    \item
        \Cref{l:clpi:reddEager} infers typing after reduction and collapsing ND-contexts.

    \item
        \Cref{t:clpi:srEager} proves subject reduction.

\end{itemize}

\ExecuteMetaData[\fileAPLASAppPi]{clpi:tp:srEager}

\subsection{Subject Reduction: Lazy Semantics}
\label{as:clpi:tp:srLazy}

\ExecuteMetaData[\fileAPLASAppPi]{clpi:tp:srLazy}

\section{Proof of Deadlock-freedom}
\label{as:clpi:df}

\subsection{Eager Semantics}
\label{as:clpi:df:eager}

\ExecuteMetaData[\fileLICSAppPi]{clpi:dfEager}

\subsection{Lazy Semantics}
\label{as:clpi:df:lazy}

\ExecuteMetaData[\fileAPLASAppPi]{clpi:dfLazy}

\section{Proof of Bisimilarity Result}
\label{as:clpi:bisim}

\ExecuteMetaData[\fileAPLASAppBisim]{clpi:bisim}

\piBItrue \chapter{Proofs for \crtCref{c:piBI}}
\label{ac:piBI}

This appendix details the proofs of the results in \Cref{c:piBI}.%
\ExecuteMetaData[\fileOOPSLA{appendix/app-meta}]{piBI:proofs}

\section{Auxiliary Results}
\label{as:piBI:aux}

\ExecuteMetaData[\fileOOPSLA{appendix/app-meta}]{piBI:proofs:aux}

\section{Type Preservation}
\label{as:piBI:tp}

\ExecuteMetaData[\fileOOPSLA{appendix/app-meta}]{piBI:proofs:tp}

\section{Deadlock-freedom}
\label{as:piBI:df}

\ExecuteMetaData[\fileOOPSLA{appendix/app-meta}]{piBI:proofs:df}

\section{Weak Normalization}
\label{as:piBI:wn}

\ExecuteMetaData[\fileOOPSLA{appendix/app-meta}]{piBI:proofs:wn}

 \piBIfalse

\LASTntrue 
\chapter{Appendices for \crtCref{c:LASTn}}
\label{ac:LASTn}

This appendix details proofs of the results in \Cref{c:LASTn}.

\section[Self-contained Definition of \protect\LASTn and its Type System]{Self-contained Definition of \protect\LASTn \texorpdfstring{\\}{} and its Type System}
\label{as:LASTn:lang}

\ExecuteMetaData[\fileLMCS]{LASTn:lang}

\section{Type Preservation}
\label{as:LASTn:tp}

\ExecuteMetaData[\fileLMCS]{LASTn:tp}

\section{Translation: Type Preservation}
\label{as:LASTn:transTp}

\ExecuteMetaData[\fileLMCS]{LASTn:transTp}

\section{Operational Correspondence}
\label{as:LASTn:oc}

\ExecuteMetaData[\fileLMCS]{LASTn:oc}

\subsection{Completeness}
\label{as:LASTn:oc:compl}

\ExecuteMetaData[\fileLMCS]{LASTn:oc:compl}

\subsection{Soundness}
\label{as:LASTn:oc:sound}

\ExecuteMetaData[\fileLMCS]{LASTn:oc:sound} \LASTnfalse
\alphalambdatrue \chapter{Appendices for \crtCref{c:alphalambda}}
\label{ac:alphalambda}

This appendix details the translation from the \alcalc into \pibi and proofs of the accompanying operational correspondence results, all presented in \Cref{c:alphalambda}.

\section{Full Translation}
\label{as:alphalambda:trans}

Here we present the full translation of \alcalc typing rules into typed \pibi processes.
In each case, we give the name of the rule along with the full derivation of the translated \pibi process.

\begin{mathparpagebreakable}
    \begin{bussproof}[typ-id]
        \bussAx{
            x:A \vdash \alTrans(x)z = \pFwd [z<>x] \typInf z:A
        }
    \end{bussproof}
    \\
    \begin{bussproof}[typ-cong]
        \bussAssume{
            \Delta \vdash \alTrans(M)z \typInf z:A
        }
        \bussAssume{
            \Delta \equiv \Theta
        }
        \bussBin{
            \Theta \vdash \alTrans(M)z \typInf z:A
        }
    \end{bussproof}
    \\
    \begin{bussproof}[typ-weaken]
        \bussAssume{
            \bunchCtx{\Gamma}[\Delta] \vdash \alTrans(M)z \typInf z:A
        }
        \bussUn{
            \bunchCtx{\Gamma}[\Delta ; \Delta'] \vdash \alTrans(M)z = \pSpw[x->\emptyset | x \in \fv(\Delta')] ; \alTrans(M)z \typInf z:A
        }
    \end{bussproof}
    \\
    \begin{bussproof}[typ-contract]
        \bussAssume{
            \bunchCtx{\Gamma}[\idx{\Delta}{1} ; \idx{\Delta}{2}] \vdash \alTrans(M)z \typInf z:A
        }
        \bussUn{
            \bunchCtx{\Gamma}[\Delta] \vdash \alTrans({M \{ x/\idx{x}{1},x/\idx{x}{2} \mid x \in \fv(\Delta) \}})z = \pSpw[x->x_1,x_2] ; \alTrans(M)z \typInf z:A
        }
    \end{bussproof}
    \\
    \begin{bussproof}[typ-wand-I]
        \bussAssume{
            \Delta , x:A \vdash \alTrans(M)z \typInf z:B
        }
        \bussUn{
            \Delta \vdash \alTrans(\lam x . M)z = \pIn z(x) ; \alTrans(M)z \typInf z:A \wand B
        }
    \end{bussproof}
    \\
    \begin{bussproof}[typ-impl-I]
        \bussAssume{
            \Delta ; x:A \vdash \alTrans(M)z \typInf z:B
        }
        \bussUn{
            \Delta \vdash \alTrans(\alpha x . M)z = \pIn z(x) ; \alTrans(M)z \typInf z:A \to B
        }
    \end{bussproof}
    \\
    \begin{bussproof}[typ-wand-E]
        \bussAssume{
            \Delta \vdash \alTrans(M)x \typInf x:A \wand B
        }
        \bussAssume{
            \Theta \vdash \alTrans(N)y \typInf y:A
        }
        \bussAx{
            x:B \vdash \pFwd [z<>x] \typInf z:B
        }
        \bussBin{
            \Theta , x:A \wand B \vdash \pOut* x[y] ; ( \alTrans(N)y \| \pFwd [z<>x] ) \typInf z:B
        }
        \bussBin{
            \Delta , \Theta \vdash \alTrans(M\ N)z = \pRes{x} \big( \alTrans(M)x \| \pOut* x[y] ; ( \alTrans(N)y \| \pFwd [z<>x] ) \big) \typInf z:B
        }
    \end{bussproof}
    \\
    \begin{bussproof}[typ-impl-E]
        \bussAssume{
            \Delta \vdash \alTrans(M)x \typInf x:A \to B
        }
        \bussAssume{
            \Theta \vdash \alTrans(N)y \typInf y:A
        }
        \bussAx{
            x:B \vdash \pFwd [z<>x] \typInf z:B
        }
        \bussBin{
            \Theta ; x:A \to B \vdash \pOut* x[y] ; ( \alTrans(N)y \| \pFwd [z<>x] ) \typInf z:B
        }
        \bussBin{
            \Delta ; \Theta \vdash \alTrans(M\ N)z = \pRes{x} \big( \alTrans(M)x \| \pOut* x[y] ; ( \alTrans(N)y \| \pFwd [z<>x] ) \big) \typInf z:B
        }
    \end{bussproof}
    \\
    \begin{bussproof}[typ-emp-I]
        \bussAx{
            \mEmpty \vdash \alTrans(\mUnit)z = \pClose z[] \typInf z:\mOne
        }
    \end{bussproof}
    \\
    \begin{bussproof}[typ-true-I]
        \bussAx{
            \aEmpty \vdash \alTrans(\aUnit)z = \pClose z[] \typInf z:\aOne
        }
    \end{bussproof}
    \\
    \begin{bussproof}[typ-emp-E]
        \bussAssume{
            \Delta \vdash \alTrans(M)x \typInf x:\mOne
        }
        \bussAssume{
            \bunchCtx{\Gamma}[\mEmpty] \vdash \alTrans(N)z \typInf z:A
        }
        \bussUn{
            \bunchCtx{\Gamma}[x:\mOne] \vdash \pWait x() ; \alTrans(N)z \typInf z:A
        }
        \bussBin{
            \bunchCtx{\Gamma}[\Delta] \vdash \alTrans(\tLet \mUnit = M \tIn N)z = \pRes{x} \big( \alTrans(M)x \| \pWait x() ; \alTrans(N)z \big) \typInf z:A
        }
    \end{bussproof}
    \\
    \begin{bussproof}[typ-true-E]
        \bussAssume{
            \Delta \vdash \alTrans(M)x \typInf x:\aOne
        }
        \bussAssume{
            \bunchCtx{\Gamma}[\aEmpty] \vdash \alTrans(N)z \typInf z:A
        }
        \bussUn{
            \bunchCtx{\Gamma}[x:\aOne] \vdash \pWait x() ; \alTrans(N)z \typInf z:A
        }
        \bussBin{
            \bunchCtx{\Gamma}[\Delta] \vdash \alTrans(\tLet \aUnit = M \tIn N)z = \pRes{x} \big( \alTrans(M)x \| \pWait x() ; \alTrans(N)z \big) \typInf z:A
        }
    \end{bussproof}
    \\
    \begin{bussproof}[typ-sep-I]
        \bussAssume{
            \Delta \vdash \alTrans(M)y \typInf y:A
        }
        \bussAssume{
            \Theta \vdash \alTrans(N)z \typInf z:B
        }
        \bussBin{
            \Delta , \Theta \vdash \alTrans(\<M,N\>)z = \pOut* z[y] ; \big( \alTrans(M)y \| \alTrans(N)z \big) \typInf z:A \sep B
        }
    \end{bussproof}
    \\
    \begin{bussproof}[typ-conj-I]
        \bussAssume{
            \Delta \vdash \alTrans(M)y \typInf y:A
        }
        \bussAssume{
            \Theta \vdash \alTrans(N)z \typInf z:B
        }
        \bussBin{
            \Delta ; \Theta \vdash \alTrans({(M,N)})z = \pOut* z[y] ; \big( \alTrans(M)y \| \alTrans(N)z \big) \typInf z:A \land B
        }
    \end{bussproof}
    \\
    \begin{bussproof}[typ-sep-E]
        \bussAssume{
            \Delta \vdash \alTrans(M)y \typInf y:A \sep B
        }
        \bussAssume{
            \bunchCtx{\Gamma}[x:A , y:B] \vdash \alTrans(N)z \typInf z:C
        }
        \bussUn{
            \bunchCtx{\Gamma}[y:A \sep B] \vdash \pIn y(x) ; \alTrans(N)z \typInf z:C
        }
        \bussBin{
            \bunchCtx{\Gamma}[\Delta] \vdash \alTrans(\tLet \<x,y\> = M \tIn N)z = \pRes{y} \big( \alTrans(N)z \| \pIn y(x) ; \alTrans(N)z \big) \typInf z:C
        }
    \end{bussproof}
    \\
    \begin{bussproof}[typ-conj-E]
        \bussAssume{
            \Delta \vdash \alTrans(M){x_2} \typInf x_2:A_1 \land A_2
        }
        \bussAx{
            x_1:A_1 \vdash \pFwd [z<>x_1] \typInf z:A_1
        }
        \bussUn{
            x_1:A_1 ; x_2:A_2 \vdash \pSpw[x_2->\emptyset] ; \pFwd [z<>x_1] \typInf z:A_1
        }
        \bussUn{
            x_2:A_1 \land A_2 \vdash \pIn x_2(x_1) ; \pSpw[x_2->\emptyset] ; \pFwd [z<>x_1] \typInf z:A_1
        }
        \bussBin{
            \Delta \vdash \alTrans(\pi_1 M)z = \pRes{x_2} \big( \alTrans(M){x_2} \| \pIn x_2(x_1) ; \pSpw[x_2->\emptyset] ; \pFwd [z<>x_1] \big) \typInf z:A_1
        }
    \end{bussproof}
    \\
    \begin{bussproof}[typ-conj-E]
        \bussAssume{
            \Delta \vdash \alTrans(M){x_2} \typInf x_2:A_1 \land A_2
        }
        \bussAx{
            x_2:A_2 \vdash \pFwd [z<>x_2] \typInf z:A_2
        }
        \bussUn{
            x_1:A_1 ; x_2:A_2 \vdash \pSpw[x_1->\emptyset] ; \pFwd [z<>x_2] \typInf z:A_1
        }
        \bussUn{
            x_2:A_1 \land A_2 \vdash \pIn x_2(x_1) ; \pSpw[x_1->\emptyset] ; \pFwd [z<>x_2] \typInf z:A_2
        }
        \bussBin{
            \Delta \vdash \alTrans(\pi_2 M)z = \pRes{x_2} \big( \alTrans(M){x_2} \| \pIn x_2(x_1) ; \pSpw[x_1->\emptyset] ; \pFwd [z<>x_2] \big) \typInf z:A_2
        }
    \end{bussproof}
    \\
    \begin{bussproof}[typ-disj-I]
        \bussAssume{
            \Delta \vdash \alTrans(M)z \typInf z:A_1
        }
        \bussUn{
            \Delta \vdash \alTrans({\tSel_1(M)})z = \pSelL z ; \alTrans(M)z \typInf z:A_1 \lor A_2
        }
    \end{bussproof}
    \\
    \begin{bussproof}[typ-disj-I]
        \bussAssume{
            \Delta \vdash \alTrans(M)z \typInf z:A_2
        }
        \bussUn{
            \Delta \vdash \alTrans({\tSel_2(M)})z = \pSelR z ; \alTrans(M)z \typInf z:A_1 \lor A_2
        }
    \end{bussproof}
    \\
    \begin{bussproof}[typ-disj-E]
        \def\defaultHypSeparation{\hskip.6ex}
        \def\ScoreOverhang{1pt}
        \bussAssume{
            \Delta \vdash \alTrans(M)x \typInf x:A_1 \lor A_2
        }
        \bussAssume{
            \bunchCtx{\Gamma}[x:A_1] \vdash \alTrans(N_1)z \{ x/x_1 \} \typInf z:C
        }
        \bussAssume{
            \bunchCtx{\Gamma}[x:A_2] \vdash \alTrans(N_2)z \{ x/x_2 \} \typInf z:C
        }
        \bussBin{
            \bunchCtx{\Gamma}[x:A_1 \lor A_2] \vdash \pBraLR x > {\alTrans(N_1)z}{\alTrans(N_2)z} \typInf z:C
        }
        \bussBin{
            \bunchCtx{\Gamma}[\Delta] \vdash \begin{array}[t]{@{}l@{}}
                \alTrans({\tCase M \tOf \{ 1(x_1);N_1 , 2(x_2):N_2 \}})z
                \\
                = \pRes{x} \big( \alTrans(M)x \| \pBraLR x > {\alTrans(N_1)z}{\alTrans(N_2)z} \big) \typInf z:C
            \end{array}
        }
    \end{bussproof}
    \\
    \begin{bussproof}[typ-cut]
        \bussAssume{
            \Delta \vdash \alTrans(M)x \typInf x:A
        }
        \bussAssume{
            \bunchCtx{\Gamma}[x:A] \vdash \alTrans(N)z \typInf z:C
        }
        \bussBin{
            \bunchCtx{\Gamma}[\Delta] \vdash \alTrans(N \{ M/x \})z = \pRes{x} \big( \alTrans(M)x \| \alTrans(N)z \big) \typInf z:C
        }
    \end{bussproof}
\end{mathparpagebreakable}

\section{Operation Correspondence: Proofs}
\label{as:alphalambda:oc}

\subsection{Completeness}
\label{as:alphalambda:oc:completenss}

\ExecuteMetaData[\fileOOPSLA{appendix/translation}]{alphalambda:oc:completeness}

\subsection{Soundness}
\label{as:alphalambda:oc:soundness}

\ExecuteMetaData[\fileOOPSLA{appendix/translation}]{alphalambda:oc:soundness}

 \alphalambdafalse

\mpstAPCPtrue \chapter{Proofs for \crtCref{c:mpstAPCP}}
\label{ac:mpstAPCP}

This appendix details proofs of the results in \Cref{c:mpstAPCP}.

\section{Proofs for \crtCref{s:mpstAPCP:analysis:networks}}
\label{as:mpstAPCP:analysis:networks}

\ExecuteMetaData[\fileLMCS]{mpstAPCP:APCP:charProc}

\pMpstAPCPComplNetsExist*

\ExecuteMetaData[\fileSCICO]{mpstAPCP:proof:complNetsExist}

\subsection{Proofs for \crtCref{s:mpstAPCP:analysis:networks:routerTypes}}
\label{as:mpstAPCP:analysis:networks:routerTypes}

\ExecuteMetaData[\fileSCICO]{mpstAPCP:proof:routerTypes}

\subsection{Proofs for \crtCref{s:mpstAPCP:analysis:transfer}}

\paragraph{Alarm-freedom.}

\tMpstAPCPNoAlarm*

\ExecuteMetaData[\fileSCICO]{mpstAPCP:proof:noAlarm}

\paragraph{Completeness.}

\tMpstAPCPCompleteness*

\ExecuteMetaData[\fileSCICO]{mpstAPCP:proof:completeness}

\paragraph{Soundness.}

We first introduce auxiliary \Cref{l:mpstAPCP:routerNodepEqual,l:mpstAPCP:delayedPrefix}.
Then we prove \Cref{p:mpstAPCP:indep} on \Cpageref{proof:mpstAPCP:indep}.
Finally we prove \Cref{t:mpstAPCP:soundness} on \Cpageref{proof:mpstAPCP:soundness}.

\ExecuteMetaData[\fileSCICO]{mpstAPCP:proof:soundness:1}

\pMpstAPCPIndep*

\ExecuteMetaData[\fileSCICO]{mpstAPCP:proof:indep}

\tMpstAPCPSoundness*

\ExecuteMetaData[\fileSCICO]{mpstAPCP:proof:soundness:2}

\ExecuteMetaData[\fileSCICO]{mpstAPCP:proof:soundness:3}

\section{Proofs for \crtCref{s:mpstAPCP:analysis:mediums}}
\label{as:mpstAPCP:analysis:mediums}

\subsection{Proof for \crtCref{s:mpstAPCP:analysis:mediums:mediumSynth}}
\label{as:mpstAPCP:analysis:medium:mediumSynth}

\tMpstAPCPMediumTypes*

\ExecuteMetaData[\fileSCICO]{mpstAPCP:proof:mediumTypes}

\subsection{Proofs for \crtCref{s:mpstAPCP:analysis:mediums:bisim}}
\label{as:mpstAPCP:analysis:mediums:bisim}

\ExecuteMetaData[\fileSCICO]{mpstAPCP:proof:bisim:1}

\def\fileSCICOBisim{repos/apcp/submissions/SCICO-final/proof_bisim.tex}
\ExecuteMetaData[\fileSCICOBisim]{mpstAPCP:proof:bisim:2}

 \mpstAPCPfalse
\mpstMontrue \chapter{Appendices for \crtCref{c:mpstMon}}
\label{ac:mpstMon}

\section{The Running Example from~\cite{journal/tcs/BocchiCDHY17}}
\label{as:mpstMon:exampleBocchi}

\ExecuteMetaData[\fileRVApp]{mpstMon:exampleBocchi}

\section{A Toolkit for Monitoring Networks of Blackboxes in Practice}
\label{as:mpstMon:toolkit}

\ExecuteMetaData[\fileRVApp]{mpstMon:toolkit}

\section{Relative Types with Locations}
\label{as:mpstMon:relativeTypesWithLocs}

\ExecuteMetaData[\fileRVApp]{mpstMon:relativeTypesWithLocs}

\section{Proof of Soundness}
\label{as:mpstMon:soundnessProof}

\ExecuteMetaData[\fileRVApp]{mpstMon:proof:soundness}

\section{Proof of Transparency}
\label{as:mpstMon:transparencyProof}

\ExecuteMetaData[\fileRVApp]{mpstMon:proof:transparency}

 \mpstMonfalse

\thumbfalse

\fi 

%
%
%
%
%
%
%
%

\newcommand*{\promitem}[4]{\noindent \textbf{#1}. \emph{#2}. #3.~\mbox{#4}\medskip}

\clearpage \pagestyle{empty}

\setlength{\columnsep}{2em}
\begin{multicols}{2}
        [\subsection*{Titles in the IPA Dissertation Series since 2021}]

\promitem{D. Frumin}
         {Concurrent Separation Logics for Safety, Refinement, and
Security}
         {Faculty of Science, Mathematics and Computer Science, RU}
		 {2021-01}

\promitem{A. Bentkamp}
         {Superposition for Higher-Order Logic}
         {Faculty of Sciences, Department of Computer Science, VU}
         {2021-02}

\promitem{P. Derakhshanfar}
         {Carving Information Sources to Drive Search-based Crash Reproduction and Test Case Generation}
         {Faculty of Electrical Engineering, Mathematics, and Computer Science, TUD}
         {2021-03}

\promitem{K. Aslam}
         {Deriving Behavioral Specifications of Industrial Software Components}
         {Faculty of Mathematics and Computer Science, TU/e}
         {2021-04}

\promitem{W. Silva Torres}
         {Supporting Multi-Domain Model Management}
         {Faculty of Mathematics and Computer Science, TU/e}
         {2021-05}

\promitem{A. Fedotov}
         {Verification Techniques for xMAS}
         {Faculty of Mathematics and Computer Science, TU/e}
         {2022-01}

\promitem{M.O. Mahmoud}
         {GPU Enabled Automated Reasoning}
         {Faculty of Mathematics and Computer Science, TU/e}
         {2022-02}

\promitem{M. Safari}
         {Correct Optimized GPU Programs}
         {Faculty of Electrical Engineering, Mathematics \& Computer Science, UT}
         {2022-03}

\promitem{M. Verano Merino}
         {Engineering Language-Parametric End-User Programming Environments for DSLs}
         {Faculty of Mathematics and Computer Science, TU/e}
         {2022-04}

\promitem{G.F.C. Dupont}
         {Network Security Monitoring in Environments where Digital and Physical Safety are Critical}
         {Faculty of Mathematics and Computer Science, TU/e}
         {2022-05}
		 
\promitem{T.M. Soethout}
         {Banking on Domain Knowledge for Faster Transactions}
         {Faculty of Mathematics and Computer Science, TU/e}
         {2022-06}
		
\promitem{P. Vukmirovi\'{c}}
         {Implementation of Higher-Order Superposition}
         {Faculty of Sciences, Department of Computer Science, VU}
         {2022-07}

\promitem{J. Wagemaker}
         {Extensions of (Concurrent) Kleene Algebra}
         {Faculty of Science, Mathematics and Computer Science, RU}
		 {2022-08}
		 
\promitem{R. Janssen}
         {Refinement and Partiality for Model-Based Testing}
         {Faculty of Science, Mathematics and Computer Science, RU}
		 {2022-09}

\promitem{M. Laveaux}
         {Accelerated Verification of Concurrent Systems}
         {Faculty of Mathematics and Computer Science, TU/e}
         {2022-10}
		 
\promitem{S. Kochanthara}
         {A Changing Landscape: On Safety \& Open Source in Automated and Connected Driving}
         {Faculty of Mathematics and Computer Science, TU/e}
         {2023-01}
		 
\promitem{L.M. Ochoa Venegas}
         {Break the Code? Breaking Changes and Their Impact on Software Evolution}
         {Faculty of Mathematics and Computer Science, TU/e}
         {2023-02}

\promitem{N. Yang}
         {Logs and models in engineering complex embedded production software systems}
         {Faculty of Mathematics and Computer Science, TU/e}
         {2023-03}
		 
\promitem{J. Cao}
         {An Independent Timing Analysis for Credit-Based Shaping in Ethernet TSN}
         {Faculty of Mathematics and Computer Science, TU/e}
         {2023-04}

\promitem{K. Dokter}
         {Scheduled Protocol Programming}
         {Faculty of Mathematics and Natural Sciences, UL}
         {2023-05}

\promitem{J. Smits}
         {Strategic Language Workbench Improvements}
         {Faculty of Electrical Engineering, Mathematics, and Computer Science, TUD}
         {2023-06}

\promitem{A. Arslanagi\'{c}}
         {Minimal Structures for Program Analysis and Verification}
         {Faculty of Science and Engineering, RUG}
         {2023-07}

\promitem{M.S. Bouwman}
         {Supporting Railway Standardisation with Formal Verification}
         {Faculty of Mathematics and Computer Science, TU/e}
         {2023-08}

\promitem{S.A.M. Lathouwers}
         {Exploring Annotations for Deductive Verification}
         {Faculty of Electrical Engineering, Mathematics \& Computer Science, UT}
         {2023-09}
		 
\promitem{J.H. Stoel}
         {Solving the Bank, Lightweight Specification and Verification Techniques for Enterprise Software}
         {Faculty of Mathematics and Computer Science, TU/e}
         {2023-10}

\promitem{D.M. Groenewegen}
         {WebDSL: Linguistic Abstractions for Web Programming}
         {Faculty of Electrical Engineering, Mathematics, and Computer Science, TUD}
         {2023-11}

\promitem{D.R. do Vale}
         {On Semantical Methods for Higher-Order Complexity Analysis}
         {Faculty of Science, Mathematics and Computer Science, RU}
         {2024-01}

\promitem{M.J.G Olsthoorn}
         {More Effective Test Case Generation with Multiple Tribes of AI}
         {Faculty of Electrical Engineering, Mathematics, and Computer Science, TUD}
         {2024-02}

\promitem{B. van den Heuvel}
         {Correctly Communicating Software: Distributed, Asynchronous, and Beyond}
         {Faculty of Science and Engineering, RUG}
         {2024-03}

\end{multicols}

\end{document}